\title{
	Static Symmetry Breaking in Many-Sorted Finite Model Finding\\
	\vspace{0.5cm}
	\large David R. Cheriton School of Computer Science\\
	University of Waterloo\\
	Undergraduate Thesis
}
\author{Joseph Poremba \\[1cm] Advisor: Nancy Day}
\date{August 14, 2019}
\newtheorem{theorem}{Theorem}
\newtheorem{lemma}[theorem]{Lemma}
\newtheorem{proposition}[theorem]{Proposition}
\newtheorem{corollary}[theorem]{Corollary}
\theoremstyle{definition}
\newtheorem{definition}[theorem]{Definition}
\theoremstyle{Example}
\declaretheorem[name=Theorem,sibling=theorem]{restatable-theorem}
\newcommand{\U}{\mathcal U}
\newcommand{\F}{\mathcal F}
\newcommand{\R}{\mathcal R}
\DeclareMathOperator{\bounds}{bounds}
\DeclareMathOperator{\True}{True}
\DeclareMathOperator{\Bool}{Bool}
\DeclareMathOperator{\Bindings}{Bindings}
\DeclareMathOperator{\SolSym}{SolSym}
\DeclareMathOperator{\ConSym}{ConSym}
\DeclareMathOperator{\DomPerm}{DomPerm}
\DeclareMathOperator{\DomSym}{DomSym}
\DeclareMathOperator{\ConDomSym}{ConDomSym}
\DeclareMathOperator{\FCSP}{\F CSP}
\DeclareMathOperator{\Orb}{Orb}
\DeclareMathOperator{\id}{id}
\newcommand{\restrict}[2]{\left.{#1}\right|_{#2}}
\DeclarePairedDelimiter{\set}{ \{ }{ \} }
\begin{document}
	
\maketitle

\begin{abstract}
	Symmetry in finite model finding problems of many-sorted first-order logic (MSFOL) can be exploited to reduce the number of interpretations considered during search, thereby improving solver performance.
	In this thesis, we situate symmetry of many-sorted finite model finding (MSFMF) problems in a general framework used for constraint satisfaction problems (CSP).
	We survey and classify existing approaches to symmetry for MSFOL as used in tools such as Paradox.
	We provide new insight into how sorts affect the existence of symmetry and how sort inference can be viewed as a symmetry detection mechanism.
	Finally, we present two new symmetry breaking schemes for MSFOL that are implemented at the MSFOL level and discuss when schemes can be combined. 
	We prove the correctness of our new methods.
\end{abstract}

\section{Introduction}

The satisfiability of first-order logic (FOL) is in general undecidable, but when limited to domains of a finite and small size the search becomes feasible.
According to Jackson's small scope hypothesis \cite{jackson_software_2012}, most flaws in formal models can be exposed by counterexamples of small size, so if all possible interpretations up to a sufficiently high size are explored and no counterexample is found, then we can have a high degree of confidence in the correctness of the model.
Therefore, the process of \emph{finite model finding} (FMF) is an important tool for formal methods.
This importance is evident with the popularity of Jackson's Alloy Analyzer \cite{jackson_software_2012} and its underlying model finder Kodkod \cite{grumberg_kodkod:_2007}, which have found applications in security, e-commerce, feature modeling, program verification, network protocols, and many more fields.

Automated finite model finders generally take one of two main approaches.
In the SEM-style, the solver directly traverses the search space to find a solution, usually through a backtracking algorithm with constraint propagation techniques.
Mace4 \cite{bonacina_mace4_2013}, SEM \cite{goos_system_1996}, and FALCON \cite{zhang_constructing_1996} are examples of SEM-style solvers.
There is also the MACE-style, in which a solver does not directly search for a solution but instead encodes the problem in another logic and invokes an external solver.
Kodkod \cite{grumberg_kodkod:_2007} and Paradox \cite{claessen_new_2003} are MACE-style solvers that reduce FMF to SAT, while Fortress \cite{fitzgerald_finite_2016} is a MACE-style solver that instead reduces the problem to the logic of equality with uninterpreted functions (EUF) and invokes an SMT solver.

While decidable, FMF is NP-complete, and so powerful techniques are needed to make the search feasible.
The exploitation of \emph{symmetry} is one such technique.
Consider the problem of finding a Latin Square.
A Latin Square is a $n \times n$ grid of cells.
Each cell contains one of the numbers $1, \dotsc, n$, and no row or column can contain the same number twice.
Given a Latin Square, the numbers $1, \dotsc, n$ can be permuted within it and another Latin Square is obtained.
While technically a different solution, these two Latin Squares appear to be \emph{symmetric} or \emph{isomorphic}, equivalent in structure but merely having different labels.
Similarly, if given a non-solution to the Latin Squares problem, permuting the numbers $1, \dotsc, n$ within the non-solution will also yield a non-solution.

A na\"ive search over all possible interpretations will encounter many isomorphic interpretations.
However, only a single interpretation from an isomorphism class needs to be examined to accept the entire class as solutions or reject the entire class as non-solutions.
Finite model finders take advantage of these symmetries to perform \emph{symmetry breaking} and reduce the search space.
Such symmetry breaking can be done \emph{dynamically} during a search, or \emph{statically} by adding constraints that eliminate redundant interpretations before attempting a search.

The goal of this thesis is to investigate symmetry and static symmetry breaking in finite model finding, in particular with regards to many-sorted finite model finding (MSFMF).
To our knowledge, little work has been done to explore how sorts impact the existence of symmetries and the unique advantages a many-sorted setting gives to static symmetry breaking.
The first half of this thesis explores the surprisingly varied definitions of ``symmetry" throughout finite model finding literature as well as literature for constraint satisfaction problems (CSPs) in order to establish how these disparate definitions are related and explore what new insights can be found about the relationship between symmetries and sorts.
The second half of this thesis investigates static symmetry breaking techniques for MSFMF.
We compare existing symmetry breaking schemes, considering justification for their correctness, and propose new symmetry breaking schemes with rigorous proofs of soundness.
The role of sorts in these schemes is given heavy emphasis.
We also explore conditions for when symmetry breaking schemes can be combined.

The contributions of this thesis are as follows.
\begin{enumerate}
	\item A review of symmetry and static symmetry breaking techniques in constraint satisfaction problems and finite model finding. We emphasize the connection to group theory and provide rigorous proofs of correctness for existing finite model finding static symmetry breaking techniques where previously appeals to intuition were relied upon.
	\item Unification of the various definitions of symmetry in finite model-finding and relating them to symmetry in constraint satisfaction problems. We provide a transformation to embed finite model-finding into the CSP symmetry framework proposed by Cohen et al. \cite{cohen_symmetry_2006} while preserving the notion of ``value symmetry" between both problem types, and see how the notions of symmetry in these two worlds relate.
	\item New insight into the effects of a sort system on symmetries, using the above transformation and framework. We demonstrate that the use of many-sorted system does not affect the existence of symmetries, but rather a sort system proves the existence of symmetries and allows them to be more easily identified. Thus, we establish that sort inference is a symmetry detection mechanism.
	\item A new static symmetry breaking technique in the presence of ``domain-range distinct" (DRD) functions. Such functions only exist in a many-sorted system. Rigorous proofs of correctness are provided for DRD functions of arbitrary arity.
	\item A new symmetry breaking strategy for unary predicates. A proof of correctness is provided, and it is discussed why this scheme does not easily apply to higher arity predicates.
	\item Conditions for when symmetry breaking techniques can be combined soundly, proved using a framework of extended FMF. To our knowledge this has not been rigorously done in existing FMF literature.
\end{enumerate}

\subsection{Outline of Thesis}
Section \ref{section-background} provides necessary background information in finite model finding, constraint satisfaction problems, and group theory.
Even if the reader is very familiar with these concepts, we recommend at least reading the definitions to familiarize oneself with our notation, especially for finite model finding.

Sections \ref{section-csp} through \ref{section-sorts} comprise the first half of this thesis that is more theoretical and concerned with formalizing the concept of symmetry and understanding how it arises in CSP and FMF problems.
Section \ref{section-csp} looks at symmetry in CSP and presents a framework developed by Cohen et al. \cite{cohen_symmetry_2006} to formalize symmetry in CSP.
Section \ref{section-fmf} explores symmetry in FMF. In this section we look at existing definitions of symmetry in FMF and use the Value Relabeling Theorem to present a framework to unify these definitions. We also define extended FMF problems and how this affects symmetry and then relate symmetry in FMF and CSP.
Section \ref{section-sorts} investigates sorts, sort inference, and their impact on the existence of symmetries.

The second half of this thesis exploring static symmetry breaking consists of Sections \ref{section-static-sym} through \ref{section-combining}.
Section \ref{section-static-sym} discusses static symmetry breaking and the prevalent techniques used in single-sorted FMF.
Section \ref{section-functions} introduces the concept of ``domain-range distinct" functions and presents a new static symmetry breaking technique that can be applied in their presence.
Section \ref{section-predicates} also introduces a new symmetry breaking scheme, this time for unary predicates.
Finally, Section \ref{section-combining} explores under what conditions the aforementioned schemes can be combined.

\section{Background}
\label{section-background}

We assume the reader has a general familiarity with first-order logic (FOL), in particular the syntax and semantics of MSFOL.
To understand symmetry we must also pull from constraint satisfaction problems (CSPs) and group theory, though we do not assume the reader has prior exposure to these fields.
Unless otherwise noted, all functions are total.

\subsection{Finite Model Finding}

We are concerned with a treatment of \emph{many-sorted} first-order logic (MSFOL), where the domain of discourse is broken into ``sorts" or ``types".
Our definitions are based loosely on Vakili and Day's work \cite{fitzgerald_finite_2016}, though we diverge in our treatments of relations and domains in order to make later definitions and theorems easier to state.
The formulations are equivalent.

\begin{definition}[Signature]
	A \emph{signature} is a triple $\Sigma = (\Theta, \mathscr F, \mathscr R)$ where
	\begin{itemize}
		\item $\Theta = \Theta(\Sigma)$ is a finite set of symbols called \emph{sorts},
		\item $\mathscr F = \mathscr F (\Sigma)$ is a finite set of \emph{functional symbols} of the form $f: A_1 \times \dotsb \times A_n \to B$, where $A_1, \dotsc, A_n, B \in \Theta$, and
		\item $\mathscr R = \mathscr R (\Sigma)$ is a finite set of \emph{relational symbols} (or \emph{predicate symbols}) of the form $R: A_1 \times \dotsb \times A_n \to \Bool$, where $A_1, \dotsc, A_n \in \Theta$. 
	\end{itemize}
	For functional symbol $f: A_1 \times \dotsb \times A_n \to B$ or relational symbol $R: A_1 \times \dotsb \times A_n \to \Bool$, $n$ is called the \emph{arity} of the symbol.
	\emph{Constants} are nullary function symbols; that is, constants are functional symbols with no argument sorts and only an output sort.
	A constant symbol $c$ with output sort $A$ will be written $c: A$.
\end{definition}

There are some important things to note about our formulation.
First we emphasize that functional and relational symbols are not themselves actual functions and relations, they are merely symbols that act as placeholders for them.
Additionally, while some treatments of MSFOL consider $\Bool$ as a sort with some special rules, our formulation does not consider $\Bool$ to be a sort. 
We define \emph{$\Sigma$-terms} and \emph{$\Sigma$-formulas} (usually called simply terms and formulas) in the usual way for MSFOL.
We allow the use of the equality predicate $=$ without it needing to be explicitly in the signature.

Now that we can build formulas over a signature, we need a universe in which to interpret them.
We are only concerned here with \emph{finite} domains.

\begin{definition}[Domain Assignment]
	A \emph{domain assignment} for a signature $\Sigma$ is a function $\U$ mapping each sort $\theta \in \Theta(\Sigma)$ to a non-empty, finite set of values $\U(\theta)$, called the \emph{domain} or \emph{values} of $\theta$.
\end{definition}

While it is common to think about distinct sorts being assigned disjoint sets, this need not be the case.
The sort system prevents there from being any undesired interactions between distinct sorts regardless of whether their underlying domains are disjoint.
We are now ready to define a finite model finding problem.

\begin{definition}[Many-Sorted Finite Model Finding Problem]
	A \emph{many-sorted finite model finding} (MSFMF) problem is a tuple $(\Sigma, \Gamma, \U)$ where
	\begin{itemize}
		\item $\Sigma$ is a signature,
		\item $\Gamma$ is a finite set of MSFOL formulas over $\Sigma$, and
		\item $\U$ is a domain assignment for $\Sigma$.
	\end{itemize}
	Together the elements of the sets assigned by $\U$ are referred to as the \emph{values}.
\end{definition}

\begin{definition}[Interpretation]
	An \emph{interpretation} of a MSFMF problem $P = (\Sigma, \Gamma, \U)$ is a mapping $I$ that
	\begin{itemize}
		\item assigns each functional symbol $f: A_1 \times \dotsb \times A_n \to B$ a function $I(f): \U(A_1) \times \dotsb \U(A_n) \to \U(B)$, and
		\item assigns each relational symbol $R: A_1 \times \dotsb \times A_n \to \Bool$ a relation $I(R) \subseteq \U(A_1) \times \U(A_n)$.
	\end{itemize} 
	Note that $I(f)$ and $I(R)$ are actual functions and relations, not just symbols.
\end{definition}

\begin{definition}[The ``Satisfies" Relation]
	 Let $P = (\Sigma, \Gamma, \U)$ be a MSFMF problem and $I$ an interpretation of $P$.
	 Given a formula $\Phi \in \Gamma$, we say $I$ \emph{satisfies} $\Phi$, written $I \models \Phi$, if $\Phi$ is true when interpreted under $I$ using standard first-order logic semantics.
	 We say $I$ \emph{satisfies} $\Gamma$, written $I \models \Gamma$, if it satisfies every formula in $\Gamma$.
	 If $I$ satisfies $\Gamma$, it is said that $I$ is a \emph{model} or \emph{solution} of $P$.
	 If $P$ has a model, it is \emph{satisfiable}, otherwise it is \emph{unsatisfiable}.
\end{definition}

The goal of finite model finding is to determine whether a given MSFMF problem is satisfiable.
We will discuss later how only the size of the sets in the domain assignment affects the satisfiability of the MSFMF problem, not the specific sets themselves.
That is, if $|\U(\theta)| = |\U'(\theta)|$ for all $\theta \in \Theta(\Sigma)$, then $(\Sigma, \Gamma, \U)$ is satisfiable if and only if $(\Sigma, \Gamma, \U')$ is satisfiable.
Thus in many equivalent formulations of the finite model finding problem there is no domain assignment but simply a number $\bounds(\theta)$ for each sort $\theta$ that specifies the size of the domain for $\theta$.
However, we opt for this more general formulation with domain assignments because it facilitates later theorems and discussions.

\subsection{Constraint Satisfaction Problems}

Constraint programming is a general paradigm for finding solutions to problems involving decision variables subject to a set of constraints.
We base our definition of constraint satisfaction problems (CSPs) on those found in Freuder and Mackworth's work \cite{freuder_constraint_2006}, though we concern ourselves with only finite problems similar to Cohen et. al \cite{cohen_symmetry_2006}.

\begin{definition}[Constraint Satisfaction Problem]
	A \emph{constraint satisfaction problem} (CSP) is a triple $(X, D, C)$ where
	\begin{itemize}
		\item $X$ is a finite set of \emph{variables} $X = \set{x_1, \dotsc, x_n}$,
		\item $D$ is a mapping that maps each variable $x_i \in X$ to a nonempty, finite set $D(x_i)$, called its \emph{domain}, and
		\item $C$ is a finite set of \emph{constraints}.
	\end{itemize}
	A constraint is a pair $C_j = (S_j, R_j)$, where $S_j$ is a sequence of variables $S_j = (y_1, \dotsc, y_k)$, $y_i \in X$, called the \emph{scope}\footnote{
		The word ``scope" is overloaded.
		In constraint programming, it refers to the variables affected by a constraint.
		In finite model finding, it sometimes refers to the size of the set of values in a domain assignment, what we earlier called $\bounds(\theta)$ for a sort $\theta$.
		From this point on, when we say ``scope", we refer to the former and never the latter.}
	of the constraint, and $R_j$ is a relation $R_j \subseteq D(y_1) \times \dotsb \dotsc D(y_k)$ that specifies the allowed tuples for variables in the scope.
\end{definition}

Following Cohen et al. \cite{cohen_symmetry_2006}, we allow a constraint to be defined \emph{extensionally} through an explicit list of allowed tuples, as done above, or \emph{intentionally} by instead giving an expression from which the allowed tuples are determined.
For example, consider a SAT problem which has variables $a$ and $b$.
The constraint that $a$ and $b$ must be the same can be stated extensionally by a constraint $C = (S, J)$ where $S = (a, b)$ and $R = \set{ (T, T), (F, F) }$, but is more commonly written intensionally with the expression $a \iff b$.

We now define how to form a solution to a CSP.

\begin{definition}[Bindings, Assignments]
	For a CSP, $P = (X, D, C)$, a \emph{variable-value binding}, or simply a \emph{binding}, is a tuple $(x, v)$ such that $x \in X$ and $v \in D(x)$.
	We denote  the set of all variable-value bindings of $P$ by $\Bindings(P)$.
	That is, $\Bindings(P) = \set{ (x, v) : x \in X, v \in D(x)}$.
	A set of bindings is a \emph{partial assignment} if it contains at most one binding of the form $(x, v)$ for each $x \in X$.
	A partial assignment is a \emph{complete assignment} if it contains exactly one binding of the form $(x, v)$ for each $x \in X$.
	We can alternatively view a complete assignment as a function mapping variables to values, and so for complete assignment $A$ we use $A(x)$ to denote the value $v$ such that $(x, v) \in A$.
\end{definition}

\begin{definition}[The ``Satisfies" Relation, Solution]
	Let $P = (X, D, C)$ be a CSP.
	A complete assignment $A$ \emph{satisfies} a constraint $C_j$ if $(A(y_1), \dotsc, A(y_k)) \in R_j$, where $(y_1, \dotsc, y_k)$ is the scope of $C_j$ and $R_j$ is the relation of $C_j$.
	A \emph{solution} to $P$ is a complete assignment $A$ that satisfies all of the constraints of $P$.
\end{definition}

For example, consider a CSP $P = (X, D, C)$ where $X = \set{x, y, z}$, $D(x) = \set{1, 2}$, $D(y) = \set{3, 4}$, $D(z) = \set{5, 6}$, and $C$ consists of a single constraint $C_1$ with scope $S_1 = (x, z)$ and relation $R_1 = \set{(1, 6), (2, 5)}$.
Consider the complete assignments $A_1 = \set{(x, 1), (y, 3), (z, 5)}$ and $A_2 = \set{(x, 2), (y, 3), (z, 5)}$.
$A_1$ is not a solution since $(A_1(x), A_1(z)) = (1, 5) \notin R_1$, but $A_2$ is a solution because $(A_2(x), A_2(z)) = (2, 5) \in R_1$.

\subsection{Group Theory}

The intuitive, informal notion of a symmetry is that it is some transformation, like a rotation or reflection, that preserves an important property of an object.
Symmetries can be composed, applied one after another, and still maintain this property.
Therefore a symmetry can be created by composing symmetries together.
Symmetries can also be reversed while still maintaining the desired property of the object.
Hence, taking the inverse of a symmetry also yields a symmetry.

Group theory is the rigorous mathematical formalism used to explore symmetry.
The abstract definition of a group captures the composition and inversion behaviours described above, as well as other important properties.
We take our definitions in this section from Dummit and Foote \cite{dummit_abstract_2014}.

\begin{definition}[Group]
	A \emph{group} is a set $G$ equipped with a binary function $\circ$ from $G \times G$ to $G$ (written infix, called the \emph{group operation}) such that
	\begin{itemize}
		\item the $\circ$ operator is \emph{associative}; that is, $a \circ (b \circ c) = (a \circ b) \circ c$ for all $a, b, c \in G$,
		\item there exists an \emph{identity} element $e \in G$ such that $a \circ e = e \circ a = e$ for all $a \in G$,
		\item each element $a \in G$ has an \emph{inverse} $a^{-1}$ such that $a \circ a^{-1} = a^{-1} \circ a = e$.
	\end{itemize}
	For convenience, the $\circ$ function is often omitted in notation and $a \circ b$ is instead written $ab$.
\end{definition}

The most common example of a group is the \emph{symmetric group} $S_A$, which is the set of permutations of a set $A$.
The group operation is function composition, which is associative.
The identity element is the identity function, and the group inverse of a permutation $\sigma$ is just its standard inverse as a function.
Another common example is the set of non-zero real numbers $\mathbb R^\times$.
The group operation is multiplication, the identity element is $1$, and the group inverse of a non-zero number $a$ is its multiplicative inverse $\frac 1 a$.

A subset of a group might itself be a group under the same operation, in which case it is called a \emph{subgroup}.

\begin{definition}[Subgroup]
	Let $G$ be a group.
	A subset $H$ of $G$ is a \emph{subgroup} of $G$, written $H \le G$, if
	\begin{itemize}
		\item $H$ is non-empty,
		\item $H$ is closed under the group operation, and
		\item $H$ is closed under taking inverses.
	\end{itemize}
\end{definition}
Consider for example an undirected graph $G = (V, E)$.
An \emph{automorphism} of $G$ is a permutation $\pi$ of its vertices such that a pair $(u, v)$ of its vertices forms an edge if and only if $(\pi(u), \pi(v))$ forms an edge.
The set of automorphisms of $G$ is a subgroup of the symmetric group $S_V$ of its vertex set.

For our purposes we are interested in how the group elements act on objects to transform them into other objects, which is formalized by the notion of a \emph{group action}.

\begin{definition}[Group Action]
	A \emph{group action} of a group $G$ on a set $X$ is a binary function $\bullet$ from $G \times X$ to $X$ (written infix) such that
	\begin{enumerate}
		\item $g_1 \bullet (g_2 \bullet x) = (g_1 g_2) \bullet x$ for all $g_1, g_2 \in G$ and $x \in X$, and
		\item $e \bullet x = x$ for all $x \in X$.
	\end{enumerate}
	It is then said that \emph{$G$ acts on $X$} by this group action.
\end{definition}

As an example, consider the permutation group $S_A$ of the set $A = \set{1, \dotsc, n}$.
$S_A$ acts on $A$ through function application.
That is, defining $\sigma \bullet a = \sigma(a)$ for each $\sigma \in S_A$ and $a \in A$ forms a group action.
Any group $G$ acts on any set $X$ by the \emph{trivial action}, defined by $g \bullet x = x$ for all $g \in G$ and $x \in X$.
A more complicated example of a group action is \emph{conjugation}.
Any group $G$ acts on itself by $g \bullet h = g h g^{-1}$ for all $g, h \in G$.

One way, not mentioned in Dummit and Foote, that we find helpful in visualizing group actions for those unfamiliar with them is to view the action as a transition system or edge-labeled directed graph.
The nodes of the transition system are the elements of $X$.
For each $x \in X$ and $g \in G$ there is an arc from $x$ to $g \bullet x$.
The first property of a group action says that to start from $x$ and follow the $g_2$ arc and then the $g_1$ arc, the result is the same as starting from $x$ and instead taking the single $g_1 g_2$ arc.
If the reversed order seems confusing, just note that this is exactly how function composition works.
The first property also implies that starting from $x$ and following the $g$ arc, then the $g^{-1}$ arc, results in returning to $x$.

Going back to Dummit and Foote, it is not hard to show that each individual group element can be viewed as a permutation of the set $X$ under the group action (and hence, group actions fundamentally link groups to permutations and symmetric groups).
This is called the \emph{permutation representation} of the group action.

\begin{proposition}[Permutation Representation of a Group Action]
	\label{prop-permutation-representation}
	Let $G$ be a group acting on a set $X$.
	For every $g \in G$, the function $\sigma_g: X \to X$ defined by $\sigma_g(x) = g \bullet x$ is a permutation of $X$. 
\end{proposition}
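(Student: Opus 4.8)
The plan is to exhibit an explicit two-sided inverse for $\sigma_g$, from which bijectivity — and hence the claim that $\sigma_g$ is a permutation of $X$ — follows immediately. The natural candidate is $\sigma_{g^{-1}}$, the function associated to the group inverse $g^{-1}$.

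First I would fix an arbitrary $g \in G$ and compute the composition $\sigma_{g^{-1}} \circ \sigma_g$. For any $x \in X$ we have $\sigma_{g^{-1}}(\sigma_g(x)) = g^{-1} \bullet (g \bullet x)$, and by the first axiom of a group action this equals $(g^{-1} g) \bullet x = e \bullet x$, which by the second axiom is just $x$. Hence $\sigma_{g^{-1}} \circ \sigma_g = \id_X$. The symmetric computation, using $g g^{-1} = e$ in place of $g^{-1} g = e$, gives $\sigma_g \circ \sigma_{g^{-1}} = \id_X$.

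Having a function $\sigma_{g^{-1}}$ that is both a left and a right inverse of $\sigma_g$, I would then invoke the standard fact that a map between sets admitting a two-sided inverse is a bijection (injectivity from the left inverse, surjectivity from the right inverse). Since a permutation of $X$ is by definition a bijection from $X$ to $X$, this completes the argument.

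There is no real obstacle here; the only point requiring a moment's care is the order of composition in the first group-action axiom, so that one applies it as $g^{-1} \bullet (g \bullet x) = (g^{-1} g) \bullet x$ rather than in the reversed order. Everything else is a direct unwinding of the two axioms.
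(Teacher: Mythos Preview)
Your proof is correct and entirely standard: exhibiting $\sigma_{g^{-1}}$ as a two-sided inverse via the two group-action axioms is exactly the right idea, and you have handled the composition order carefully. The paper itself does not supply a proof of this proposition --- it is stated as a background fact taken from Dummit and Foote --- so there is nothing to compare against; your argument would serve perfectly well as the omitted proof.
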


Next want to consider all the objects that we can get to from an $x \in X$ by acting on it with group elements.

\begin{definition}
	Let $G$ be a group acting on a set $X$.
	For each $x \in X$, define the \emph{orbit} of $x$ by $\Orb_x = \set{ g \bullet x : g \in G}$.
\end{definition}

Since traveling along an arc can be reversed by traveling backwards through the inverse group element, it is immediate that $y \in \Orb_x$ if and only if $x \in \Orb_y$.
Being related in this way forms an equivalence relation and so partitions $X$ into equivalence classes by the orbits.

\begin{proposition}
	Let $G$ be a group acting on a set $X$.
	The relation $\sim$ on $X$ defined by
	\[ x \sim y \text{ if and only if } x \in \Orb_y \]
	is an equivalence relation.
	The equivalence classes are the orbits, and hence the orbits partition $X$.
\end{proposition}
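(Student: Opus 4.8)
The plan is to verify the three defining properties of an equivalence relation --- reflexivity, symmetry, and transitivity --- directly from the two group action axioms, and then to read off what the equivalence classes are. For reflexivity, I would invoke the second action axiom, $e \bullet x = x$, which exhibits $x$ as $e \bullet x$ and hence shows $x \in \Orb_x$, i.e.\ $x \sim x$, for every $x \in X$.

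For symmetry, suppose $x \sim y$, so that $x = g \bullet y$ for some $g \in G$. Acting by $g^{-1}$ and applying the first action axiom together with the identity axiom gives $g^{-1} \bullet x = g^{-1} \bullet (g \bullet y) = (g^{-1} g) \bullet y = e \bullet y = y$, so $y \in \Orb_x$ and thus $y \sim x$. This is the rigorous counterpart of the ``an arc can be reversed by the inverse group element'' remark preceding the proposition. For transitivity, suppose $x \sim y$ and $y \sim z$, so $x = g \bullet y$ and $y = h \bullet z$ for some $g, h \in G$; then by the first action axiom $x = g \bullet (h \bullet z) = (gh) \bullet z$, and since $gh \in G$ by closure of the group operation, $x \in \Orb_z$, i.e.\ $x \sim z$.

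Finally, for the equivalence classes, the class of $x$ under $\sim$ is $\set{ y \in X : y \sim x } = \set{ y \in X : y \in \Orb_x } = \Orb_x$, immediately from the definition of $\sim$; and since the equivalence classes of any equivalence relation partition the underlying set, the orbits partition $X$. I do not anticipate a genuine obstacle here --- the argument is a short unwinding of the axioms. The one point that deserves care rather than appeal to the transition-system picture is the symmetry step, where one must actually produce the witness $g^{-1}$ and check, via the action axioms, that it sends $x$ to $y$.
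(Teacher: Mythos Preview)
Your proof is correct and is the standard verification of reflexivity, symmetry, and transitivity from the group-action axioms. The paper itself does not supply a formal proof of this proposition; it states the result as background from Dummit and Foote, with only the informal remark preceding the proposition (``traveling along an arc can be reversed by traveling backwards through the inverse group element'') serving as justification for the symmetry step. Your argument is precisely the rigorous unpacking of that remark, together with the remaining checks the paper omits, so there is nothing to compare against and nothing to correct.
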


It is also worth noting that if a group $G$ acts on a set $X$, any subgroup $H \le G$ acts on $X$ by the same group action.
This proposition is immediate from the definitions of subgroup and group action.

\begin{proposition}[Subgroups and Group Actions]
	If $G$ is a group acting on a set $X$, then any subgroup $H \le G$ acts on $X$ by the same group action.
\end{proposition}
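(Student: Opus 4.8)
The plan is to take the given group action $\bullet : G \times X \to X$ and simply restrict its domain to $H \times X$, then verify that this restriction $\bullet|_{H \times X}$ satisfies the two defining axioms of a group action. Since $H \subseteq G$, for any $h \in H$ and $x \in X$ we already have $h \bullet x \in X$, so the restriction is a well-defined function from $H \times X$ to $X$; nothing needs to be checked here beyond observing that the codomain is still $X$.

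For the first axiom, I would take arbitrary $h_1, h_2 \in H$ and $x \in X$. Because $H \le G$ it is closed under the group operation, so $h_1 h_2 \in H$, and the product $h_1 h_2$ computed in $H$ agrees with the product computed in $G$. The identity $h_1 \bullet (h_2 \bullet x) = (h_1 h_2) \bullet x$ then follows immediately from the corresponding identity for the action of $G$, which holds for all elements of $G$ and in particular for $h_1, h_2 \in H \subseteq G$. For the second axiom, the one point that requires a brief argument is that the identity element $e$ of $G$ actually lies in $H$, so that $e \bullet x = x$ is an instance of the restricted action. This follows from the subgroup axioms: $H$ is non-empty, so pick some $h \in H$; then $h^{-1} \in H$ since $H$ is closed under inverses, and hence $e = h h^{-1} \in H$ since $H$ is closed under the group operation. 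Once $e \in H$ is established, $e \bullet x = x$ for all $x \in X$ is inherited directly from the action of $G$.

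In short, there is no genuine obstacle here: every condition of a group action is inherited from the ambient action of $G$, and the only observation that is not completely immediate — that $e \in H$ — is a standard consequence of the definition of a subgroup. I would present the proof in exactly the three short steps above: well-definedness of the restriction, the compatibility axiom, and the identity axiom (with the sub-step $e \in H$).
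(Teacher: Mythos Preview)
Your proof is correct. The paper itself does not give a proof of this proposition at all; it simply remarks that ``this proposition is immediate from the definitions of subgroup and group action,'' which is precisely what your argument spells out in detail (including the small but necessary observation that $e \in H$).
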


\section{The CSP Symmetry Framework}
\label{section-csp}

Before symmetry can actually be exploited, we must clearly define what exactly we mean by symmetry.
However, as has been noted in the works of Gent et al. \cite{gent_symmetry_2006} and Cohen et al. \cite{cohen_symmetry_2006}, definitions of symmetry across CSP research have been surprisingly disparate.
We refer the reader to those sources for a more in-depth survey.
Here we briefly summarize their findings.
The definitions of symmetry generally tend to disagree along one of the following lines:
\begin{enumerate}
	\item whether symmetries are defined as acting on variables, values, or variable-value bindings, and
	\item whether symmetries are defined as preserving the set of \emph{solutions} or the set of \emph{constraints}.
\end{enumerate}

To understand these different notions of symmetry, consider two examples of CSP problems: SAT and Sudoku.
In a SAT problem, two variables might be interchangeable.
For example if the constraints are $a \implies b$ and $c \implies b$, $a$ and $c$ would be interchangeable and we would consider a permutation on the variables that swaps $a$ and $c$ to be a symmetry.
However, the values $T$ and $F$ are not in general interchangeable, so permuting values is not considered in the definition of symmetry in the context of SAT.
Contrast this with Sudoku.
In Sudoku, the variables are the grid cells and the values are $\set{1, \dotsc, 9}$.
There is nothing special about the specific choice of values assigned to the cell variables.
For example, relabeling all $1$s in a solution with $9$s and vice versa always generates a solution and so we would consider a permutation on the values that swaps $1$ and $9$ to be a symmetry.
Thus, permuting values would be considered important in the definition of symmetry for Sudoku problems, in contrast with SAT.
More general definitions would consider acting on variable-value bindings.

After having decided whether symmetries are permutations of variables, values, or bindings, the question arises as to what property causes such a permutation to be a symmetry.
The more general notion is that a symmetry preserves the set of solutions.
That is, when it is applied to a solution it always yields another solution, and when it is applied to a non-solution it always yields a non-solution.
The common criticism of this ``solution symmetry" definition is that it provides no guidance on how to practically find such symmetries; finding the full set of symmetries might require finding all of the solutions.
``Constraint symmetries" on the other hand are defined as preserving the set of constraints.
In the SAT problem above, swapping $a$ and $c$ leaves the exact same set of constraints, so such a permutation would be considered a symmetry.
This disagreement in definitions is essentially over whether the fact that symmetries preserve solutions is their defining property or a consequence of preserving constraints.

Cohen et al. \cite{cohen_symmetry_2006} provide a framework to unify these various definitions.
This section gives a presentation of Cohen et al.'s framework, with some extensions to make it slightly more general to facilitate discussion of sorts later.
This presentation is mostly not our original work; the extensions to Cohen et al.'s work here are quite natural.
For example, we do not restrict the framework to a setting where the domains of each variable are all the same as they did, but it is not difficult to remove this restriction.
However, we do endeavour to provide greater intuition and explanations, including some more precise definitions and additional proofs.
The explanations are largely original.
In particular, Cohen et al. do not deeply elaborate on the underlying group theory, so we interleave more discussions of the group theory between the definitions.

\subsection{Solution Symmetries}

The approach taken by Cohen et al. \cite{cohen_symmetry_2006} is to define a very general group (in the group theory sense) of symmetries, and then view other kinds of symmetries as special cases (in particular, subgroups).
So the definition of symmetry starts more abstract, and can be refined as needed.
Their most abstract definition of symmetry acts on variable-value bindings and preserves the set of solutions.
That is, a symmetry is a permutation on $\Bindings(P)$, and it must map solutions to solutions.
Variable and value symmetries are easily viewed as special cases.
We denote by $S_{\Bindings(P)}$ the set of all permutations $\sigma: \Bindings(P) \to \Bindings(P)$, which forms a group.
The objects we are interested in this group acting on are \emph{sets} of bindings (for example, partial and complete assignments).
That is, the space we wish to understand how symmetries act on is $2^{\Bindings(P)}$, 
We must first be clear how a permutation $\sigma: \Bindings(P) \to \Bindings(P)$ acts on a set of bindings.

\begin{definition}[Action of a Permutation on a Set of Bindings]
	For a CSP instance $P$, define the action of a permutation $\sigma: \Bindings(P) \to \Bindings(P)$ on a set of bindings by lifting $\sigma$ pointwise.
	Precisely, we define a binary operator $\bullet$ from $S_{\Bindings(P)} \times 2^{\Bindings(P)}$ to $2^{\Bindings(P)}$ by
	\[ \sigma \bullet \set{ (x_1, v_1), (x_2, v_2), \dotsc, (x_k, v_k) } = \set{ \sigma(x_1, v_1), \sigma(x_2, v_2), \dotsc, \sigma(x_k, v_k)} . \]
\end{definition}

\begin{theorem}
	For a CSP instance $P$, the action of $S_{\Bindings(P)}$ on the set $2^{\Bindings(P)}$ is a group action.
\end{theorem}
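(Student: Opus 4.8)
The plan is to verify directly the two axioms in the definition of a group action, taking $G = S_{\Bindings(P)}$, $X = 2^{\Bindings(P)}$, and the group operation to be function composition. Before checking the axioms I would first observe that $\bullet$ is well-defined: for any permutation $\sigma$ of $\Bindings(P)$ and any set of bindings $B \in 2^{\Bindings(P)}$, every $\sigma(b)$ with $b \in B$ again lies in $\Bindings(P)$, so $\sigma \bullet B = \set{\sigma(b) : b \in B} \subseteq \Bindings(P)$, i.e. $\sigma \bullet B \in 2^{\Bindings(P)}$. (Since $\sigma$ is a bijection, $\sigma \bullet B$ in fact has the same cardinality as $B$, though this will not be needed.)

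For the identity axiom, the identity element of $S_{\Bindings(P)}$ is the identity function $\id$ on $\Bindings(P)$, and the pointwise-lifting definition immediately gives $\id \bullet B = \set{\id(b) : b \in B} = B$ for every $B \in 2^{\Bindings(P)}$.

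For the compatibility axiom, let $\sigma_1, \sigma_2 \in S_{\Bindings(P)}$ and $B \in 2^{\Bindings(P)}$. Unwinding the definition of $\bullet$ twice and using that the group operation is composition,
\[ \sigma_1 \bullet (\sigma_2 \bullet B) = \sigma_1 \bullet \set{\sigma_2(b) : b \in B} = \set{\sigma_1(\sigma_2(b)) : b \in B} = \set{(\sigma_1 \sigma_2)(b) : b \in B} = (\sigma_1 \sigma_2) \bullet B . \]
Together with the identity case this establishes both axioms.

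There is essentially no obstacle here; the only point requiring any care is getting the order of composition right in the compatibility axiom, which matches ordinary function composition exactly as anticipated by the transition-system visualization in the group theory background. I would also prefer to phrase the computation with set-builder notation $\set{\sigma(b) : b \in B}$ rather than an indexed list $\set{\sigma(b_1), \dotsc, \sigma(b_k)}$, since $B$ is genuinely a set and this avoids any need to reason about whether the listed elements are distinct or how repetitions collapse under the action.
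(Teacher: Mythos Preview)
Your proof is correct and follows essentially the same approach as the paper: both verify the two group-action axioms directly by unfolding the pointwise-lifting definition. The only differences are cosmetic---you add a brief well-definedness remark and use set-builder notation rather than an enumerated list, which is arguably cleaner but not substantively different.
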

\begin{proof}
	Clearly the identity function $\sigma_{\id} \in S_{\Bindings(P)}$ acts on each set $Z \subseteq \Bindings(P)$ by sending $Z$ to itself, so the second requirement of group actions is satisfied.
	
	Now let $\sigma, \pi: \Bindings(P) \to \Bindings(P)$ and $Z \subseteq \Bindings(P)$.
	Write $Z = \set{(x_1, v_1), \dotsc, (x_k, v_k)}$ for some $k \ge 0$.
	We see that
	\[
		\sigma \bullet (\pi \bullet Z)
		= \sigma \bullet \set{ \pi(x_1, v_1), \dotsc, \pi(x_k, v_k) }
		= \set{ \sigma(\pi(x_1, v_1)), \dotsc, \sigma(\pi(x_k, v_k)) }
		= (\sigma \circ \pi) \bullet Z,
	\]
	and hence the first requirement of group actions is satisfied.
\end{proof}

Now that we have defined how permutations of $\Bindings(P)$ act, we define what makes such a permutation a symmetry.

\begin{definition}[Solution Symmetry]
	A \emph{solution symmetry} of a CSP instance $P$ is a permutation $\sigma$ of $\Bindings(P)$ that map solutions of $P$ to solutions of $P$.
	That is, if $Z \subseteq \Bindings(P)$ is a solution of $P$, then $\sigma \bullet Z$ is also a solution of $P$.
	We will denote the set of all solution symmetries of $P$ by $\SolSym(P)$.
\end{definition}

We also want symmetries to map non-solutions to non-solutions.
It may not be obvious to the reader and the sources we have read neglect to explain this, but this is actually a consequence of the definition.
We use the following lemma, though we omit its proof.

\begin{lemma}
	\label{lemma-closed-permutation}
	Let $Z$ be a finite set, $U \subseteq Z$, and $\sigma$ a permutation of $Z$.
	If $U$ is closed under $\sigma$, then $Z \setminus U$ is also closed under $\sigma$.
\end{lemma}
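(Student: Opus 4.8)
The plan is to reduce everything to the elementary fact that a permutation of a finite set which maps a subset into itself must in fact map that subset \emph{onto} itself.

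First I would unwind the hypothesis: saying ``$U$ is closed under $\sigma$'' means $\sigma(u) \in U$ for every $u \in U$, i.e. $\sigma(U) \subseteq U$. Consequently the restriction $\sigma|_U : U \to U$ is a well-defined map, and it is injective because $\sigma$ is a permutation of $Z$ (hence injective on all of $Z$) and injectivity is inherited by restrictions. Now I would invoke finiteness: an injective function from a finite set to itself is necessarily surjective (the pigeonhole / Dedekind-finiteness argument). Therefore $\sigma|_U$ is a bijection of $U$, so $\sigma(U) = U$. This is the one and only place the hypothesis ``$Z$ finite'' is used, and it is genuinely needed — for infinite $Z$ the statement is false, as the shift $n \mapsto n+1$ on $Z = \mathbb{Z}$ with $U = \mathbb{Z}_{\ge 0}$ shows.

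Second, I would derive closure of the complement from $\sigma(U) = U$ together with injectivity of $\sigma$. Take any $z \in Z \setminus U$ and suppose, for contradiction, that $\sigma(z) \in U$. Since $\sigma$ restricted to $U$ is onto $U$, there is some $u \in U$ with $\sigma(u) = \sigma(z)$; injectivity of $\sigma$ on $Z$ then forces $u = z$, so $z \in U$, contradicting $z \in Z \setminus U$. Hence $\sigma(z) \in Z \setminus U$, and since $z$ was arbitrary, $Z \setminus U$ is closed under $\sigma$.

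There is no substantive obstacle in this argument; the only step that deserves a moment's care is justifying that an injective self-map of a finite set is surjective, since that is exactly where the finiteness assumption is consumed and without it the conclusion fails. Everything else is routine set manipulation.
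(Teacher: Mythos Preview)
Your proof is correct. The paper itself omits the proof of this lemma entirely (it states ``we omit its proof''), so there is no approach to compare against; your argument---using finiteness to upgrade $\sigma(U)\subseteq U$ to $\sigma(U)=U$ via pigeonhole, then deriving closure of the complement from injectivity---is exactly the standard one and fills the gap cleanly. The remark that finiteness is essential, with the shift on $\mathbb{Z}$ as a counterexample, is a nice addition.
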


We get the following corollary.

\begin{corollary}
	A solution symmetry $\sigma$ of a CSP instance $P$ maps non-solutions to non-solutions.
	That is, if $Z \subseteq \Bindings(P)$ is not a solution of $P$, then $\sigma \bullet Z$ is not a solution of $P$.
\end{corollary}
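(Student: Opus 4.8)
The plan is to derive this as a direct consequence of Lemma~\ref{lemma-closed-permutation}, which the excerpt explicitly licenses us to use. The key observation is that the statement ``$\sigma$ maps solutions to solutions'' is precisely the statement that the set $U$ of all solutions of $P$ is closed under the permutation $\sigma_\sigma$ of $Z := 2^{\Bindings(P)}$ induced by the group action, via the permutation representation (Proposition~\ref{prop-permutation-representation}). So first I would set $Z = 2^{\Bindings(P)}$ and let $U \subseteq Z$ be the set of solutions of $P$. I would note that $Z$ is finite, since $\Bindings(P)$ is finite (as $X$ and each $D(x_i)$ are finite), hence its power set is finite — this is needed because Lemma~\ref{lemma-closed-permutation} assumes a finite ground set.

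Next I would observe that $\sigma$, viewed through the group action $\bullet$, induces the permutation $\sigma_\sigma : Z \to Z$, $\sigma_\sigma(W) = \sigma \bullet W$, which is indeed a permutation of $Z$ by Proposition~\ref{prop-permutation-representation} (applied to the group action of $S_{\Bindings(P)}$ on $2^{\Bindings(P)}$, which the preceding theorem established is a genuine group action). The hypothesis that $\sigma$ is a solution symmetry says exactly that $W \in U \implies \sigma \bullet W \in U$, i.e.\ $U$ is closed under $\sigma_\sigma$. Then Lemma~\ref{lemma-closed-permutation} applies directly and gives that $Z \setminus U$ is closed under $\sigma_\sigma$. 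Unwinding, $Z \setminus U$ is exactly the set of non-solutions, so for any non-solution $W$ we get $\sigma \bullet W \in Z \setminus U$, i.e.\ $\sigma \bullet W$ is a non-solution, which is the claim.

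One subtlety worth addressing cleanly is that Lemma~\ref{lemma-closed-permutation} as stated talks about a set $U$ closed under ``a permutation $\sigma$'' of $Z$, whereas here the permutation that acts on $Z$ is $\sigma_\sigma$, not $\sigma$ itself (which is a permutation of $\Bindings(P)$, the wrong ground set). I would be explicit that we are applying the lemma with its ``$\sigma$'' instantiated to the permutation representation $\sigma_\sigma$ of our solution symmetry, to avoid any confusion between the two levels. This is the only place where care is genuinely needed; everything else is bookkeeping.

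I do not anticipate a real obstacle here — the corollary is essentially a one-line invocation of the lemma plus the finiteness remark. If anything, the ``hard part'' is purely expository: making sure the reader sees that ``maps solutions to solutions'' and ``the solution set is $\sigma$-closed'' are the same statement, and that finiteness of $2^{\Bindings(P)}$ is what unlocks the lemma (the lemma would fail for infinite $Z$, e.g.\ a shift on $\mathbb{Z}$). Since the excerpt omits the proof of Lemma~\ref{lemma-closed-permutation}, I would not reprove it, but I might remark parenthetically that finiteness is essential there, to motivate why we bothered to check $2^{\Bindings(P)}$ is finite.
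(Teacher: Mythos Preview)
Your proposal is correct and follows essentially the same argument as the paper: invoke Proposition~\ref{prop-permutation-representation} to view $\sigma$ as a permutation $f_\sigma$ of $2^{\Bindings(P)}$, note that the solution set is closed under it, and apply Lemma~\ref{lemma-closed-permutation} to conclude the complement is closed too. Your explicit finiteness check and your care in distinguishing $\sigma$ from its induced permutation on the power set are welcome additions that the paper leaves implicit.
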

\begin{proof}
	Consider the function $f_\sigma : 2^{\Bindings(P)} \to 2^{\Bindings(P)}$ defined by $f_\sigma(Z) = \sigma \bullet Z$ for each $Z \subseteq \Bindings(P)$.
	By Proposition \ref{prop-permutation-representation}, $f_\sigma$ is a permutation on $2^{\Bindings(P)}$.
	The definition of a solution symmetry implies that the set of solutions to $P$ must be closed under $f_\sigma$.
	By Lemma \ref{lemma-closed-permutation}, the set of non-solutions in $2^{\Bindings(P)}$ must also be closed under $f_\sigma$.
	Therefore, if $Z \subseteq \Bindings(P)$ is not a solution of $P$, then $\sigma \bullet Z$ is not a solution of $P$.
\end{proof}

Knowing this now allows us to prove that the set of solution symmetries forms a group.

\begin{proposition}
	For a CSP $P$, $\SolSym(P)$ is a subgroup of the group $S_{\Bindings(P)}$.
\end{proposition}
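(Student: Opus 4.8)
The plan is to verify the three defining properties of a subgroup from the definition given earlier: non-emptiness, closure under the group operation (here, function composition), and closure under inverses. Throughout, I would use the already-established fact (the preceding corollary) that a solution symmetry also maps non-solutions to non-solutions, so that a solution symmetry is precisely a permutation of $\Bindings(P)$ under which the set of solutions is invariant (setwise) --- equivalently, under which both the set of solutions and the set of non-solutions are closed.

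First, non-emptiness: the identity permutation $\sigma_{\id}$ of $\Bindings(P)$ sends every set of bindings to itself (as noted in the proof that the action is a group action), so in particular it maps solutions to solutions, hence $\sigma_{\id} \in \SolSym(P)$. Second, closure under composition: let $\sigma, \pi \in \SolSym(P)$ and let $Z \subseteq \Bindings(P)$ be a solution. By the group-action identity $\sigma \bullet (\pi \bullet Z) = (\sigma \circ \pi) \bullet Z$; since $\pi$ maps the solution $Z$ to a solution $\pi \bullet Z$, and $\sigma$ then maps that solution to a solution, we get that $(\sigma \circ \pi) \bullet Z$ is a solution, so $\sigma \circ \pi \in \SolSym(P)$. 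Third, closure under inverses: let $\sigma \in \SolSym(P)$ and let $Z$ be a solution; I want $\sigma^{-1} \bullet Z$ to be a solution. Suppose not --- then $\sigma^{-1} \bullet Z$ is a non-solution, and applying $\sigma$ (which, by the corollary, maps non-solutions to non-solutions) gives that $\sigma \bullet (\sigma^{-1} \bullet Z) = (\sigma \circ \sigma^{-1}) \bullet Z = \sigma_{\id} \bullet Z = Z$ is a non-solution, contradicting that $Z$ is a solution. Hence $\sigma^{-1} \bullet Z$ is a solution and $\sigma^{-1} \in \SolSym(P)$.

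The only genuinely non-routine point --- and the one place the earlier corollary really earns its keep --- is the inverse step, since it is not immediate from the bare definition ``maps solutions to solutions'' that the inverse does too; one needs the ``maps non-solutions to non-solutions'' direction to run the contrapositive argument. Everything else is a direct unwinding of definitions together with the group-action axioms already proved. I would present the argument in the order non-emptiness, composition, inverses, keeping each to a sentence or two, and explicitly invoke Proposition~\ref{prop-permutation-representation} / the preceding corollary at the inverse step so the reader sees why closure under inverses is not automatic.
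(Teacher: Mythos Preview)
Your proposal is correct and matches the paper's proof essentially line for line: non-emptiness via the identity, closure under composition via the group-action identity $\sigma \bullet (\pi \bullet Z) = (\sigma \circ \pi) \bullet Z$, and closure under inverses by contradiction using the preceding corollary that solution symmetries send non-solutions to non-solutions. Your observation that the inverse step is the only place the corollary is genuinely needed is exactly the point the paper's structure is making by proving that corollary immediately beforehand.
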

\begin{proof}
	The identity permutation lies in $\SolSym(P)$, so it is non-empty.
	
	Let $\sigma_1, \sigma_2 \in \SolSym(P)$, and let $Z \subseteq \Bindings(P)$ be a solution to $P$.
	Since $\sigma_2$ map solutions to solutions, $\sigma_2 \bullet Z$ is a solution.
	Furthermore since $\sigma_1$ maps solutions to solutions, $\sigma_1 \bullet (\sigma_2 \bullet Z) = (\sigma_1 \circ \sigma_2) \bullet Z$ is a solution.
	Therefore $\sigma_1 \circ \sigma_2$ maps solutions to solutions, and so $\sigma_1 \circ \sigma_2 \in \SolSym(P)$.
	That is, $\SolSym(P)$ is closed under the group operation of function composition.
	
	Suppose for a contradiction that for some solution symmetry $\sigma$, its inverse $\sigma^{-1}$ is not a solution symmetry.
	Then there exists a solution $Z \subseteq \Bindings(P)$ such that $\sigma^{-1} \bullet Z$ is not a solution.
	Since $\sigma$ maps non-solutions to non-solutions, $\sigma \bullet (\sigma^{-1} \bullet Z)$ is not a solution.
	However $\sigma \bullet (\sigma^{-1} \bullet Z)  = (\sigma \circ \sigma^{-1}) \bullet Z = Z$, so this contradicts that $Z$ is a solution.
	Therefore $\SolSym(P)$ is closed under taking inverses.
\end{proof}

Now that we have a group of solution symmetries and a group action of this group on sets of bindings, we can consider the orbits (i.e. equivalence classes) induced by this action.
If two sets of bindings $Z$ and $Z'$ lie in the same orbit, it means that there must exist a solution symmetry $\sigma$ that acts on $Z$ to produce $Z'$.
Because each solution symmetry preserves whether a set of bindings is a solution, $Z$ is a solution if and only if $Z'$ is a solution.
Hence we have arrived at the following proposition.

\begin{proposition}
	Let $P$ be a CSP instance.
	If $H$ is an orbit produced by the action of $\SolSym(P)$ on $2^{\Bindings(P)}$, either all binding-sets $Z \in H$ are solutions, or none of them are solutions.
\end{proposition}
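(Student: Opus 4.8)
The plan is to show that the property of being a solution is constant on each orbit $H$, from which the stated dichotomy follows immediately once we recall that orbits are nonempty. The key observation is simply that if $Z, Z' \in H$ lie in the same orbit, then by definition of an orbit there exists a solution symmetry $\sigma \in \SolSym(P)$ with $\sigma \bullet Z = Z'$.

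First I would fix an arbitrary orbit $H$ and two elements $Z, Z' \in H$, and obtain such a $\sigma \in \SolSym(P)$ with $\sigma \bullet Z = Z'$. If $Z$ is a solution of $P$, then the definition of a solution symmetry directly gives that $\sigma \bullet Z = Z'$ is a solution. If instead $Z$ is \emph{not} a solution, then the corollary established just above (that a solution symmetry maps non-solutions to non-solutions) gives that $Z' = \sigma \bullet Z$ is not a solution. Either way, $Z$ is a solution if and only if $Z'$ is, and since this holds for every pair of elements of $H$, the ``is a solution'' predicate takes a single value across all of $H$.

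To finish, pick any $Z_0 \in H$ (possible since each orbit is nonempty, being $\Orb_{Z_0} \ni Z_0$). If $Z_0$ is a solution, then by the previous paragraph every $Z \in H$ is a solution; otherwise no $Z \in H$ is a solution. This is precisely the claimed alternative.

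I do not expect a genuine obstacle here; the one point to be careful about is that the argument needs \emph{both} directions of ``solution symmetries preserve solution-hood'' — the defining property for the solution case and the earlier corollary for the non-solution case — rather than trying to argue by pushing $Z'$ back to $Z$ via $\sigma^{-1}$ (which is also legitimate, since $\SolSym(P)$ is a group, but is slightly more roundabout). As an alternative framing, one could instead observe that the set of solutions is closed under each permutation $f_\sigma$ induced by $\sigma \in \SolSym(P)$, hence is a union of orbits of the $\SolSym(P)$-action on $2^{\Bindings(P)}$, which is another way of phrasing the same conclusion.
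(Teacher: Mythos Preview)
Your proposal is correct and matches the paper's own argument essentially verbatim: the paper likewise takes $Z,Z'$ in the same orbit, notes there is some $\sigma\in\SolSym(P)$ with $\sigma\bullet Z=Z'$, and invokes that solution symmetries preserve both solutions and non-solutions to conclude. The only difference is cosmetic---the paper presents this as a one-paragraph discussion preceding the proposition rather than as a formal proof.
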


These orbits are precisely the equivalence or \emph{isomorphism} classes of binding-sets that we are interested in when exploring symmetry.
The key insight in symmetry reduction is that to determine whether a solution exists to a CSP instance, it is necessary only to examine a single representative from each class, which can drastically reduce search time.
More on symmetry reduction will be discussed in later sections.

Note that a symmetry always maps solutions to solutions, but it may not always map complete assignments to complete assignments.

\subsection{Constraint Symmetries}

As Gent et al. \cite{gent_symmetry_2006} remark, defining symmetries as mappings that preserve solutions does not seem to give any guidance on how such symmetries might be identified.
To find the whole symmetry group, one might need to compute all solutions to the CSP, which defeats the purpose in taking advantage of symmetries.
In order for any definition of symmetry to be useful, it must obviously require that solutions are mapped to solutions.
As mentioned earlier, many other CSP definitions do not make this the defining property of symmetries, but instead look at a more structured group for which this property is a consequence of the definition.
In particular, many define a symmetry as a mapping that somehow preserves the set of constraints.

Cohen et al. \cite{cohen_symmetry_2006} recognized that these ``constraint" or ``problem" symmetries can simply be viewed as a subgroup of the solution symmetries.
Here we will present their formalization of this result.
To formalize the notion of preserving constraints, they define a hypergraph called the \emph{microstructure} of a CSP instance, pulled from earlier work by Freuder \cite{freuder1991eliminating} and J{\'e}gou \cite{jegou1993decomposition}.
We only need the complement of this hypergraph, so we will provide the definition for its complement directly.
In this hypergraph, the edges connect bindings that cannot together be part of a solution.

\begin{definition}[Microstructure Complement]
	Let $P = (X, D, C)$ be a CSP.
	The \emph{microstructure complement} of $P$ is the hypegraph $\overline{MS}(P) = (V, E)$, where $V = \Bindings(P)$ and $E$ contains precisely the following hyperedges:
	\begin{itemize}
		\item $E$ contains the hyperedge $e = \set{(x, a), (x, b)}$ for each $x \in X$ and distinct $a, b \in D(x)$, and
		\item $E$ contains the hyperedge $e = \set{ (x_1, a_1), \dotsc, (x_k, a_k) }$ if $\set{x_1, \dotsc, x_k}$ is the set of variables of some constraint scope, but that constraint disallows the assignment $e$.
	\end{itemize}
	We will call hyperedges of the first type \emph{consistency hyperedges}, and those of the latter type \emph{constraint hyperedges}.
\end{definition}

Consistency hyperedges reflect that assignments must be internally consistent; that is, one cannot simultaneously assign variable $x$ to have value $a$ and value $b$.
Constraint hyperedges reflect that assignments must satisfy the constraints.
An \emph{independent set} in a hypergraph is a set $Y$ of vertices such that none of the hyperedges is a subset of $Y$ ($Y$ is allowed to intersect a hyperedge, but it cannot entirely contain one).
From the definition, we can see that a solution to the CSP is exactly an independent set of size $|X|$.

Consider the SAT problem on variables $X = \set{a, b, c}$ with the following constraints.
\[C_1 \coloneqq a \land b \]
\[C_2 \coloneqq a \lor b \lor c \]
The microstructure complement of this problem is presented below.
The green hyperedges are the consistency hyperedges.
The blue and red hyperedges are constraint hyperedges, arising from $C_1$ and $C_2$ respectively.
The independent set $J = \set{(a, T), (b, T), (c, F)}$, which satisfies that $|J| = |X|$, is a solution to the SAT problem.

\tikzstyle{vertex} = [fill,shape=circle,node distance=80pt]
\tikzstyle{edge} = [fill,opacity=.8,fill opacity=.5,line cap=round, line join=round, line width=30pt]
\tikzstyle{elabel} =  [fill,shape=circle,node distance=30pt]

\pgfdeclarelayer{background}
\pgfsetlayers{background,main}

\begin{center}

\begin{tikzpicture}[scale = 0.7]
\node[vertex,label=above:\({(a, T)}\)] (v1) {};
\node[vertex,right of=v1,label=above:\({(b, T)}\)] (v2) {};
\node[vertex,right of=v2,label=above:\({(c, T)}\)] (v3) {};
\node[vertex,below of=v1,label={[label distance=0.4cm]below:\({(a, F)}\)}] (v4) {};
\node[vertex,right of=v4,label={[label distance=0.4cm]below:\({(b, F)}\)}] (v5) {};
\node[vertex,right of=v5,label={[label distance=0.4cm]below:\({(c, F)}\)}] (v6) {};

\begin{pgfonlayer}{background}

% Consistency edges
\draw[edge,color=green, line width=24pt] (v1) -- (v4);
\draw[edge,color=green, line width=24pt] (v2) -- (v5);
\draw[edge,color=green, line width=24pt] (v3) -- (v6);

% C1
\draw[edge,color=blue, line width=24pt] (v4) -- (v2);
\draw[edge,color=blue, line width=35pt] (v4) -- (v5);
\draw[edge,color=blue, line width=24pt] (v1) -- (v5);

% C2
\draw[edge,color=red,opacity=0.7pt, line width = 22pt] (v4) -- (v5) -- (v6) -- (v4);

\end{pgfonlayer}

\node[elabel,color=green,label=right:\({Consistency}\)]  (e0) at (-5,0) {};
\node[elabel,below of=e0,color=blue,label=right:\({C_1}\)]  (e1) {};
\node[elabel,below of=e1,color=red,opacity=0.7pt,label=right:\({C_2}\)]  (e2) {};
\end{tikzpicture}

\end{center}

Consistency hyperedges can be viewed as a kind of implicit constraint that a variable cannot be assigned multiple values at once, and each remaining problem constraint clearly corresponds to a set of constraint hyperedges.
So the notion of ``preserving constraints" can be viewed as preserving hyperedges of the microstructure complement.
Recalling that an \emph{automorphism} of a hypergraph is a permutation of its vertices that, when extended pointwise to sets of vertices, maps hyperedges to hyperedges and non-hyperedges to non-hyperedges, Cohen et al. define the constraint symmetries as follows.

\begin{definition}[Constraint Symmetry]
	A \emph{constraint symmetry} of a CSP instance $P$ is an automorphism of $\overline{MS}(P)$.
	We will denote the set of all constraint symmetries of $P$ by $\ConSym(P)$.
\end{definition}

Next they prove that the constraint symmetries of a CSP instance are a subgroup of the solution symmetries.

\begin{theorem}
	For a CSP instance $P = (X, D, V)$, $\ConSym(P)$ is a subgroup of $\SolSym(P)$.
\end{theorem}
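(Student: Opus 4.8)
The plan is to show two things: first, that every constraint symmetry is a solution symmetry (so $\ConSym(P) \subseteq \SolSym(P)$ as sets), and second, that $\ConSym(P)$ satisfies the subgroup criterion from the definition of subgroup. Since $\ConSym(P)$ is defined as the automorphism group of the hypergraph $\overline{MS}(P)$, and since both $\ConSym(P)$ and $\SolSym(P)$ already live inside the same ambient group $S_{\Bindings(P)}$, the inclusion is the crux and the subgroup axioms should be routine.

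For the inclusion, let $\sigma \in \ConSym(P)$, i.e.\ $\sigma$ is an automorphism of $\overline{MS}(P)$, and let $Z \subseteq \Bindings(P)$ be a solution. Recall from the discussion after the microstructure definition that a solution is exactly an independent set of the hypergraph of size $|X|$. So I would argue: (i) $\sigma \bullet Z$ still has size $|X|$, since $\sigma$ is a bijection on $\Bindings(P)$ and the pointwise lift of a bijection preserves cardinality; and (ii) $\sigma \bullet Z$ is still independent. For (ii), suppose toward a contradiction that some hyperedge $e$ satisfies $e \subseteq \sigma \bullet Z$. Since $\sigma$ is a hypergraph automorphism, $\sigma^{-1}$ is too (automorphisms form a group), so $\sigma^{-1}(e)$ is again a hyperedge, and $\sigma^{-1}(e) \subseteq \sigma^{-1} \bullet (\sigma \bullet Z) = Z$ using the group-action identities, contradicting that $Z$ is independent. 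Hence $\sigma \bullet Z$ is an independent set of size $|X|$, i.e.\ a solution, so $\sigma \in \SolSym(P)$.

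For the subgroup check: $\ConSym(P)$ is nonempty because the identity permutation is trivially a hypergraph automorphism. Closure under composition: if $\sigma_1, \sigma_2$ each map hyperedges to hyperedges and non-hyperedges to non-hyperedges, then so does $\sigma_1 \circ \sigma_2$, by composing the two bijections on $2^V$. Closure under inverses: if $\sigma$ maps the set of hyperedges bijectively onto itself (which "hyperedges to hyperedges and non-hyperedges to non-hyperedges" gives, since $\sigma$ is a bijection on a finite vertex set hence on $2^V$), then $\sigma^{-1}$ does as well. This is just the standard fact that the automorphism group of a (hyper)graph is a group, and I would state it at that level of detail rather than belabour it.

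The main obstacle, such as it is, is making the "solution = independent set of size $|X|$" correspondence precise and using it cleanly — in particular being careful that the pointwise action $\bullet$ on sets of bindings agrees with applying the permutation $\sigma$ and its inverse, so that $\sigma^{-1} \bullet (\sigma \bullet Z) = Z$ is justified by the group-action axioms proved earlier. Everything else is bookkeeping. One subtlety worth a sentence: constraint hyperedges are indexed by the variable-set of a constraint scope, so one should note that an automorphism genuinely permutes the full hyperedge set (both consistency and constraint types) and that independence is defined purely in terms of "no hyperedge is a subset," which is exactly the property automorphisms are built to preserve.
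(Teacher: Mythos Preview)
Your proposal is correct and follows essentially the same approach as the paper: both invoke the standard fact that hypergraph automorphisms form a group (you spell it out, the paper just states it), and both reduce the inclusion to the observation that an automorphism of $\overline{MS}(P)$ sends independent sets of size $|X|$ to independent sets of size $|X|$. Your contradiction argument via $\sigma^{-1}$ is a slightly more explicit version of the paper's one-line ``$\sigma$ sends non-hyperedges to non-hyperedges,'' but the underlying idea is identical.
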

\begin{proof}
	The automorphisms of a hypergraph always form a group under the operation of function composition, so it suffices to show that $\ConSym(P) \subseteq \SolSym(P)$.
	Let $\sigma \in \ConSym(P)$.
	Suppose $Z \subseteq \Bindings(P)$ is a solution.
	Then in the microstructure complement, $Z$ is an independent set of size $|X|$.
	Since $\sigma$ sends non-hyperedges to non-hyperedges, $\sigma \bullet Z$ is an independent set of size $|X|$ and hence is a solution.
	Thus $\sigma$ always maps solutions to solutions and $\sigma \in \SolSym(P)$.
\end{proof}

Cohen et al. also gave an example of a CSP where this containment is strict; that is, there is a solution symmetry that is not a constraint symmetry.

\section{Symmetry in Finite Model Finding}
\label{section-fmf}

We now explore notions of symmetry as they appear in finite model finding, unify these concepts, and relate them to the CSP symmetry framework.

\subsection{Existing Symmetry Definitions in FMF}

In the literature for finite model finding there likewise are many definitions of symmetry, though they tend to be more consistent with each other than those found in CSP literature.
All those we have surveyed are interested in a notion of value symmetry, noting that in finite model finding all values are interchangeable. 
Relabeling the values in an interpretation leads to another interpretation that satisfies the formulas if and only if the original interpretation also satisfies the formulas.

In their introduction to the Falcon model finder, Zhang \cite{zhang_constructing_1996} considers expanding FOL formulas into ground equations. They then define a set of ground equations as symmetric with respect to some set $X$ of values if permutating the values of $X$ within the ground equations does not change the set of ground equations.
Peltier \cite{peltier_new_1998}, and later Claessen and S\"orenson when introducing their Paradox model finder \cite{claessen_new_2003}, define isomorphic interpretations that can be created by permuting (relabeling) the domain values.
Torlak's work \cite{torlak_constraint_2009} is in an FMF setting where not all value permutations preserve solutions, and so defines them to be specifically those that do preserve solutions.
Baumgartner et al. \cite{baumgartner_computing_2009} similarly define a value symmetry to be a permutation of the values, but one that maps partial solutions to partial solutions.
Audemard and Benhamou \cite{audemard2001symmetry} consider a multi-sorted setting and define symmetry in terms of a dynamic search.
For them, a symmetry is a collection of permutations, one for each sort's domain, that leaves unchanged both the current instantiation and the set of ground clauses. 

The papers by Zhang and Zhang that introduce their SEM tool \cite{goos_system_1996} and McCune's Mace4 solver \cite{bonacina_mace4_2013} do not define symmetry but base their symmetry exploiting techniques on Zhang's previous work for Falcon.
Similarly, Vakili and Day's paper \cite{fitzgerald_finite_2016}, introducing the Fortress solver, does not define symmetry but uses the same symmetry breaking techniques found in Paradox.
The same is true for the paper by Reger et al. \cite{creignou_finding_2016} on their work in finite model finding using the Vampire theorem prover.

As stated earlier, these definitions are mostly consistent, at least in spirit.
However, there are several limitations that we can identify.
First, the notions of symmetry are restricted only to values; unlike in CSP, there is no concept of symmetries acting on variables or more generally variable-value pairs.
Second, most definitions, barring those of Peltier and Audemard and Benhamou, only rigorously define symmetries for single-sorted problems.
Additionally, as was the case in CSP, there is not a consensus as to whether symmetries should be defined as preserving solutions or constraints.
Torlak defines them as preserving solutions, Zhang and Audemard and Benhamou define them as preserving constraints, and others consider only restricted settings where the distinction need not be addressed.
Finally, while there are many conceptual similarities between symmetry in CSP and symmetry in finite model finding, and no doubt each field inspires work in the other, there does not appear to be an attempt to rigorously tie the two together.
In this section, we propose to use the work of Cohen et al. \cite{cohen_symmetry_2006} to unify the symmetry definitions for finite model-finding and CSP.

\subsection{Value Relabeling}

Earlier in this thesis, we mentioned that only the \emph{sizes} of the sets given by a domain assignment affect the satisfiability of the problem, the actual sets themselves do not matter.
What the values actually are does not factor into the semantics of FOL; it only matters that the values are distinct from each other.
For example, it does not matter whether we say that $\U(A) = \set{1, 2, 3, 4}$ or $\U(A) = \set{a, b, c, d}$ or $\U(A) = \set{\text{cat}, \text{dog}, \text{mouse}, \text{rabbit}}$.
The values can be relabeled without affecting satisfiability.
This well-known property of FOL forms the basis of the current understanding of symmetry in FOL.
The following definitions and theorem express this relabeling phenomenon formally.
This theorem can be found in the works by Claessen and S\"orenson \cite{claessen_new_2003} and Peltier \cite{peltier_new_1998}.

\begin{definition}[Domain Bijection]
	Let $\Sigma$ be a signature.
	Given two domain assignments $\U$ and $\U'$ for $\Sigma$, a \emph{domain bijection} between $\U$ and $\U'$ is a collection of bijections, containing a bijection $\sigma_\theta : \U(\theta) \to \U'(\theta)$ for each $\theta \in \Theta(\Sigma)$.
\end{definition}

\begin{definition}[Action of a Domain Bijection on an Interpretation]
	Let $\Sigma$ be a signature, $\Gamma$ a set of formulas over $\Sigma$, $\U$ and $\U'$ two domain assignments for $\Sigma$, and $\sigma$ a domain bijection between $\U$ and $\U'$.
	Let $I$ be an interpretation for $(\Sigma, \Gamma, \U)$.
	We define the action of $\sigma$ on $I$ as producing an interpretation $\sigma \bullet I$ for $(\Sigma, \Gamma, \U')$, constructed as follows.
	
	For each functional symbol $f: A_1 \times \dotsb \times A_n \to B$, define $(\sigma \bullet I)(f)$ by
	\[ (\sigma \bullet I)(f)(a'_1, \dotsc, a'_n) = \sigma_B( I(f)( \sigma_{A_1}^{-1}(a'_1), \dotsc, \sigma_{A_n}^{-1}(a'_n) ) ) .\]
	Equivalently, if $I(f)(a_1, \dotsc, a_n) = b$, then $(\sigma \bullet I)(f)( \sigma_{A_1}(a_1), \dotsc, \sigma_{A_n}(a_n) ) = \sigma_B(b)$.
	
	For each relational symbol $R: A_1 \times \dotsb A_n \to \Bool$, define $(\sigma \bullet I)(R)$ by
	\[ (a_1, \dotsc, a_n) \in I(R) \iff (\sigma_{A_1}(a_1), \dotsc, \sigma_{A_n}(a_n)) \in (\sigma \bullet I)(R) .\]
\end{definition}

\begin{theorem}[Value Relabeling]
	\label{value-relabelling}
	Let $\Sigma$ be a signature, $\Gamma$ a set of formulas over $\Sigma$, $\U$ and $\U'$ two domain assignments for $\Sigma$, and $\sigma$ a domain bijection between $\U$ and $\U'$.
	For any interpretation $I$ of $(\Sigma, \Gamma, \U)$, $I \models \Gamma$ if and only if $(\sigma \bullet I) \models \Gamma$.
	Hence, $I$ is a model of $(\Sigma, \Gamma, \U)$ if and only if $(\sigma \bullet I)$ is a model of $(\Sigma, \Gamma, \U')$.
\end{theorem}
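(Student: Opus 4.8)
\section*{Proof proposal}

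The plan is to prove the stronger statement that $\sigma \bullet I$ and $I$ agree on the truth of \emph{every} $\Sigma$-formula, not merely the members of $\Gamma$, by structural induction. Since the inductive hypothesis for a quantified or otherwise open subformula forces us to track free variables, I would first generalize the claim to incorporate variable assignments: a variable assignment $\mu$ for $I$ sends each variable $x$ of sort $\theta$ to an element $\mu(x) \in \U(\theta)$, and I define the relabeled assignment $\sigma \circ \mu$ for $\sigma \bullet I$ by $(\sigma \circ \mu)(x) = \sigma_\theta(\mu(x))$. The target statement then becomes: for every $\Sigma$-formula $\Phi$ and every variable assignment $\mu$, we have $I, \mu \models \Phi$ if and only if $(\sigma \bullet I), (\sigma \circ \mu) \models \Phi$. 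Specializing to a $\Phi \in \Gamma$ (a sentence, for which $\mu$ is irrelevant) yields the theorem, and the ``hence'' clause is then just the definition of ``model''.

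First I would establish the term-level analogue by structural induction on $\Sigma$-terms: for a term $t$ of sort $B$, the value of $t$ under $(\sigma \bullet I)$ and $(\sigma \circ \mu)$ equals $\sigma_B$ applied to the value of $t$ under $I$ and $\mu$. The base cases are variables, which hold by the very definition of $\sigma \circ \mu$, and constants, which hold by the definition of the action on nullary function symbols. For $t = f(t_1, \dotsc, t_n)$ with $f : A_1 \times \dotsb \times A_n \to B$, the induction hypothesis gives that each $t_i$ evaluates under $(\sigma \bullet I), (\sigma \circ \mu)$ to $\sigma_{A_i}$ of its value $a_i$ under $I, \mu$, and then the defining equation $(\sigma \bullet I)(f)(\sigma_{A_1}(a_1), \dotsc, \sigma_{A_n}(a_n)) = \sigma_B(I(f)(a_1, \dotsc, a_n))$ closes the step.

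Next I would carry out the main induction on formulas. Atomic formulas $R(t_1, \dotsc, t_n)$ follow from the term lemma together with the defining biconditional for $(\sigma \bullet I)(R)$; the equality atom $t_1 = t_2$ additionally uses that each $\sigma_\theta$ is injective, so $\sigma_B(a) = \sigma_B(b)$ iff $a = b$. The Boolean connectives are immediate from the induction hypotheses. The only case carrying real content is the quantifier case, say $\Phi = \exists x.\, \Psi$ with $x$ of sort $\theta$: here I would use that $\sigma_\theta : \U(\theta) \to \U'(\theta)$ is a bijection, so the potential witnesses $a \in \U(\theta)$ for $\Psi$ under $I$ are in exact correspondence with the potential witnesses $\sigma_\theta(a) \in \U'(\theta)$ for $\Psi$ under $\sigma \bullet I$, via the identity $\sigma \circ (\mu[x \mapsto a]) = (\sigma \circ \mu)[x \mapsto \sigma_\theta(a)]$, after which the induction hypothesis applied to $\Psi$ finishes the step (the $\forall$ case being dual).

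I expect no serious obstacle: this is a routine ``homomorphism of structures'' induction. The one place demanding care is the quantifier step, where \emph{surjectivity} of $\sigma_\theta$, not merely injectivity, is essential --- otherwise $\U'(\theta)$ could contain elements not of the form $\sigma_\theta(a)$ and the correspondence of witnesses would break, which is precisely why the hypothesis insists on a \emph{bijection}. A secondary and purely bookkeeping concern is keeping the sort annotations on variables and on the component bijections $\sigma_\theta$ consistent throughout, and handling the minor subtlety that the definition of $(\sigma \bullet I)(R)$ determines $(\sigma \bullet I)(R)$ completely only because $\sigma$ is onto.
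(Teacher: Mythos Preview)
Your proposal is correct and is the standard structural-induction argument for this kind of result. Note, however, that the paper does not actually supply its own proof of this theorem: it simply states the result, attributes it to Claessen and S\"orenson and to Peltier, and illustrates it with an example before moving on to the special case where $\U = \U'$. So there is no paper-proof to compare against; your argument would fill that gap cleanly, and your emphasis on needing full bijectivity (injectivity for equality atoms, surjectivity for the quantifier step) is exactly the right point of care.
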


As an example, consider $\Sigma, \Gamma, \U, \U'$ defined below.
\begin{align*}
&\Theta(\Sigma) = \set{A, B}, \mathscr F(\Sigma) = \set{c: A, d: B, f: A \to B}, \mathscr R(\Sigma) = \set{P: A \times B \to \Bool} \\
&\Gamma = \set{\forall x : B . \, P(c, x), \exists y : A . \, f(y) = d } \\
&\U(A) = \set{a_1, a_2, a_3}, \U(B) = \set{b_1, b_2} \\
&\U'(A) = \set{\alpha, \beta, \omega}, \U'(B) = \set{\star, \diamond}
\end{align*}
One possible domain bijection $\sigma$ is given by the following.
\begin{align*}
&\sigma_A(a_1) = \alpha, \sigma_A(a_2) = \beta, \sigma_A(a_3) = \omega \\
&\sigma_B(b_1) = \star, \sigma_B(b_2) = \diamond
\end{align*}
Now define an interpretation $I$ for $(\Sigma, \Gamma, \U)$ as follows.
\begin{align*}
&I(c) = a_1\\
&I(d) = b_2\\
&I(f)(a_1) = b_2, I(f)(a_2) = b_2, I(f)(a_3) = b_2\\
&I(P) = \set{(a_1, b_1), (a_1, b_2), (a_3, b_2)}
\end{align*}
$I$ is a solution of $(\Sigma, \Gamma, \U)$.
Now the interpretation $I' = \sigma \bullet I$ is given below.
\begin{align*}
&I(c) = \alpha\\
&I(d) = \diamond\\
&I(f)(\alpha) = \diamond, I(f)(\beta) = \diamond, I(f)(\omega) = \diamond\\
&I(P) = \set{(\alpha, \star), (\alpha, \diamond), (\omega, \diamond)}
\end{align*}
This new interpretation is a solution of $(\Sigma, \Gamma, \U')$.

The special case of the Value Relabeling Theorem where $\U = \U'$ is crucial to the understanding of symmetry in MSFMF problems.
In this setting, each bijection $\sigma_\theta$ is a permutation of the set $\U(\theta)$.

\begin{definition}[Domain Permutation]
	Let $P = (\Sigma, \Gamma, \U)$ be a MSFMF problem.
	A \emph{domain permutation} of $P$ is a collection $\sigma$ of permutations that contains a permutation $\sigma_\theta: \U(\theta) \to \U(\theta)$ for each $\theta \in \Theta(\Sigma)$.
	We denote the set of all domain permutations of $P$ by $\DomPerm(P)$.
\end{definition}

The two interpretations $I$ and $\sigma \bullet I$ related by a domain permutation $\sigma$ are both interpretations for the same MSFMF problem.
Moreover, $I$ is a solution of this problem if and only if $\sigma \bullet I$ is a solution of this problem, so checking whether one interpretation satisfies $P$ is equivalent to checking the other.
We call such interpretations \emph{isomorphic}.

\begin{definition}
	Two interpretations $I$ and $I'$ for an MSFMF problem $P$ are \emph{isomorphic} if $I' = \sigma \bullet I$ for some domain permutation $\sigma$ of $P$.
\end{definition}

We restate this special case of the Value Relabeling Theorem using our new terminology for emphasis.

\begin{theorem}[Value Permutation]
	Let $P = (\Sigma, \Gamma, \U)$ be an MSFMF problem and let $\sigma$ be a domain permutation of $P$.
	For any interpretation $I$ of $P$, $I \models \Gamma$ if and only if $(\sigma \bullet I) \models \Gamma$.
	Hence $I$ is a model of $P$ if and only if $(\sigma \bullet I)$ is a model of $P$.
\end{theorem}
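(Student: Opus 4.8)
The plan is to derive this as the special case $\U = \U'$ of the Value Relabeling Theorem (Theorem~\ref{value-relabelling}), which has already been established. First I would observe that a domain permutation of $P$ is exactly a domain bijection between $\U$ and $\U$: each component $\sigma_\theta$ is a bijection from $\U(\theta)$ to itself, which is by definition a permutation, and conversely every such collection of permutations is a domain bijection with $\U' = \U$. Likewise, the action of a domain permutation on an interpretation, as defined above, coincides verbatim with the action of a domain bijection when the source and target domain assignments are the same. Hence $\sigma \bullet I$, viewed as an interpretation of $(\Sigma, \Gamma, \U)$, is precisely the interpretation produced by the domain-bijection action.

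With this identification in hand, the first claim --- that $I \models \Gamma$ if and only if $(\sigma \bullet I) \models \Gamma$ --- is an immediate instance of Theorem~\ref{value-relabelling} taken with $\U' = \U$. The second claim then follows at once, since ``$I$ is a model of $P$'' is by definition ``$I \models \Gamma$'', so the two biconditionals are literally the same statement phrased in the two vocabularies. This is really all there is to say at the level of this theorem; the substance was already absorbed into the more general relabeling result.

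If instead one wanted a self-contained argument not appealing to Theorem~\ref{value-relabelling}, I would prove it by structural induction. The term level comes first: by induction on the structure of a $\Sigma$-term $t$, one shows that for any variable assignment $\mu$ into $\U$, the value of $t$ under $I$ and $\mu$ is carried by the appropriate $\sigma_\theta$ to the value of $t$ under $\sigma \bullet I$ and the assignment $x \mapsto \sigma_\theta(\mu(x))$; the function-symbol case is exactly the defining equation of $(\sigma \bullet I)(f)$. The inductive step then lifts this to formulas: atomic formulas use the defining equations for the action on $I(f)$ and $I(R)$, equality uses injectivity of the $\sigma_\theta$, and the Boolean connectives are transparent. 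The only place demanding genuine care --- and the main obstacle in a from-scratch proof --- is the quantifier case, where one must check that as $\mu(x)$ ranges over all of $\U(\theta)$ so does $\sigma_\theta(\mu(x))$; this is precisely where surjectivity of the permutations (not merely injectivity) is used, and it is what makes the full biconditional, rather than just one direction, go through.
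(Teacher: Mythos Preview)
Your approach is exactly the paper's: the theorem is presented there as a restatement of the special case $\U = \U'$ of the Value Relabeling Theorem, with no additional proof given. Your optional structural-induction sketch is correct but goes beyond what the paper does.
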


For example, consider the same problem $(\Sigma, \Gamma, \U)$ and interpretation $I$ as in the previous example.
\begin{align*}
&\Theta(\Sigma) = \set{A, B}, \mathscr F(\Sigma) = \set{c: A, d: B, f: A \to B}, \mathscr R(\Sigma) = \set{P: A \times B \to \Bool} \\
&\Gamma = \set{\forall x : B . \, P(c, x), \exists y : A . \, f(y) = d } \\
&\U(A) = \set{a_1, a_2, a_3}, \U(B) = \set{b_1, b_2} \\
&I(c) = a_1\\
&I(d) = b_2\\
&I(f)(a_1) = b_2, I(f)(a_2) = b_2, I(f)(a_3) = b_2\\
&I(P) = \set{(a_1, b_1), (a_1, b_2), (a_3, b_2)}
\end{align*}
After acting on $I$ with the domain permutation $\sigma$ defined by
\begin{align*}
& \sigma_A(a_1) = a_3, \sigma_A(a_2) = a_1, \sigma_A(a_3) = a_2 \\
& \sigma_B(b_1) = b_2, \sigma_B(b_2) = b_1,
\end{align*}
the resulting interpretation $I' = (\sigma \bullet I)$ is given as follows.
\begin{align*}
&I(c) = a_3\\
&I(d) = b_1\\
&I(f)(a_3) = b_1, I(f)(a_1) = b_1, I(f)(a_2) = b_1\\
&I(P) = \set{(a_3, b_2), (a_3, b_1), (a_2, b_1)}
\end{align*}
As expected, both $I$ and $I'$ are solutions.

Since each domain permutation relates $I$ to a different isomorphic interpretation, there are many essentially equivalent solutions to a problem, related by these permutations of values.
It would make sense to call these domain permutations ``value symmetries" for MSFMF.
We will avoid this for now since later we wish to relate them to our established definition of value symmetries in CSP problems.

It is worth noting a couple of properties of domain permutations.
These trivially fall out of the definitions.

\begin{definition}[Composition of Domain Permutations]
	Let $P = (\Sigma, \Gamma, \U)$ be a MSFMF problem.
	The \emph{composition} $\sigma \circ \gamma$ of two domain permutations $\sigma$ and $\gamma$ is defined as the domain permutation whose permutation on $\U(\theta)$ is $\sigma_\theta \circ \gamma_\theta$ for each $\theta \in \Theta(\Sigma)$.
\end{definition}

\begin{proposition}
	For any MSFMF problem $P$, $\DomPerm(P)$ forms a group under composition.
	The identity domain permutation is the domain permutation comprised of the identity map on $\U(\theta)$ for each sort $\theta$.
\end{proposition}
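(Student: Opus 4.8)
The plan is to verify the three group axioms directly for the componentwise operation $\circ$ just defined on $\DomPerm(P)$, noting throughout that each axiom reduces to the corresponding fact about the symmetric group $S_{\U(\theta)}$ on each sort $\theta$. Conceptually, $\DomPerm(P)$ is nothing more than the direct product $\prod_{\theta \in \Theta(\Sigma)} S_{\U(\theta)}$ with $\circ$ acting in each coordinate, and a direct product of groups is a group; but since direct products have not been introduced in the excerpt, I would spell out the individual checks rather than appeal to that fact.

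First I would note closure (that $\circ$ is well-defined): if $\sigma, \gamma \in \DomPerm(P)$, then for each $\theta$ the map $\sigma_\theta \circ \gamma_\theta$ is a composition of two permutations of the finite set $\U(\theta)$, hence again a permutation of $\U(\theta)$, so the collection $(\sigma_\theta \circ \gamma_\theta)_\theta$ is again a domain permutation. Associativity is then immediate: for $\sigma, \gamma, \tau \in \DomPerm(P)$ the $\theta$-component of $(\sigma \circ \gamma) \circ \tau$ is $(\sigma_\theta \circ \gamma_\theta) \circ \tau_\theta$ and that of $\sigma \circ (\gamma \circ \tau)$ is $\sigma_\theta \circ (\gamma_\theta \circ \tau_\theta)$, and these agree because function composition on $\U(\theta)$ is associative; since two domain permutations are equal precisely when all their components agree, the two sides are equal. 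For the identity, let $\varepsilon$ be the collection whose $\theta$-component is $\id_{\U(\theta)}$; this is a domain permutation, and componentwise $\sigma_\theta \circ \id_{\U(\theta)} = \id_{\U(\theta)} \circ \sigma_\theta = \sigma_\theta$, so $\sigma \circ \varepsilon = \varepsilon \circ \sigma = \sigma$ for every $\sigma$. For inverses, given $\sigma \in \DomPerm(P)$ let $\sigma^{-1}$ be the collection whose $\theta$-component is the inverse permutation $\sigma_\theta^{-1}$ (which exists since $\sigma_\theta$ is a bijection); this is a domain permutation, and componentwise $\sigma_\theta \circ \sigma_\theta^{-1} = \sigma_\theta^{-1} \circ \sigma_\theta = \id_{\U(\theta)}$, so $\sigma \circ \sigma^{-1} = \sigma^{-1} \circ \sigma = \varepsilon$. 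This exhibits $\DomPerm(P)$ as a group with identity $\varepsilon$, matching the claimed description of the identity domain permutation.

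There is no genuine obstacle here; the proposition is essentially a transcription of the symmetric-group axioms through the componentwise definition of $\circ$. The only point requiring any care is the bookkeeping observation that equality of domain permutations means equality of every component, so that associativity, the identity law, and the inverse law all follow coordinatewise; once that is stated, the argument is routine and could even be compressed to the single remark that $\DomPerm(P) = \prod_{\theta \in \Theta(\Sigma)} S_{\U(\theta)}$.
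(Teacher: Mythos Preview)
Your proposal is correct and matches the paper's treatment: the paper does not give a proof at all, remarking only that the proposition ``trivially fall[s] out of the definitions.'' Your componentwise verification of closure, associativity, identity, and inverses is exactly the routine check the paper leaves implicit, and your observation that $\DomPerm(P)$ is just $\prod_{\theta \in \Theta(\Sigma)} S_{\U(\theta)}$ captures why it is considered trivial.
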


\begin{proposition}
	Let $P$ be a MSFMF problem.
	For any domain interpretations $\sigma, \pi \in \DomPerm(P)$ and interpretation $I$,
	\begin{itemize}
		\item $(\sigma \circ \gamma) \bullet I = \sigma \bullet (\gamma \bullet I)$, and
		\item $\id \bullet I = I$,
	\end{itemize}
	where $\id$ is the identity domain permutation.
\end{proposition}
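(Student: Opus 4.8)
The plan is to verify both identities symbol by symbol: two interpretations of $P$ are equal exactly when they assign the same function to every functional symbol and the same relation to every relational symbol, so it suffices to unwind the definition of $\bullet$ separately on each $f \in \mathscr F(\Sigma)$ and each $R \in \mathscr R(\Sigma)$. Throughout I would use the two facts about composition of domain permutations that are immediate from its definition: $(\sigma \circ \gamma)_\theta = \sigma_\theta \circ \gamma_\theta$ for each sort $\theta$, and hence $(\sigma \circ \gamma)_\theta^{-1} = \gamma_\theta^{-1} \circ \sigma_\theta^{-1}$.

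For the identity law, I would first note that if $\id$ is the identity domain permutation then each $\id_\theta$, and so each $\id_\theta^{-1}$, is the identity on $\U(\theta)$. On a functional symbol $f : A_1 \times \dotsb \times A_n \to B$ the defining equation collapses to $(\id \bullet I)(f)(a_1, \dotsc, a_n) = \id_B(I(f)(\id_{A_1}^{-1}(a_1), \dotsc, \id_{A_n}^{-1}(a_n))) = I(f)(a_1, \dotsc, a_n)$, and on a relational symbol $R$ the defining biconditional reads $(a_1, \dotsc, a_n) \in I(R) \iff (a_1, \dotsc, a_n) \in (\id \bullet I)(R)$; together these give $\id \bullet I = I$.

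For the composition law, on a functional symbol $f : A_1 \times \dotsb \times A_n \to B$ with values $a_i \in \U(A_i)$ I would expand the left side using $(\sigma \circ \gamma)_\theta^{-1} = \gamma_\theta^{-1} \circ \sigma_\theta^{-1}$, obtaining
\[
((\sigma \circ \gamma) \bullet I)(f)(a_1, \dotsc, a_n) = \sigma_B\bigl( \gamma_B\bigl( I(f)( \gamma_{A_1}^{-1}(\sigma_{A_1}^{-1}(a_1)), \dotsc, \gamma_{A_n}^{-1}(\sigma_{A_n}^{-1}(a_n)) ) \bigr) \bigr),
\]
and then expand the right side by peeling off the outer $\sigma$ and then the inner $\gamma$, which yields the very same expression. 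For a relational symbol $R : A_1 \times \dotsb \times A_n \to \Bool$ I would chain the defining biconditionals: for any $a_i \in \U(A_i)$, the statement $(a_1, \dotsc, a_n) \in I(R)$ is equivalent to $((\sigma \circ \gamma)_{A_1}(a_1), \dotsc, (\sigma \circ \gamma)_{A_n}(a_n)) \in ((\sigma \circ \gamma) \bullet I)(R)$, and it is also equivalent to $(\gamma_{A_1}(a_1), \dotsc, \gamma_{A_n}(a_n)) \in (\gamma \bullet I)(R)$ and hence to $(\sigma_{A_1}(\gamma_{A_1}(a_1)), \dotsc, \sigma_{A_n}(\gamma_{A_n}(a_n))) \in (\sigma \bullet (\gamma \bullet I))(R)$; since $(\sigma \circ \gamma)_{A_i}(a_i) = \sigma_{A_i}(\gamma_{A_i}(a_i))$ and each component map is surjective, so that every tuple of $\U(A_1) \times \dotsb \times \U(A_n)$ arises this way, the two relations contain exactly the same tuples.

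I do not expect a genuine obstacle here, as the proposition's preamble anticipates; the only point calling for a moment's care is the relational case, where the action on a relation is specified only implicitly, by a biconditional, rather than by an explicit formula. There I would stress that surjectivity of the component bijections is what makes that biconditional pin down the relation uniquely, and I would be careful to compose the biconditionals in the correct order. The functional case is then a direct computation whose one subtlety is the bookkeeping that the inverse of $\sigma \circ \gamma$ at a sort $\theta$ is $\gamma_\theta^{-1} \circ \sigma_\theta^{-1}$, not $\sigma_\theta^{-1} \circ \gamma_\theta^{-1}$.
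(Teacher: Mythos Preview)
Your proposal is correct and is precisely the kind of direct unwinding the paper has in mind: the paper does not actually prove this proposition but simply remarks that it ``trivially fall[s] out of the definitions,'' and your symbol-by-symbol verification on functional and relational symbols is the natural way to make that remark explicit. Your care about surjectivity in the relational case and about the order of inverses in the functional case is appropriate, though the paper evidently regards these points as routine enough not to spell out.
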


The following corollary is immediate from the above two propositions.

\begin{corollary}
	Let $P$ be a MSFMF problem.
	The action of $\DomPerm(P)$ on the set of interpretations is a group action.
\end{corollary}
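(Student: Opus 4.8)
The plan is to verify directly that the pair $(\DomPerm(P), \bullet)$ satisfies the definition of a group action on the set of interpretations of $P$, reading off each requirement from the two preceding propositions. First I would confirm that $\bullet$ is a well-defined binary operation of the right type, i.e.\ that it really maps $\DomPerm(P) \times \mathrm{Interp}(P)$ into $\mathrm{Interp}(P)$: a domain permutation $\sigma$ is by definition a domain bijection between $\U$ and $\U$, so the ``Action of a Domain Bijection'' construction applied with $\U' = \U$ produces, for each interpretation $I$ of $P = (\Sigma, \Gamma, \U)$, an interpretation $\sigma \bullet I$ of $(\Sigma, \Gamma, \U)$, i.e.\ again an interpretation of $P$. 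This is the only point that needs a remark; everything else is bookkeeping against the abstract definition.

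Next I would recall the definition of group action, which demands (i) that $G$ is a group, (ii) $g_1 \bullet (g_2 \bullet x) = (g_1 g_2) \bullet x$ for all $g_1, g_2 \in G$ and $x \in X$, and (iii) $e \bullet x = x$ for all $x \in X$. Requirement (i) is exactly the statement of the first of the two preceding propositions: $\DomPerm(P)$ is a group under composition, with the componentwise-identity domain permutation $\id$ as its identity element. Requirements (ii) and (iii) are precisely the two bullet points of the second preceding proposition, namely $(\sigma \circ \gamma) \bullet I = \sigma \bullet (\gamma \bullet I)$ and $\id \bullet I = I$, with $X$ taken to be the set of interpretations of $P$ and the group operation being composition of domain permutations. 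Putting these three facts together yields the corollary immediately.

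Since the two propositions are cited as given, there is essentially no obstacle here: the only thing to notice is that the codomain of $\bullet$ is the same set of interpretations we started with (so that iterated application makes sense) and that the associativity/identity axioms in the group-action definition coincide verbatim with the claims already established. If one wanted a fully self-contained argument instead, one would unfold the componentwise definitions --- checking $(\sigma \circ \gamma)_\theta = \sigma_\theta \circ \gamma_\theta$ on each domain and then chasing the defining equations for $(\sigma \bullet I)(f)$ and $(\sigma \bullet I)(R)$ through the composition --- but given the preceding propositions this routine computation can be omitted.
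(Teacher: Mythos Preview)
Your proposal is correct and matches the paper's approach exactly: the paper simply states that the corollary ``is immediate from the above two propositions,'' and you have spelled out precisely why, verifying well-definedness of $\bullet$ and then reading off the two group-action axioms from those propositions.
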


\subsection{Framework of Extended MSFMF Problems}

Our ultimate goal is to unify these various definitions of symmetry in finite model finding and Cohen et al.'s symmetry framework for CSPs.
First we will relate the definitions of symmetry in finite model finding just within the context of finite model finding, then consider how they integrate into Cohen et al.'s symmetry framework.

While not strictly allowed in first-order logic, it is useful to allow domain elements as terms.
For example, if a MSFMF problem $P$ contains a sort $A$, a functional symbol $f: A \to A$, and the domain assignment $\U$ has $\U(A) = \set{a_1, \dotsc, a_n}$, it would be allowed to write $\forall x: A, \, f(x) \neq x \lor x = a_3$ as a formula, with the domain element $a_3$ appearing as a term of sort $A$.
To make this distinct from ``pure" first-order logic where this is prohibited, we call such terms \emph{extended}.

\begin{definition}[Extended Many-Sorted Finite Model Finding Problem]
	Let $\Sigma$ be a signature and $\U$ a domain assignment for $\Sigma$.
	An \emph{extended term} is constructed in the same way as terms, but for each sort $\theta$ it allows every domain value $v \in \U(\theta)$ to be an extended term of sort $\theta$.
	An \emph{extended formula} is defined analogously to formulas, but with extended terms rather than terms.
	Similarly, an \emph{extended many-sorted finite model-finding problem} (eMSFMF), an \emph{interpretation} thereof, and the \emph{satisfies} relation are all defined analogously to their pure MSFMF counterparts.
	Semantically, domain elements evaluate to themselves.
\end{definition}

There are many reasons to allow extended problems.
As we will see later, it can be used to add symmetry breaking constraints without straying far from first-order logic, and also facilitates an easier discussion on how to combine symmetry breaking strategies.
Additionally, extended terms can be used to represent partial interpretations as in Torlak's work \cite{torlak_constraint_2009} or ground formulas that have expanded their quantifiers as in Zhang's work \cite{zhang_constructing_1996}.
Solvers such as Falcon \cite{zhang_constructing_1996} and Fortress \cite{fitzgerald_finite_2016} treat domain elements as terms for convenience and simplicity.

It is extremely important to note that the Value Permutation theorem does not necessarily hold when considering an extended MSFMF problem, since values can occur within the formula in ways that make them non-interchangeable.
Thus, in eMSFMF problems it no longer suffices to consider all domain permutations as being symmetries (as in pure problems), but rather some subset of these permutations are symmetries.
Guided by the approach taken by Cohen et al. \cite{cohen_symmetry_2006}, we start with a very general definition of symmetry as those domain permutations that preserve solutions.

\begin{definition}[Domain Symmetry, Isomorphic Interpretations]
	Let $P = (\Sigma, \Gamma, \U)$ be an eMSFMF problem.
	A domain permutation $\sigma$ of $P$ is a \emph{domain symmetry} if for every interpretation $I$, $(\sigma \bullet I) \models \Gamma$ exactly when $I \models \Gamma$.
	We will denote the set of all domain symmetries of $P$ by $\DomSym(P)$.
	Two interpretations $I$ and $I'$ of $P$ are \emph{isomorphic} if $I' = \sigma \bullet I$ for some domain symmetry $\sigma$ of $P$.
\end{definition}

\begin{proposition}
	Let $P$ be an eMSFMF problem.
	The set $\DomSym(P)$ of domain symmetries forms a group under function composition.
\end{proposition}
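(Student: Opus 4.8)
The plan is to verify directly that $\DomSym(P)$ satisfies the three subgroup conditions of the Definition of Subgroup, viewed as a subset of the group $\DomPerm(P)$ (which the excerpt already established to be a group under composition). Since the three conditions are: nonemptiness, closure under composition, and closure under inverses, I would organize the proof as three short claims. Throughout, I would rely on the earlier Proposition stating that $(\sigma \circ \gamma) \bullet I = \sigma \bullet (\gamma \bullet I)$ and $\id \bullet I = I$ for domain permutations $\sigma, \gamma$ and interpretations $I$, since this is what lets compositions of domain permutations interact correctly with the $\bullet$ action.

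First I would observe that the identity domain permutation $\id$ lies in $\DomSym(P)$: since $\id \bullet I = I$ for every interpretation $I$, trivially $(\id \bullet I) \models \Gamma$ exactly when $I \models \Gamma$, so $\id$ is a domain symmetry and $\DomSym(P)$ is nonempty. Next, for closure under composition, take $\sigma, \gamma \in \DomSym(P)$ and any interpretation $I$. Then $(\sigma \circ \gamma) \bullet I = \sigma \bullet (\gamma \bullet I)$. Because $\gamma$ is a domain symmetry, $(\gamma \bullet I) \models \Gamma$ iff $I \models \Gamma$; because $\sigma$ is a domain symmetry, $\sigma \bullet (\gamma \bullet I) \models \Gamma$ iff $(\gamma \bullet I) \models \Gamma$. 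Chaining these two equivalences gives $((\sigma \circ \gamma) \bullet I) \models \Gamma$ iff $I \models \Gamma$, so $\sigma \circ \gamma \in \DomSym(P)$.

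The only step requiring a little care is closure under inverses. Given $\sigma \in \DomSym(P)$, I need $\sigma^{-1} \in \DomSym(P)$; that is, for every interpretation $I$, $(\sigma^{-1} \bullet I) \models \Gamma$ iff $I \models \Gamma$. Here I would use a substitution trick: let $J = \sigma^{-1} \bullet I$, so that $\sigma \bullet J = \sigma \bullet (\sigma^{-1} \bullet I) = (\sigma \circ \sigma^{-1}) \bullet I = \id \bullet I = I$. Since $\sigma$ is a domain symmetry applied to the interpretation $J$, we have $(\sigma \bullet J) \models \Gamma$ iff $J \models \Gamma$, i.e. $I \models \Gamma$ iff $(\sigma^{-1} \bullet I) \models \Gamma$. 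This is exactly the claim, so $\sigma^{-1}$ is a domain symmetry. A subtle point worth a sentence is that the equivalence in the definition of domain symmetry is required to hold for \emph{every} interpretation, so instantiating it at the particular interpretation $J$ is legitimate — this universally quantified formulation is what makes the inverse argument go through cleanly without needing an analogue of Lemma~\ref{lemma-closed-permutation}. Having checked all three conditions, $\DomSym(P) \le \DomPerm(P)$, and since $\DomPerm(P)$ is a group, $\DomSym(P)$ is itself a group under composition, completing the proof. The main (and only mild) obstacle is getting the inverse case framed correctly via the substitution $J = \sigma^{-1}\bullet I$; everything else is a direct unwinding of definitions.
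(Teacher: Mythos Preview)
Your proof is correct. The paper does not supply an explicit proof for this proposition; it is stated and left to the reader, presumably because the argument is routine and mirrors the earlier proof that $\SolSym(P)$ is a subgroup of $S_{\Bindings(P)}$. Your write-up is exactly the natural argument and would slot in well.

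One small comparison worth noting: in the analogous $\SolSym(P)$ proof the paper handles closure under inverses by contradiction, invoking the fact that a solution symmetry maps non-solutions to non-solutions (Lemma~\ref{lemma-closed-permutation}). You instead exploit that the definition of domain symmetry is already a biconditional (``$(\sigma\bullet I)\models\Gamma$ exactly when $I\models\Gamma$''), which lets you run the direct substitution $J=\sigma^{-1}\bullet I$ without any auxiliary lemma. You correctly flag this yourself. Either route works here; yours is slightly cleaner given how domain symmetry is defined.
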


The Value Permutation Theorem can be rephrased as stating that in a pure MSFMF problem, every domain permutation is a domain symmetry.
Thus, domain symmetries encompass the definitions of symmetry found in the works of Peltier \cite{peltier_new_1998} and Claessen and S\"orenson \cite{claessen_new_2003}.
It also matches the definition of value symmetry by Torlak \cite{torlak_constraint_2009} as permutations of the values that preserve solutions.

Next, we define a set of symmetries that preserves constraints as a subgroup of the solution-preserving symmetries.

\begin{definition}[Action of a Domain Permutation on Formulas]
	Let $P = (\Sigma, \Gamma, \U)$ be an eMSFMF problem, and let $\sigma$ be a domain permutation of $P$.
	We define the action of $\sigma$ on a formula $\Phi \in \Gamma$ as producing the formula $\sigma \bullet \Phi$ obtained by replacing each occurrence of all domain elements in $\Phi$ by applying the appropriate permutation in $\sigma$.
	That is, where a domain element $d$ appears in $\Phi$ as a term of sort $\theta$, in $\sigma \bullet \Phi$ this term is replaced with $\sigma_\theta(d)$.
	We define the action of $\sigma$ on $\Gamma$ by $\sigma \bullet \Gamma = \displaystyle\bigcup_{\Phi \in \Gamma} \sigma \bullet \Phi$.
\end{definition}

\begin{definition}[Constraint Domain Symmetry]
	Let $P = (\Sigma, \Gamma, \U)$ be an eMSFMF problem.
	A domain permutation $\sigma$ of $P$ is a \emph{constraint domain symmetry} if $(\sigma \bullet \Gamma) = \Gamma$.
	We will denote the set of all constraint domain symmetries of $P$ by $\ConDomSym(P)$.
\end{definition}

Since partial instantiations and ground formulas can be represented by extended terms, constraint domain symmetries capture the definitions of Zhang \cite{zhang_constructing_1996}, Audemard and Benhamou \cite{audemard2001symmetry}, and Baumgartner et al  \cite{baumgartner_computing_2009}.
So these definitions, domain symmetry and constraint domain symmetry, are sufficiently general to encompass the previous definitions of symmetry in finite model finding literature.
Like Cohen et al., we can view the constraint symmetries as a subgroup of the more general class of symmetries.
The following theorem will be concluded as a corollary later.

\begin{theorem}
	\label{theorem-con-dom-sym-subgroup}
	For any eMSFMF problem $P$, $\ConDomSym(P)$ is a subgroup of $\DomSym(P)$.
\end{theorem}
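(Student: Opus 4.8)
The plan is to follow the same two-step strategy used for the CSP analogue (the theorem that $\ConSym(P)$ is a subgroup of $\SolSym(P)$): first establish the inclusion $\ConDomSym(P) \subseteq \DomSym(P)$, and then check the subgroup criterion for $\ConDomSym(P)$ directly. Since $\DomSym(P)$ is already known to be a group under composition, these two facts together yield that $\ConDomSym(P)$ is a subgroup of $\DomSym(P)$.

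The heart of the argument is a semantic compatibility lemma: for every domain permutation $\sigma$ of $P$, every interpretation $I$, and every extended formula $\Phi$ over the signature, $I \models \Phi$ if and only if $(\sigma \bullet I) \models (\sigma \bullet \Phi)$. This is the extended-problem counterpart of the Value Relabeling Theorem (Theorem~\ref{value-relabelling}); intuitively, relabeling the values inside the interpretation and relabeling the matching domain-element terms inside the formula cancel out. I would prove it by structural induction, first on extended terms --- showing that $\sigma \bullet I$ evaluates the term $\sigma \bullet t$ of sort $\theta$ to $\sigma_\theta$ applied to the value of $t$ under $I$, carrying along a variable assignment $\mu$ and its relabeled version $\sigma \circ \mu$ --- and then on extended formulas. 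The atomic and Boolean-connective cases fall straight out of the definitions of the two actions; the only place any care is needed is the quantifier case, where bijectivity of each $\sigma_\theta$ guarantees that quantifying over $\U(\theta)$ on one side corresponds exactly to quantifying over $\U(\theta)$ on the other. This lemma is the step I expect to carry the real weight --- not because it is deep, but because it is the one piece that is not purely formal bookkeeping.

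Granting the lemma, the inclusion is short. If $\sigma \in \ConDomSym(P)$, then $\sigma \bullet \Gamma = \Gamma$, so for any interpretation $I$ we get $I \models \Gamma$ iff $(\sigma \bullet I) \models (\sigma \bullet \Gamma)$ (applying the lemma formula-by-formula) iff $(\sigma \bullet I) \models \Gamma$ (rewriting $\sigma \bullet \Gamma$ as $\Gamma$). Hence $\sigma \in \DomSym(P)$.

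For the subgroup conditions I would first note that the action of domain permutations on formulas is itself a group action, i.e.\ $(\sigma \circ \tau) \bullet \Phi = \sigma \bullet (\tau \bullet \Phi)$ and $\id \bullet \Phi = \Phi$, which follows directly from the definition of the action on formulas together with the definition of composition of domain permutations (substitution of domain elements composes, and substituting by the identity changes nothing). Then: the identity domain permutation satisfies $\id \bullet \Gamma = \Gamma$, so $\ConDomSym(P)$ is non-empty; if $\sigma \bullet \Gamma = \Gamma$ and $\tau \bullet \Gamma = \Gamma$ then $(\sigma \circ \tau) \bullet \Gamma = \sigma \bullet (\tau \bullet \Gamma) = \sigma \bullet \Gamma = \Gamma$, giving closure under composition; and if $\sigma \bullet \Gamma = \Gamma$ then $\sigma^{-1} \bullet \Gamma = \sigma^{-1} \bullet (\sigma \bullet \Gamma) = (\sigma^{-1} \circ \sigma) \bullet \Gamma = \id \bullet \Gamma = \Gamma$, giving closure under inverses. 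Together with the inclusion $\ConDomSym(P) \subseteq \DomSym(P)$, the subgroup criterion then gives $\ConDomSym(P) \le \DomSym(P)$. (I expect the author to package the compatibility lemma as the standalone technical result promised earlier in the text and then read this theorem off from it as a corollary.)
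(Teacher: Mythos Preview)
Your proof is correct, but the paper takes a different route. Rather than proving the semantic compatibility lemma $I \models \Phi \iff (\sigma \bullet I) \models (\sigma \bullet \Phi)$ directly by induction on extended formulas as you do, the paper defers the theorem and derives it as a corollary of the \emph{functional CSP} machinery: it shows that $\ConDomSym(P)^\F$ is a subgroup of $\ConSym(P_{\FCSP})$ (because preserving $\Gamma$ preserves the constraints and hence the hyperedges of the microstructure complement), then uses the already-established CSP result $\ConSym(P_{\FCSP}) \le \SolSym(P_{\FCSP})$, and finally pulls back via Proposition~\ref{prop-functional-extension} to conclude $\ConDomSym(P) \le \DomSym(P)$. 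Your approach is more elementary and self-contained --- it requires no detour through the CSP framework and would stand on its own in a paper that did not develop that framework --- whereas the paper's approach is chosen precisely because a stated goal of the thesis is to embed FMF symmetry into the CSP framework, so reusing that embedding here reinforces the unification. Your parenthetical guess that the author would package the compatibility lemma as the promised standalone result is reasonable but not what actually happens.
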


\subsection{Unused and Interchangeable Values}

Now that we have this rigorous notion of domain symmetry, it is worth considering one class of easily detectable domain symmetries.
We mentioned that some domain permutations may not be domain symmetries since values may appear in formulas in ways that make them non-interchangeable with other values.
This suggests that any domain permutation that acts as the identity on all values that appear in the formulas is a domain symmetry.

\begin{definition}
	Let $P = (\Sigma, \Gamma, \U)$ be an eMSFMF problem.
	A domain permutation $\sigma$ is \emph{occurrence-fixing} if, for every sort $\theta$ and value $v \in \U(\theta)$, if $v$ appears in $\Gamma$ as a term of sort $\theta$ then $\sigma_\theta(v) = v$ (i.e. it does not change values appearing in the formulas).
\end{definition}

It follows trivially from the definition that the set of occurrence-fixing domain permutations is a subgroup of the constraint domain symmetries.
Since they leave all values that occur in the formulas unchanged, applying them to the formulas leaves the formulas unchanged.

\begin{proposition}
	\label{prop-occurence-fixing-subgroup}
	For any eMSFMF problem $P$, the set of occurrence-fixing domain permutations of $P$ is a subgroup of $\ConDomSym(P)$.
\end{proposition}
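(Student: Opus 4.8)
The plan is to verify the subgroup criterion directly. Write $H$ for the set of occurrence-fixing domain permutations of $P$. I will show (a) that $H \subseteq \ConDomSym(P)$, and (b) that $H$ is closed under composition and inverses and contains the identity; since $\ConDomSym(P)$ is itself a group under function composition (as recorded by Theorem \ref{theorem-con-dom-sym-subgroup}, which places it as a subgroup of $\DomSym(P)$), (a) and (b) together yield that $H$ is a subgroup of $\ConDomSym(P)$.

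For (a), let $\sigma \in H$ and fix an arbitrary $\Phi \in \Gamma$. By the definition of the action on formulas, $\sigma \bullet \Phi$ is obtained from $\Phi$ by replacing each occurrence of a domain element $d$ — say occurring as a term of sort $\theta$ — with $\sigma_\theta(d)$. Every such $d$ occurs in $\Gamma$ as a term of sort $\theta$, so being occurrence-fixing forces $\sigma_\theta(d) = d$; hence every substitution is trivial and $\sigma \bullet \Phi = \Phi$. Since $\Phi$ was arbitrary, $\sigma \bullet \Gamma = \bigcup_{\Phi \in \Gamma} \sigma \bullet \Phi = \Gamma$, so $\sigma \in \ConDomSym(P)$.

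For (b): the identity domain permutation $\id$ lies in $H$, since $\id_\theta(v) = v$ for every sort $\theta$ and value $v$. If $\sigma, \gamma \in H$ and $v$ occurs in $\Gamma$ as a term of sort $\theta$, then $\gamma_\theta(v) = v$ and (applying the condition again to the same occurrence) $\sigma_\theta(v) = v$, so $(\sigma \circ \gamma)_\theta(v) = \sigma_\theta(\gamma_\theta(v)) = v$; thus $\sigma \circ \gamma \in H$. If $\sigma \in H$ and $v$ occurs in $\Gamma$ as a term of sort $\theta$, then $\sigma_\theta(v) = v$, and applying $\sigma_\theta^{-1}$ to both sides gives $\sigma_\theta^{-1}(v) = v$, so $\sigma^{-1} \in H$. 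Hence $H$ is a subgroup of $\DomPerm(P)$, and with (a) it is a subgroup of $\ConDomSym(P)$.

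I expect no real obstacle here; the statement is essentially a bookkeeping check. The one point deserving a sentence of care is in (a): the action on formulas is a pure substitution, so it never introduces a domain element that did not already occur in $\Gamma$, which is why the behaviour of $\sigma$ on values not occurring in $\Gamma$ is irrelevant. A second minor subtlety worth noting is that, since distinct sorts may share domain values, being occurrence-fixing constrains $\sigma_\theta$ only at values occurring specifically as terms of sort $\theta$ — but that is exactly the data the action on formulas consults, so the argument goes through verbatim.
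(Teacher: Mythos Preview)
Your proof is correct and matches the paper's approach: the paper does not give a formal proof but merely remarks that the statement follows trivially, since occurrence-fixing permutations leave all values appearing in the formulas unchanged and hence fix $\Gamma$. Your writeup simply expands this into the explicit subgroup check.
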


Another way to think about this proposition is that values that do not appear in the formulas are ``interchangeable".

\begin{definition}[Solely Permute, Value-Interchangeable Set]
	Let $(\Sigma, \Gamma, \U)$ be an eMSFMF problem and $\theta \in \Theta(\Sigma)$.
	A domain permutation $\sigma$ \emph{solely permutes} a subset $X \subseteq \U(\theta)$ if $\sigma_\theta(v) = v$ for all $v \in \U(\theta) \setminus X$ and $\sigma_{\theta'} = \id_{\U(\theta')}$ for all sorts $\theta' \neq \theta$ (i.e. $\sigma$ acts as the identity everywhere but on $X$).
	A set of $X \subseteq \U(\theta)$ is \emph{value-interchangeable} for $\theta$ if any domain permutation $\sigma$ of $P$ that solely permutes $X$ is a domain symmetry.
\end{definition}

So Proposition \ref{prop-occurence-fixing-subgroup} can be rephrased as stating that for any sort $\theta$, the set of values in $\U(\theta)$ that do not appear in formulas is value-interchangeable.
Depending on the specific $\Gamma$, there may be more sets of interchangeable values.

Recall that domain symmetries can be composed together and yield another symmetry.
Therefore we can compose together symmetries that each solely permute a specific set of values to yield more complicated symmetries.

\begin{proposition}
	\label{prop-value-interchange}
	Let $(\Sigma, \Gamma, \U)$ be an eMSFMF problem.
	For each sort $\theta$, let $X_\theta$ be a value-interchangeable set for $\theta$.
	If $\sigma$ is a domain permutation such that, for every sort $\theta$ and $v \in \U(\theta) \setminus X_\theta$, $\sigma_\theta(v) = v$, then $\sigma$ is a domain symmetry.
\end{proposition}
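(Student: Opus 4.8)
The plan is to decompose the given domain permutation $\sigma$ into a composition of simpler domain permutations, each of which solely permutes one of the value-interchangeable sets $X_\theta$, and then invoke the group structure of $\DomSym(P)$ (from the Proposition immediately following the definition of domain symmetry) to conclude that $\sigma$ itself is a domain symmetry. Concretely, enumerate the sorts as $\theta_1, \dotsc, \theta_m$ and, for each $i$, let $\sigma^{(i)}$ be the domain permutation whose component on $\U(\theta_i)$ equals $\sigma_{\theta_i}$ and whose component on every other sort is the identity. The hypothesis that $\sigma_\theta$ fixes every $v \in \U(\theta) \setminus X_\theta$ guarantees that each $\sigma^{(i)}$ solely permutes $X_{\theta_i}$ in the sense of the \emph{Solely Permute} definition, hence is a domain symmetry by the definition of $X_{\theta_i}$ being value-interchangeable.

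First I would verify the factorization $\sigma = \sigma^{(1)} \circ \sigma^{(2)} \circ \dotsb \circ \sigma^{(m)}$. By the definition of composition of domain permutations, the component of the right-hand side on a fixed sort $\theta_j$ is $\sigma^{(1)}_{\theta_j} \circ \dotsb \circ \sigma^{(m)}_{\theta_j}$; all of these factors are the identity except the $j$-th, which is $\sigma_{\theta_j}$, so the composite component is exactly $\sigma_{\theta_j}$. Since this holds for every sort, the two domain permutations agree componentwise and are therefore equal. Next, since each $\sigma^{(i)} \in \DomSym(P)$ and $\DomSym(P)$ is closed under composition (being a group), the composite $\sigma$ lies in $\DomSym(P)$, i.e.\ $\sigma$ is a domain symmetry.

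The only genuinely delicate point — and the step I would be most careful about — is confirming that each $\sigma^{(i)}$ really does meet the letter of the \emph{solely permutes} definition, which has \emph{two} requirements: that $\sigma^{(i)}_{\theta_i}(v) = v$ for all $v \in \U(\theta_i) \setminus X_{\theta_i}$, and that $\sigma^{(i)}_{\theta'} = \id_{\U(\theta')}$ for every sort $\theta' \neq \theta_i$. The second requirement holds by our construction of $\sigma^{(i)}$. The first requirement is precisely the hypothesis imposed on $\sigma$ in the statement, applied at sort $\theta_i$, since $\sigma^{(i)}_{\theta_i} = \sigma_{\theta_i}$ by definition. Once both are checked, value-interchangeability of $X_{\theta_i}$ applies directly. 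Everything else is routine bookkeeping about componentwise composition of domain permutations, so this is a short argument; no substantial obstacle is expected.
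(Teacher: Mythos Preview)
Your proposal is correct and follows precisely the approach the paper indicates: the paper motivates the proposition with the remark ``domain symmetries can be composed together \dots\ we can compose together symmetries that each solely permute a specific set of values,'' and your decomposition $\sigma = \sigma^{(1)} \circ \dotsb \circ \sigma^{(m)}$ together with closure of $\DomSym(P)$ under composition is exactly this argument made rigorous.
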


\subsection{FMF Symmetry in the CSP Framework}

With these definitions of domain symmetry and constraint domain symmetry to unify the symmetry definitions throughout finite model finding, we now move to relate these notions of symmetry to symmetries in constraint satisfaction problems.
A natural approach is to view finite model finding as a constraint satisfaction problem via a transformation between the two kinds problems.

One such transformation is specified by Zhang \cite{zhang_constructing_1996}.
The idea is to view a function-argument pair $f(a_1, \dotsc, a_n)$, where $f$ is a functional symbol and $a_1, \dotsc, a_n$ are appropriately sorted values, as a single \emph{cell} that needs to be assigned a value.
Thus we can make these function-argument pairs into the variables of the CSP.
Essentially, this CSP formulation forgets that there are functions and relations at all, and instead turns each function-argument or relation-argument pair into a variable, for which an appropriate value is searched.

\begin{definition}[Flat CSP]
	Let $P = (\Sigma, \Gamma, \U)$ be an eMSFMF problem.
	The \emph{flat CSP} of $P$ is the CSP $(X, D, C)$ constructed as follows.
	\begin{itemize}
		\item For each functional symbol $f: A_1 \times \dotsb \times A_n \to B$ and values $a_1 \in \U(A_1), \dotsc, a_n \in \U(A_n)$, $X$ contains the function-argument pair $f(a_1, \dotsc, a_n)$ as a variable, and $D(f(a_1, \dotsc, a_n)) = \U(B)$.
		\item For each relational symbol $R: A_1 \times \dotsb \times A_n \to \Bool$ and values $a_1 \in \U(A_1), \dotsc, a_n \in \U(a_n)$, $X$ contains the relation-argument pair $R(a_1, \dotsc, a_n)$ as a variable, and $D(R(a_1, \dotsc, a_n)) = \Bool$.
		\item $C$ contains one constraint $C_\phi$ for each formula $\phi$.
		The scope of $C_\phi$ consists of the function-argument and relation-argument pairs whose function or relation occurs in $\phi$.
		$C_\phi$ allows an assignment for its scope if and only if $C_\phi$ evaluates to $\True$ under this assignment.
	\end{itemize}
\end{definition}

Unfortunately, this translation does not preserve the intuitive notion of ``value symmetry".
Recall that CSP symmetries are permutations on variable-value pairs.
We would hope that domain symmetries in an MSFMF problem would correspond to value symmetries in its CSP, since domain symmetries arise from permutations of values.
However, consider an MSFMF problem $(\Sigma, \Gamma, \U)$ that has a function symbol $f: A \to A$ and a domain symmetry $\sigma$.
Recall how $\sigma$ acts on an interpretation $I$: for $a \in \U(A)$, if $I(f)(a) = y$ then $I(f)( \sigma_A(a) ) = \sigma_A(y)$.
Therefore the natural way to create a variable-value permutation $\sigma^*$ from $\sigma$ would be to have $\sigma^*(f(a), y) = (f(\sigma_A(a)), \sigma_A(y))$.
This is not necessarily a purely value symmetry since it may change the variable in the binding.

We propose an alternative transformation from an eMSFMF problem to a CSP.
The key insight is that while it is usually thought that the domains for variables in a CSP are simple, atomic objects like integers, there is nothing restricting them from being more complicated objects like functions and sets.
Rather than forgetting that a functional symbol $f: A \to B$ represents a function and treating function-argument pairs as individual variables, this transformation makes $f$ into a variable and treats its possible values as functions from $\U(A) \to \U(B)$.
This makes complete assignments of the CSP exactly into interpretations of the eMSFMF problem, and vice versa.

\begin{definition}[Functional CSP]
	Let $P = (\Sigma, \Gamma, \U)$ be an eMSFMF problem.
	For each functional symbol $f: A_1 \times \dotsb \times A_n \to B$, let $D_f$ consist of the set of functions from domain $\U(A_1) \times \dotsb \times \U(A_n)$ to codomain $\U(B)$.
	For each relational symbol $R: A_1 \times \dotsb \times A_n \to \Bool$, let $D_R$ consist of the set of $n$-ary relations $\mathcal R \subseteq \U(A_1) \times \dotsb \times \U(A_n)$.
	The \emph{functional CSP} of $P$ is the CSP $P_{\FCSP} = (X, D, C)$ where
	\begin{enumerate}
		\item $X$ is the set of functional and relational symbols of $\Sigma$,
		\item $D$ assigns each functional symbol $f$ the set of possible values $D_f$ and assigns each relational symbol $R$ the set of possible values $D_R$, and
		\item $C$ contains one constraint $C_\phi$ for each formula $\phi$, where the scope of $C_\phi$ is consists of the functional symbols appearing in $\phi$ and $C_\phi$ allows an assignment for its scope if and only if $C_\phi$ evaluates to $\True$ under this assignment.
	\end{enumerate}
\end{definition}

We now define how to view a domain permutation as a permutation on variable-value bindings in the functional CSP.
Note that this functional extension of a domain permutation leaves the CSP variables fixed, so this transformation does not have the same limitation as the flat CSP transformation.

\begin{definition}[Functional Extension of a Domain Permutation]
	Let $P = (\Sigma, \Gamma, \U)$ be an eMSFMF problem let $\sigma$ be a domain permutation of $P$.
	The \emph{functional extension} $\sigma^\F$ of $\sigma$ is a permutation $\sigma^\F: \Bindings(P_{\FCSP}) \to \Bindings(P_{\FCSP})$ defined as follows.
	
	Let $f: A_1 \times \dotsb \times A_n \to B$ be a functional symbol and $F: \U(A_1) \times \dotsb \times \U(A_n) \to \U(B)$ a concrete function.
	$\sigma^\F$ maps the binding $(f, F)$ to the binding $(f, F')$, where $F': \U(A_1) \times \dotsb \times \U(A_n) \to \U(B)$ is defined by
	\[ F'(a_1, \dotsc, a_n) = \sigma_B( F( \sigma_{A_1}^{-1}(a_1), \dotsc, \sigma_{A_n}^{-1}(a_n) ) ) .\]
	Equivalently, if $F(a_1, \dotsc, a_n) = b$, then $F'(\sigma_{A_1}(a_1), \dotsc, \sigma_{A_n}(a_n)) = \sigma_B(b)$.
	
	Let $R: A_1 \times \dotsb \times A_n \to \Bool$ be a relational symbol and $\R \subseteq \U(A_1) \times \dotsb \times \U(A_n)$ a concrete relation.
	$\sigma^\F$ maps the binding $(R, \R)$ to the binding $(R, \R')$, where $\R' \subseteq \U(A_1) \times \dotsb \times \U(A_n)$ is defined by
	\[ (a_1, \dotsc, a_n) \in \R \iff ( \sigma_{A_1}(a_1), \dotsc, \sigma_{A_n}(a_n) ) \in \R' .\]
	
	For a set $X$ of domain permutations, we will let $X^\F$ denote the set of functional extensions of all domain permutations in $X$.
\end{definition}

An interpretation of an MSFMF problem is a complete assignment to the functional CSP, and vice versa.
We do not mean that one merely induces another, but the interpretation itself is both kinds of mathematical objects.
Moreover, an interpretation satisfies the MSFMF problem if and only if it is a solution to the functional CSP.
It is plainly visible how the functional extension is designed to mimic the action of the domain permutation on an interpretation.
The following proposition follows easily from those respective definitions.

\begin{proposition}
	\label{prop-functional-extension}
	Let $P = (\Sigma, \Gamma, \U)$ be an eMSFMF problem let $\sigma$ be a domain permutation of $P$.
	Let $I$ be an interpretation of $\sigma$.
	We have that $\sigma \bullet I = \sigma^\F \bullet I$.
	That is, the interpretation $\sigma \bullet I$ created by acting on $I$ by $\sigma$ (treating $I$ as an interpretation and $\sigma$ as a domain permutation) is the same object as the complete assignment $\sigma^\F \bullet I$ created by acting on $I$ using the functional extension of $\sigma$ (treating $I$ as a complete assignment to the functional CSP and $\sigma^\F$ as a permutation on the bindings of the CSP).
\end{proposition}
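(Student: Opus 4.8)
The plan is to prove the equality by a direct definition-chase, using the fact that $\sigma \bullet I$ and $\sigma^\F \bullet I$ are \emph{a priori} the same type of object: each is a map sending every functional symbol of $\Sigma$ to a concrete function and every relational symbol to a concrete relation, so it suffices to check that the two maps agree symbol-by-symbol. The first step is to spell out the identification of an interpretation of $P$ with a complete assignment of $P_{\FCSP}$: as a set of bindings, $I$ is $\set{(f, I(f)) : f \in \mathscr F(\Sigma)} \cup \set{(R, I(R)) : R \in \mathscr R(\Sigma)}$. Applying $\sigma^\F$ pointwise (per the Action of a Permutation on a Set of Bindings) replaces each $(f, I(f))$ by $(f, F'_f)$ and each $(R, I(R))$ by $(R, \R'_R)$, where $F'_f$ and $\R'_R$ are as in the definition of the functional extension. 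Since $\sigma^\F$ fixes the first coordinate of every binding, the result is again a complete assignment, i.e.\ again an interpretation, with $(\sigma^\F \bullet I)(f) = F'_f$ and $(\sigma^\F \bullet I)(R) = \R'_R$.

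Next I would handle functional symbols. Fix $f : A_1 \times \dotsb \times A_n \to B$. By the definition of the functional extension applied to the binding $(f, I(f))$, the function $F'_f$ satisfies $F'_f(a_1, \dotsc, a_n) = \sigma_B(I(f)(\sigma_{A_1}^{-1}(a_1), \dotsc, \sigma_{A_n}^{-1}(a_n)))$ for all arguments. This is verbatim the defining equation of $(\sigma \bullet I)(f)$ in the Action of a Domain Bijection on an Interpretation, read in the special case $\U = \U'$ so that each $\sigma_\theta$ is a permutation. Hence $(\sigma^\F \bullet I)(f) = F'_f = (\sigma \bullet I)(f)$.

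For relational symbols, fix $R : A_1 \times \dotsb \times A_n \to \Bool$. By the definition of the functional extension applied to $(R, I(R))$, the relation $\R'_R$ is characterized by $(a_1, \dotsc, a_n) \in I(R) \iff (\sigma_{A_1}(a_1), \dotsc, \sigma_{A_n}(a_n)) \in \R'_R$. Because each $\sigma_{A_i}$ is a permutation, the tupling map $(a_1, \dotsc, a_n) \mapsto (\sigma_{A_1}(a_1), \dotsc, \sigma_{A_n}(a_n))$ is a bijection on $\U(A_1) \times \dotsb \times \U(A_n)$, so this biconditional determines $\R'_R$ uniquely; the relation $(\sigma \bullet I)(R)$ is pinned down by exactly the same biconditional, so $(\sigma^\F \bullet I)(R) = \R'_R = (\sigma \bullet I)(R)$. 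Matching both components gives $\sigma \bullet I = \sigma^\F \bullet I$.

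I do not expect a serious obstacle: the whole content of the statement is that the functional extension was \emph{built} to replicate the action of a domain permutation on an interpretation, and unwinding the two definitions makes this transparent. The only point needing care is the bookkeeping in the first paragraph—recognizing that ``interpretation of $P$'' and ``complete assignment of $P_{\FCSP}$'' name the same object, and verifying that the pointwise lift of $\sigma^\F$ leaves the CSP variables fixed, so that the output is again a genuine interpretation rather than merely some set of bindings. Everything else is routine substitution.
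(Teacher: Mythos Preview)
Your proposal is correct and matches the paper's approach: the paper does not actually write out a proof but simply remarks that the proposition ``follows easily from those respective definitions,'' and your definition-chase is precisely that unfolding. The care you take with the bookkeeping (identifying interpretations with complete assignments and noting that $\sigma^\F$ fixes CSP variables) is appropriate and nothing more is needed.
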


We get the following corollary.
\begin{corollary}
	\label{corollary-dom-sym-sol-sym}
	Let $P = (\Sigma, \Gamma, \U)$ be an eMSFMF problem.
	If $\sigma$ is a domain symmetry of $P$, then $\sigma^\F$ is a solution symmetry of $P_{\FCSP}$.
\end{corollary}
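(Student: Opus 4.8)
The plan is to unwind the definitions and then invoke Proposition~\ref{prop-functional-extension}, which already does the substantive work of identifying the action of a domain permutation on an interpretation with the action of its functional extension on the corresponding complete assignment.

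First I would check that $\sigma^\F$ is genuinely a permutation of $\Bindings(P_{\FCSP})$, so that it is even a candidate to be a solution symmetry. This is routine: by its defining formulas, $\sigma^\F$ acts on each binding $(f, F)$ (resp.\ $(R, \R)$) by transporting the function (resp.\ relation) coordinatewise through the $\sigma_\theta$; the map sending $(f, F)$ to $(f, F')$ with $F'(a_1,\dots,a_n) = \sigma_B(F(\sigma_{A_1}^{-1}(a_1),\dots,\sigma_{A_n}^{-1}(a_n)))$ has a two-sided inverse given by the analogous formula built from the inverse domain permutation $\sigma^{-1} = (\sigma_\theta^{-1})_\theta$, and likewise for relational bindings. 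Hence $(\sigma^{-1})^\F$ inverts $\sigma^\F$, so $\sigma^\F \in S_{\Bindings(P_{\FCSP})}$.

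Next I would use the identification, stated just before the corollary, that a complete assignment of $P_{\FCSP}$ \emph{is} an interpretation of $P$, and that such an object is a solution of $P_{\FCSP}$ exactly when it models $\Gamma$ (immediate from the definition of the constraints $C_\phi$ of the functional CSP). So let $Z$ be a solution of $P_{\FCSP}$; viewing $Z$ as an interpretation $I$ of $P$, we have $I \models \Gamma$. By Proposition~\ref{prop-functional-extension}, $\sigma^\F \bullet Z = \sigma \bullet I$. Since $\sigma$ is a domain symmetry and $I \models \Gamma$, the definition of domain symmetry gives $(\sigma \bullet I) \models \Gamma$; and $\sigma \bullet I$ is again an interpretation of $P$, hence a complete assignment of $P_{\FCSP}$, so modeling $\Gamma$ it is a solution of $P_{\FCSP}$. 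Therefore $\sigma^\F$ maps solutions of $P_{\FCSP}$ to solutions of $P_{\FCSP}$, i.e.\ $\sigma^\F \in \SolSym(P_{\FCSP})$.

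There is no serious obstacle here: the content is front-loaded into Proposition~\ref{prop-functional-extension} and the interpretation/complete-assignment correspondence, so the only things requiring care are making the permutation check precise and being explicit that ``satisfies $\Gamma$'' for an interpretation coincides with ``satisfies every constraint $C_\phi$'' for the corresponding complete assignment. One could also note, although it is not needed for the statement, that the argument is symmetric in $\sigma$ and $\sigma^{-1}$, so $\sigma^\F$ additionally maps non-solutions to non-solutions, in accordance with the corollary to Lemma~\ref{lemma-closed-permutation}.
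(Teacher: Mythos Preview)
Your proposal is correct and follows essentially the same approach as the paper: take a solution of $P_{\FCSP}$, identify it with an interpretation $I$ of $P$, invoke Proposition~\ref{prop-functional-extension} to get $\sigma^\F \bullet I = \sigma \bullet I$, and conclude via the domain-symmetry property that the result again satisfies $\Gamma$ and hence is a solution. Your additional check that $\sigma^\F$ is a permutation of $\Bindings(P_{\FCSP})$ and your remark about non-solutions are extra care the paper leaves implicit, but the core argument is identical.
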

\begin{proof}
	Let $I$ be a solution for $P_{\FCSP}$ (and hence also an interpretation of $P$).
	By Proposition \ref{prop-functional-extension}, $\sigma^\F \bullet I = \sigma \bullet I$.
	Since $\sigma$ is a domain symmetry, $\sigma \bullet I$ satisfies $P$, and hence $\sigma^\F \bullet I = \sigma \bullet I$ is a solution to the functional CSP.
	Therefore $\sigma^F$ is a solution symmetry.
\end{proof}

The following two results also follow quickly from the definitions. 

\begin{lemma}
	\label{lemma-functional-group-axioms}
	Let $P = (\Sigma, \Gamma, \U)$ be an eMSFMF problem.
	If $\sigma$ and $\pi$ are domain symmetries of $P$, then
	\begin{itemize}
		\item $\sigma^\F \circ \pi^\F = (\sigma \circ \pi)^\F$
		\item $(\sigma^\F)^{-1} = (\sigma^{-1})^\F$
	\end{itemize}
\end{lemma}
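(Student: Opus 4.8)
The plan is to verify both identities by unfolding the definition of the functional extension on an arbitrary binding of $P_{\FCSP}$ and checking that the two sides act identically. Since the functional CSP variables are exactly the functional and relational symbols of $\Sigma$, and each functional extension fixes the variable component of every binding, it suffices to show that the value components agree: for each functional symbol $f$ and concrete function $F$, and for each relational symbol $R$ and concrete relation $\mathcal R$, the two sides produce the same transformed function (resp.\ relation).

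First I would handle the composition identity. Take a functional symbol $f : A_1 \times \dotsb \times A_n \to B$ and a concrete function $F$. Applying $\pi^\F$ sends $(f,F)$ to $(f, F')$ with $F'(a_1,\dotsc,a_n) = \pi_B(F(\pi_{A_1}^{-1}(a_1),\dotsc,\pi_{A_n}^{-1}(a_n)))$, and then applying $\sigma^\F$ sends $(f,F')$ to $(f,F'')$ with $F''(a_1,\dotsc,a_n) = \sigma_B(F'(\sigma_{A_1}^{-1}(a_1),\dotsc,\sigma_{A_n}^{-1}(a_n)))$. Substituting the formula for $F'$ and using that $(\sigma \circ \pi)_{A_i}^{-1} = \pi_{A_i}^{-1} \circ \sigma_{A_i}^{-1}$ and $(\sigma \circ \pi)_B = \sigma_B \circ \pi_B$ (which hold because composition of domain permutations is componentwise), one sees $F'' $ is exactly the function produced by $(\sigma \circ \pi)^\F$ acting on $(f,F)$. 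The relational case is the analogous chase through the biconditional defining the transformed relation, again using componentwise composition. Hence $\sigma^\F \circ \pi^\F = (\sigma \circ \pi)^\F$.

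For the inverse identity, I would apply the composition identity just proved together with the fact (from the earlier proposition that $\DomPerm(P)$ is a group under componentwise composition) that $\sigma \circ \sigma^{-1} = \id$, the identity domain permutation. Then $\sigma^\F \circ (\sigma^{-1})^\F = (\sigma \circ \sigma^{-1})^\F = \id^\F$, and $\id^\F$ is the identity permutation on $\Bindings(P_{\FCSP})$ since each identity permutation $\id_{\U(\theta)}$ leaves every function and relation unchanged. Similarly $(\sigma^{-1})^\F \circ \sigma^\F = \id^\F$. Therefore $(\sigma^\F)^{-1} = (\sigma^{-1})^\F$ by uniqueness of inverses. (Alternatively, one could invoke Proposition~\ref{prop-functional-extension}: since $\sigma \bullet I = \sigma^\F \bullet I$ on all interpretations and interpretations are exactly the complete assignments, and since $\sigma^{-1} \bullet (\sigma \bullet I) = I$, the map $(\sigma^{-1})^\F$ inverts $\sigma^\F$ on complete assignments; but this only pins down the behaviour on complete assignments, not all of $2^{\Bindings(P_{\FCSP})}$ or individual bindings, so the direct computation is cleaner.)

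The only mild obstacle is bookkeeping: keeping the argument permutations $\sigma_{A_i}$ straight from the output permutation $\sigma_B$, and being careful that the "equivalently" reformulations in the definition (stated in terms of $\sigma_{A_i}$ rather than $\sigma_{A_i}^{-1}$) are used consistently. There is no conceptual difficulty — both identities are purely formal consequences of the definitions and the componentwise group structure on $\DomPerm(P)$ — so I expect the proof to be short, essentially the two substitution computations above for the functional case plus the parallel remarks for the relational case.
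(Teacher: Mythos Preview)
Your proposal is correct and matches the paper's approach: the paper simply remarks that this lemma ``follow[s] quickly from the definitions'' and gives no further proof, and your direct unfolding of the functional extension on functional and relational bindings is exactly the kind of computation that remark is gesturing at. Your handling of the inverse via the composition identity and $\id^\F$ being the identity on bindings is clean and needs no change.
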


\begin{theorem}
	For a MSFMF instance $P$, $\DomSym(P)^\F$ is a subgroup of $\SolSym(P_{\FCSP})$.
\end{theorem}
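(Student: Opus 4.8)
The plan is to use the subgroup criterion together with the two results that immediately precede the statement. First I would observe that Corollary~\ref{corollary-dom-sym-sol-sym} already gives the inclusion $\DomSym(P)^\F \subseteq \SolSym(P_{\FCSP})$: every functional extension of a domain symmetry of $P$ is a solution symmetry of the functional CSP. Since $\SolSym(P_{\FCSP})$ is itself a subgroup of $S_{\Bindings(P_{\FCSP})}$, it suffices to verify that $\DomSym(P)^\F$ is non-empty and closed under composition and inverses; it will then automatically be a subgroup of $\SolSym(P_{\FCSP})$.

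For non-emptiness, I would note that the identity domain permutation $\id$ is a domain symmetry (by the proposition stating that $\DomSym(P)$ forms a group), and its functional extension $\id^\F$ is the identity permutation on $\Bindings(P_{\FCSP})$, so $\id^\F \in \DomSym(P)^\F$. For closure under composition, take $\sigma^\F, \pi^\F \in \DomSym(P)^\F$ with $\sigma, \pi \in \DomSym(P)$; by Lemma~\ref{lemma-functional-group-axioms} we have $\sigma^\F \circ \pi^\F = (\sigma \circ \pi)^\F$, and since $\DomSym(P)$ is a group, $\sigma \circ \pi \in \DomSym(P)$, whence $(\sigma \circ \pi)^\F \in \DomSym(P)^\F$. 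For closure under inverses, take $\sigma^\F \in \DomSym(P)^\F$; again by Lemma~\ref{lemma-functional-group-axioms}, $(\sigma^\F)^{-1} = (\sigma^{-1})^\F$, and $\sigma^{-1} \in \DomSym(P)$ since $\DomSym(P)$ is a group, so $(\sigma^\F)^{-1} \in \DomSym(P)^\F$.

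Equivalently, one could package the whole argument by observing that $\sigma \mapsto \sigma^\F$ is a group homomorphism from $\DomSym(P)$ into $S_{\Bindings(P_{\FCSP})}$ — this is precisely the content of Lemma~\ref{lemma-functional-group-axioms} together with $\id^\F = \id$ — so its image $\DomSym(P)^\F$ is a subgroup of $S_{\Bindings(P_{\FCSP})}$, and Corollary~\ref{corollary-dom-sym-sol-sym} then places that subgroup inside $\SolSym(P_{\FCSP})$. There is no genuine obstacle here: all the substantive work sits in Corollary~\ref{corollary-dom-sym-sol-sym} and Lemma~\ref{lemma-functional-group-axioms}, and this theorem merely assembles them. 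The only point needing a moment's care is confirming that $\id^\F$ is literally the identity on $\Bindings(P_{\FCSP})$ (directly from the definition of the functional extension), so that non-emptiness — equivalently, unit-preservation of the homomorphism — holds.
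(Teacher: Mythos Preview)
Your proposal is correct and follows essentially the same approach as the paper's proof: invoke Corollary~\ref{corollary-dom-sym-sol-sym} for the inclusion into $\SolSym(P_{\FCSP})$, use Lemma~\ref{lemma-functional-group-axioms} for closure under composition and inverses, and note non-emptiness of $\DomSym(P)$. Your write-up is in fact more careful than the paper's (which is quite terse), and your homomorphism repackaging is a clean way to phrase the same argument.
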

\begin{proof}
	Corollary \ref{corollary-dom-sym-sol-sym} and Lemma \ref{lemma-functional-group-axioms} together show that $\DomSym(P)^\F$ is closed under composition and inverses.
	$\DomSym(P)^\F$ is non-empty since $\DomSym(P)$ is non-empty.
	These are all the requirements for a subgroup.
\end{proof}

Additionally, since constraint domain symmetries preserve the set of constraints of the MSFMF problem, they also preserve the hyperedges in the microstructure complement of the functional CSP.
Therefore we conclude the following, and as promised have Theorem \ref{theorem-con-dom-sym-subgroup} as a corollary.
\begin{theorem}
	For a MSFMF instance $P$, $\ConDomSym(P)^\F$ is a subgroup of $\ConSym(P_{\FCSP})$.
\end{theorem}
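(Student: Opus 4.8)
Proof proposal for the final theorem ($\ConDomSym(P)^\F$ is a subgroup of $\ConSym(P_{\FCSP})$).

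The plan is to mirror the structure of the proof that $\DomSym(P)^\F$ is a subgroup of $\SolSym(P_{\FCSP})$: first show containment $\ConDomSym(P)^\F \subseteq \ConSym(P_{\FCSP})$, then invoke Lemma \ref{lemma-functional-group-axioms} (which applies since constraint domain symmetries are in particular domain symmetries, by Theorem \ref{theorem-con-dom-sym-subgroup}, or simply because the lemma only needs them to be domain permutations) together with non-emptiness to conclude the subgroup property. So the whole content is the containment: if $\sigma$ is a constraint domain symmetry, then $\sigma^\F$ is an automorphism of the microstructure complement $\overline{MS}(P_{\FCSP})$.

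First I would recall that $\sigma^\F$ is a permutation of $\Bindings(P_{\FCSP})$, so it suffices to check that it maps hyperedges to hyperedges and non-hyperedges to non-hyperedges; since it is a bijection on a finite vertex set and the edge set is finite, it is enough to show it maps every hyperedge to a hyperedge (the reverse direction then follows by applying the same argument to $\sigma^{-1}$, which is also a constraint domain symmetry since $\ConDomSym(P)$ is a group, using $(\sigma^\F)^{-1} = (\sigma^{-1})^\F$ from Lemma \ref{lemma-functional-group-axioms}). There are two kinds of hyperedges to handle. The consistency hyperedges $\set{(x,a),(x,b)}$ for distinct $a,b \in D(x)$ are automatically preserved, because $\sigma^\F$ fixes every CSP variable (it only relabels the value component of a binding), so $\sigma^\F$ sends $\set{(x,a),(x,b)}$ to $\set{(x,a'),(x,b')}$ where $a',b'$ are still distinct elements of $D(x)$ — this is exactly the reason the functional CSP was preferred over the flat CSP. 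The constraint hyperedges are the crux: a constraint hyperedge has the form $\set{(f_1,F_1),\dotsc,(f_k,F_k)}$ where $\set{f_1,\dotsc,f_k}$ is the scope of some $C_\phi$ and the partial assignment $f_i \mapsto F_i$ is disallowed by $C_\phi$, i.e. $\phi$ evaluates to false under it. I must show $\set{\sigma^\F(f_1,F_1),\dotsc,\sigma^\F(f_k,F_k)}$ is again a constraint hyperedge.

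The key step, and the main obstacle, is a semantic lemma tying the action of $\sigma$ on formulas to the action of $\sigma^\F$ on bindings: an assignment $A$ satisfies $C_\phi$ if and only if $\sigma^\F \bullet A$ satisfies $C_{\sigma \bullet \phi}$. This is essentially the Value Relabeling/Value Permutation phenomenon lifted to extended formulas — one proves by structural induction on the extended term/formula that evaluating $\sigma \bullet \phi$ under interpretation $\sigma^\F \bullet A$ gives the same truth value as evaluating $\phi$ under $A$ (the base case being that a domain element $d$ of sort $\theta$, which becomes $\sigma_\theta(d)$ in $\sigma \bullet \phi$, evaluates to $\sigma_\theta(d)$, matching what the relabeled interpretation does to the value $d$). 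Granting this, the argument finishes cleanly: since $\sigma$ is a constraint domain symmetry, $\sigma \bullet \Gamma = \Gamma$, so $\sigma \bullet \phi$ is again some formula $\psi \in \Gamma$ with its own constraint $C_\psi$; the scope of $C_\psi$ is the set of functional/relational symbols in $\psi$, which is the same set of symbols as in $\phi$ (applying a domain permutation only rewrites domain-element subterms, never symbols), so the scopes match up as $\set{f_1,\dotsc,f_k}$; and because $f_i \mapsto F_i$ violated $C_\phi$, the semantic lemma gives that $f_i \mapsto F_i' = \sigma^\F$-image violates $C_\psi$, i.e. $\set{(f_1,F_1'),\dotsc,(f_k,F_k')}$ is a constraint hyperedge. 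Hence hyperedges map to hyperedges, the reverse direction follows by symmetry, $\sigma^\F \in \ConSym(P_{\FCSP})$, and the subgroup conclusion follows as above.
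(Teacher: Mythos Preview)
Your argument is correct and follows the same approach the paper has in mind, but you should be aware that the paper does not actually give a proof of this theorem: it offers only the single sentence ``since constraint domain symmetries preserve the set of constraints of the MSFMF problem, they also preserve the hyperedges in the microstructure complement of the functional CSP.'' Your write-up is precisely the fleshing-out of that assertion --- the split into consistency versus constraint hyperedges, the observation that $\sigma^\F$ fixes CSP variables, and the semantic lemma that $I \models \phi$ iff $(\sigma^\F \bullet I) \models (\sigma \bullet \phi)$ (the extended-formula version of Value Relabeling) are exactly what is needed to make the paper's sentence into a proof.

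One small caution on logical dependencies: you justify invoking Lemma~\ref{lemma-functional-group-axioms} partly by appealing to Theorem~\ref{theorem-con-dom-sym-subgroup} (that constraint domain symmetries are domain symmetries). In the paper's architecture that theorem is obtained \emph{as a corollary} of the very result you are proving, so citing it here would be circular. Your parenthetical alternative --- that the lemma really only uses that $\sigma,\pi$ are domain \emph{permutations} --- is the right way out; just drop the reference to Theorem~\ref{theorem-con-dom-sym-subgroup} and rely on that observation alone.
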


The functional CSP transformation allows us to view the domain symmetries of a finite model finding problem as a subgroup of the solution symmetries for the functional CSP, thereby showing how symmetries in finite model finding and CSP symmetries relate to each other.
This transformation also gives us a vehicle to discuss variable symmetries and variable-value symmetries in finite model finding (as those respective kinds of symmetries for the functional CSP).
Such symmetries are not examined in current FMF literature, so this suggests further areas of research.
While $\DomSym(P)^\F$ is a subgroup of $\SolSym(P_{\FCSP})$, the containment may be strict.
There may be more solution symmetries to the functional CSP than there are domain symmetries.
Non-trivial variable symmetries of the functional CSP are one example.
In the next section we will see a concrete example of even some value symmetries that do not correspond to domain symmetries.
A further exploration of the difference between these two sets is warranted, and may enable us to have a greater understanding of symmetries in finite model finding.

\section{Effect of Sorts on Symmetries}
\label{section-sorts}

Most presentations of first-order logic differ from ours in that they are concerned with single-sorted (also called unsorted) first-order logic.
After all, sorts can be simulated using predicates, so there appears at first glance to be no theoretical need for them.
However as Claessen and S\"orenson \cite{claessen_new_2003} note, sort information improves search time.
Many widely used tools for automated theorem proving, such as SMT solvers \cite{barrett_smt-lib_nodate}, use a many-sorted system.
In this section, we first review work by Claessen and S\"orenson \cite{claessen_new_2003} linking many-sorted and single-sorted systems.
We then greatly expand on their work to provide new insights on how sorts relate specifically to the existence of symmetries and come to a new understanding of the process of sort inference.

\subsection{Claessen and S\"orenson's Insights}
Claessen and S\"orenson \cite{claessen_new_2003} were primarily concerned with single-sorted finite model finding, but were interested in how the information acquired from sort inference could be used to help finite model finding for originally single-sorted problems.

Their observation was how to link single-sorted and many-sorted problems.
We change the language of their presentation to match our notation, but the ideas are the same.
Consider a single-sorted finite model finding problem $P = (\Sigma, \Gamma, \U)$ where $\Sigma$ contains just one sort $U$ and $\U(U) = D$.
It may possible to replace occurrences of the sort symbol $U$ within $\Sigma$ and $\Gamma$ with new sort symbols $U_1, \dotsc, U_n$ and still obtain a well-formed signature $\Sigma'$ from $\Sigma$ and set $\Gamma'$ of $\Sigma'$-terms from $\Gamma$.

For example, consider the following single-sorted problem.
\begin{align*}
&\Theta(\Sigma) = \set{U}, \mathscr F(\Sigma) = \set{c_1: U, c_2: U, f: U \to U}, \mathscr R(\Sigma) = \emptyset \\
&\Gamma = \set{f(c_1) \neq c_2, \forall x : U . \, f(x) \neq c_2} \\
&\U(U) = D
\end{align*}
It is possible to replace some occurrences of $U$ with new sorts $U_1$ and $U_2$ to obtain the following signature $\Sigma'$ and set $\Gamma'$ of $\Sigma'$-terms.
\begin{align*}
&\Theta(\Sigma') = \set{U_1, U_2}, \mathscr F(\Sigma') = \set{c_1: U_1, c_2: U_2, f: U_1 \to U_2}, \mathscr R(\Sigma') = \emptyset \\
&\Gamma' = \set{f(c_1) \neq c_2, \forall x : U_1 . \, f(x) \neq c_2}
\end{align*}

After obtaining $\Sigma'$ and $\Gamma'$, a domain assignment can be obtained by setting $\U'(U_i) = D$ for $i = 1, \dotsc, n$.
That is, choose each sort to use the same domain as the original problem.
The key observation is that any interpretation $I$ of the original single-sorted problem is an interpretation of the many-sorted problem.
Moreover, $I$ satisfies $(\Sigma, \Gamma, \U)$ if and only if $I$ satisfies $(\Sigma, \Gamma, \U')$.
Therefore one can infer sorts from the original single-sorted problem and then solve the many-sorted problem instead.
Sort inference is analogous to type inference in programming languages, and standard algorithms such as Hindley-Milner type inference can be used.

Claessen and S\"orenson describe two benefits to inferring sorts.
The first is for symmetry breaking using constants, which we will discuss in a later section.
The latter, which they call \emph{sort size reduction}, is useful but not related to symmetry and so not discussed here.

\subsection{New Insights}

While Claessen and S\"orenson were interested in inferring sorts for single-sorted problems, a similar process can be done for problems already containing multiple sorts.
An even more general sorting can be found.
Let us formalize this notion.
We could not find literature formalizing sort inference specifically for FOL, so we adapt terminology from Pierce's book on type systems in programming languages \cite{pierce_types_2002}.

\begin{definition}[Sort Substitution]
	A sort substitution $\eta$ is a finite mapping from sort symbols to sort symbols, usually written in the form $\set{A_1 \mapsto B_1, \dotsc, A_n \mapsto B_n}$ for sort symbols $A_1, \dotsc A_n, B_1, \dotsc, B_n$, where each $A_i$ is distinct.
	For a given sort $S$, if $(S \mapsto S') \in \eta$ then $\eta(S) = S'$.
	Otherwise $\eta(S) = S$.
	A sort substitution $\eta$ acts on a signature $\Sigma$ by producing a new signature $\eta\Sigma$ obtained by applying $\eta$ to each sort $S$ occurring in $\Sigma$.
	A sort substitution $\eta$ acts on a set $\Gamma$ of $\Sigma$-formulas by producing a set $\eta\Gamma$ of $\eta\Sigma$-formulas obtained by applying $\eta$ to each sort $S$ occurring in $\Gamma$.
\end{definition}

For example, consider the following signature and set of formulas.
\begin{align*}
&\Theta(\Sigma) = \set{A, B, C}, \mathscr F(\Sigma) = \set{c_1: A, c_2: B, c_3: C, f: A \times B \to C}, \mathscr R(\Sigma) = \set{P: A \times B \to \Bool} \\
&\Gamma = \set{\forall x : A . \forall y : B . \, \forall z : C .\, f(x, y) = z}
\end{align*}
Applying the sort substitution $\eta = \set{A \mapsto D, B \mapsto A}$ yields the signature $\Sigma' = \eta \Sigma$ and set of formulas $\Gamma' = \eta \Gamma$ defined as follows.
\begin{align*}
&\Theta(\Sigma') = \set{D, A, C}, \mathscr F(\Sigma') = \set{c_1: D, c_2: A, c_3: C, f: D \times A \to C}, \mathscr R(\Sigma') = \set{P: D \times A \to \Bool} \\
&\Gamma' = \set{\forall x : D . \, \forall y : A . \, \forall z : C .\, f(x, y) = z}
\end{align*}

% \begin{definition}
%	A substitution $\eta$ is \emph{more general} than a substitution $\eta'$ if there there exists some type $\gamma$ such that $\eta' = \gamma \circ \eta$.
%\end{definition}

Now we can formalize the idea of a more general sorting.

\begin{definition}[Less Specifically or More Generally Sorted]
	Let $\Sigma, \Sigma'$ be signatures, and let $\Gamma, \Gamma'$ be sets of formulas over $\Sigma$ and $\Sigma'$ respectively.
	We say that $(\Sigma', \Gamma')$ is \emph{less specifically sorted}, or \emph{more generally sorted}, than $(\Sigma, \Gamma)$ if there exists a type substitution $\eta$ such that $\Sigma = \eta\Sigma'$ and $\Gamma = \eta\Gamma'$.
	If such a relationship holds between $\Sigma, \Sigma', \Gamma, \Gamma'$ and a type substitution $\eta$, we write $(\Sigma', \Gamma') \sqsubseteq_\eta (\Sigma, \Gamma)$.
\end{definition}

As an example, define $(\Sigma, \Gamma)$ and $(\Sigma', \Gamma')$ as follows.
\begin{align*}
&\Theta(\Sigma) = \set{A, B}, \mathscr F(\Sigma) = \set{c : A, f: A \times B \to A}, \mathscr R(\Sigma) = \emptyset \\
&\Gamma = \set{\forall x : A . \, \forall y : B .\, f(x, y) \neq c} \\
&\Theta(\Sigma') = \set{A, B, C}, \mathscr F(\Sigma') = \set{c : C, f: A \times B \to C}, \mathscr R(\Sigma') = \emptyset \\
&\Gamma' = \set{\forall x : A . \, \forall y : B .\, f(x, y) \neq c}
\end{align*}
$(\Sigma', \Gamma')$ is less specifically sorted than $(\Sigma, \Gamma)$ since for the sort substitution $\eta = \set{C \mapsto A}$, we have that $(\eta \Sigma', \eta \Gamma') = (\Sigma, \Gamma)$.
Sort inference is the process of going from $(\Sigma, \Gamma)$ to a less specifically sorted $(\Sigma', \Gamma')$.

Now the question remains how to deal with domain assignments when performing sort inference.
Recall how during sort inference for the single-sorted case, when the universal sort $U$ was split into different sorts $U_1, \dotsc, U_n$, we took $\U'(U_i) = \U(U)$ for each $i$.
Now there are possibly many sorts in the original problem with different domains, but a similar approach can be taken.
During sort inference, each sort is split into one or more sorts.
In the example above, going from $(\Sigma, \Gamma)$ to the less specifically sorted $(\Sigma', \Gamma')$ splits the sort $A$ into two sorts, $A$ and $C$, since $\eta(A) = \eta(C) = A$.
Now, if a sort $A$ is split into sorts $A_1, \dotsc, A_n$ by sort inference, we take $\U'(A_i) = \U(A)$.
Once again this means that any interpretation $I$ of $P = (\Sigma, \Gamma, \U)$ is an interpretation of $P' = (\Sigma', \Gamma', \U')$ and vice versa.
If under $I$ some term $t'$ of sort $A_i$ occurring in $P'$ evaluates to a value in $\U(A_i)$, then in $P$ the corresponding term $t = \eta t'$, which is of sort $A$, also evaluates to a value in $\U(A_i) = \U(A)$.
Importantly, since $t'$ and $t$ differ only by sort annotations that play no role in evaluation, $t'$ and $t$ evaluate to the exact same value under $I$.
Therefore, each pair of correspond formulas $\Phi$ and $\Phi'$ evaluate to the same truth value under $I$.
We formalize this concept and then summarize the above discussion as a theorem.

\begin{definition}[Less Specifically or More Generally Sorted MSFMF Problem]
	Let $P = (\Sigma, \Gamma, \U)$ be an MSFMF problem.
	We say that an MSFMF problem $P' = (\Sigma', \Gamma', \U')$ is \emph{less specifically sorted}, or \emph{more generally sorted}, than $P$ if all of the following hold.
	\begin{itemize}
		\item $(\Sigma', \Gamma') \sqsubseteq_\eta (\Sigma, \Gamma)$ for some sort substitution $\eta$, and 
		\item For each sort $S \in \Theta(\Sigma)$, each sort $S' \in \set{S' \in \Theta(\Sigma') : \eta(S') = S}$ satisfies that $\U'(S') = \U(S)$.
		That is, if $\eta$ maps $S'$ to $S$ then they share the same domain in their respective problems.
	\end{itemize}
	If such a relationship holds for a specific $\eta$, we write $P' \sqsubseteq_\eta P$.
\end{definition}

\begin{theorem}
	\label{theorem-sort-inference}
	Let $P = (\Sigma, \Gamma, \U)$ and $P' = (\Sigma', \Gamma', \U')$ be MSFMF problems such that $P' \sqsubseteq_\eta P$ for some $\eta$.
	Let $I$ be an interpretation of $P$ (and hence also of $P'$).
	Then for every formula $\Phi' \in \Gamma'$ and its counterpart $\Phi = \eta \Phi' \in \Gamma$, $I \models \Phi'$ if and only if $I \models \Phi$.
\end{theorem}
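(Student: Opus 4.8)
The plan is to prove the theorem by a pair of structural inductions — first a lemma on the evaluation of terms, then the main induction on formulas — whose guiding principle is that a sort substitution $\eta$ changes only the sort annotations attached to subterms, bound variables and quantifiers, and these annotations play no role in first-order semantics once an interpretation is fixed. The hypothesis $P' \sqsubseteq_\eta P$ is exactly what makes this discrepancy harmless: whenever $\eta(S') = S$ we have $\U'(S') = \U(S)$, so a sort $S'$ of $\Sigma'$ and its image $S = \eta S'$ carry literally the same domain. Consequently $I$ is simultaneously a legitimate interpretation of $P$ and of $P'$ assigning the same function $I(f)$ (resp.\ relation $I(R)$) to each symbol: the typing constraints $I(f)\colon \U'(A_1')\times\cdots\times\U'(A_n')\to\U'(B')$ and $I(f)\colon \U(\eta A_1')\times\cdots\times\U(\eta A_n')\to\U(\eta B')$ coincide set-theoretically. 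I would state this observation explicitly first, since it justifies treating $I$ as one and the same mathematical object throughout.

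Next I would set up the correspondence at the level of terms carrying free variables. For a term $t'$ over $\Sigma'$ whose free variables have declared sorts $\theta_1',\dots,\theta_m'$, its counterpart $t = \eta t'$ has the same free variables with sorts $\eta\theta_1',\dots,\eta\theta_m'$, and since $\U'(\theta_i') = \U(\eta\theta_i')$ a single variable assignment $\rho$ serves as an assignment for both $t'$ and $t$. I would then prove by structural induction on $t'$ that $t'$ and $t = \eta t'$ evaluate to the same value under $(I,\rho)$. The base cases — a variable (evaluates to its $\rho$-value), a constant $c$ (evaluates to $I(c)$), and, should one wish to also cover extended problems, a domain element (evaluates to itself) — are immediate; the inductive step for $f(t_1',\dots,t_n')$ follows because $\eta f = f$ as a symbol, $I(f)$ is unchanged, and by the induction hypothesis each $t_i'$ agrees with $\eta t_i'$.

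With the term lemma in hand, the main claim follows by structural induction on $\Phi'$, again relative to an arbitrary assignment $\rho$ so that the hypothesis applies to the immediate subformulas of a quantified formula. Atomic formulas $R(t_1',\dots,t_n')$ and equalities $t_1' = t_2'$ reduce to the term lemma together with $I(R)$ being unchanged; Boolean connectives are immediate from the induction hypothesis. For $\forall x\colon\theta'.\,\psi'$, whose counterpart is $\forall x\colon\eta\theta'.\,\eta\psi'$, satisfaction under $(I,\rho)$ amounts to $\psi'$ holding under $(I,\rho[x\mapsto v])$ for every $v\in\U'(\theta')$; since $\U'(\theta') = \U(\eta\theta')$ and by the induction hypothesis $I,\rho' \models \psi'$ iff $I,\rho' \models \eta\psi'$ for any extended assignment $\rho'$, this is equivalent to $\eta\psi'$ holding for every $v\in\U(\eta\theta')$, i.e.\ to $I \models \forall x\colon\eta\theta'.\,\eta\psi'$; the existential case is symmetric. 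Specializing to a closed $\Phi'\in\Gamma'$ (empty $\rho$) gives the statement. The proof is essentially bookkeeping, so rather than a genuine obstacle the delicate point is one of precision: the theorem concerns closed formulas, but the natural induction must descend through quantifiers into subformulas with free variables, so I would be careful to formulate the term lemma and the formula induction relative to an arbitrary variable assignment and to pin down that $\sqsubseteq_\eta$ is precisely what lets a single assignment serve both problems.
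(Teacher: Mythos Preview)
Your proposal is correct and follows the same idea as the paper: sort annotations play no role in evaluation, so once $\U'(S') = \U(\eta S')$ ensures that domains coincide, corresponding terms and formulas evaluate identically under $I$. The paper, however, does not give a formal proof at all --- it presents this observation as an informal discussion preceding the theorem statement and then ``summarize[s] the above discussion as a theorem'' --- so your structural induction on terms (with a variable assignment $\rho$) followed by the induction on formulas is a strictly more rigorous treatment of the same argument.
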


An easy corollary of this theorem is that instead of solving a given MSFMF problem, we can instead solve a more generally sorted version of the same problem.
As will be seen, this is beneficial to symmetry breaking.
However, it also suggests something deeper about how sort systems relate to symmetries.

A more generally sorted version of a problem may contain more domain symmetries than the original problem.
Consider the example below.
\begin{align*}
&\Theta(\Sigma) = \set{A, B}, \mathscr F(\Sigma) = \set{c : A, f: A \times B \to A}, \mathscr R(\Sigma) = \emptyset \\
&\Gamma = \set{\forall x : A . \, \forall y : B .\, f(x, y) \neq c} \\
&\U(A) = \set{1, \dotsc, k_A}, \U(B) = \set{1, \dotsc, k_B} \\
&\Theta(\Sigma') = \set{A, B, C}, \mathscr F(\Sigma') = \set{c : C, f: A \times B \to C}, \mathscr R(\Sigma') = \emptyset \\
&\Gamma' = \set{\forall x : A . \, \forall y : B .\, f(x, y) \neq c} \\
& \U(A) = \set{1, \dotsc, k_A}, \U(B) = \set{1, \dotsc, k_B}, \U(C) = \set{1, \dotsc, k_A}
\end{align*}
Here $P' = (\Sigma', \Gamma', \U') \sqsubseteq_\eta P = (\Sigma, \Gamma, \U)$, where $\eta = \set{C \mapsto A}$.
All domain permutations of $P$ are domain symmetries (since the problem is pure), and there are exactly $k_a! k_b!$ such domain permutations.
Similarly all domain permutations of $P'$ are domain symmetries, though there are $k_a! k_b! k_a!$ of them.
Therefore $P'$ has more domain symmetries than $P$.
This is at first seems to suggest that the sort inference process introduced symmetries.

However, a closer look at Theorem \ref{theorem-sort-inference} shows that this thinking is misleading.
It is true that $P'$ has more domain symmetries than $P$.
However, consider the functional CSPs of $P'$ and $P$.
Because of how the domain assignments are related, functions and relations have the same domains in the CSPs of $P$ and $P'$.
Moreover, since corresponding formulas evaluate to the same truth values under the same interpretations, the two CSPs also have the exact same constraints.
That is to say, the CSPs of the two problems are exactly the same, and hence share the same set of solution symmetries.
Therefore sort inference did not ``create" symmetries, but instead ``moved" some from the more abstract set of solution symmetries to the set of domain symmetries.
Alternatively, sort inference can be viewed as a \emph{symmetry detection} mechanism.
The presence of sorts in the problem then \emph{certifies} the existence of these symmetries.

In summary, the CSP framework developed earlier has allowed us to make two key realizations.
First, there are more value symmetries to a MSFMF problem than the domain symmetries.
Second, sort inference is a mechanism for detecting more abstract symmetries and subsequently rewriting the problem so that they become domain symmetries.

\section{Static Symmetry Breaking}
\label{section-static-sym}

Symmetries divide the search space into equivalence classes based on the orbits of the group action on interpretations or sets of bindings.
For a given equivalence class, its members are either entirely solutions or entirely non-solutions.
This suggests a significant search optimization, since only one member from each equivalence class need be tested to determine whether the problem has a solution.
A search that tests the entire search space is redundant.
In practice it is difficult to fully eliminate this redundancy and test just one member from each class, but there are practical ways of reducing the redundancy and avoiding searching isomorphic interpretations.
This process is called \emph{symmetry breaking}.

Gent et al. \cite{gent_symmetry_2006} describe three main approaches for symmetry breaking for CSP.
The first is \emph{reformulation}, whereby the abstract problem is re-written as a different equivalent problem in such a way to reduce the amount of symmetry present.
However this is difficult to automate.
The second is to add apply symmetry breaking during the search process \emph{dynamically}.
The third tactic, pioneered by Crawford et al. \cite{crawford_symmetry-breaking_1996}, is to add extra constraints to the base problem in the hopes of preventing the solver from exploring redundant parts of the search space.
As long as the constraints are satisfied by at least one member of each equivalence class, this strategy of \emph{static} symmetric breaking is sound.

These three general strategies have been used in finite model-finding as well.
SEM-style model finders that directly explore the search space use dynamic symmetry breaking techniques.
MACE-style model finders, which reduce the model finding problem to another problem (such as SAT) and then invoke an external solver, use static symmetry breaking techniques.
Our focus for this thesis is on static symmetry breaking.
The seminal strategies for static symmetry breaking in finite model finding are those developed by Claessen and S\"orenson \cite{claessen_new_2003} for their Paradox solver.
Their techniques are used for example by Reger et al. \cite{creignou_finding_2016} in their experiments with the Vampire theorem prover and also Vakili and Day \cite{fitzgerald_finite_2016} in Fortress.
Claessen and S\"orenson do not prove the soundness of their technique since they claim it is intuitive, but for greater understanding and to facilitate development of more techniques we believe that a rigorous mathematical proof is valuable.
Indeed, more nuances are at work with their symmetry breaking technique for single-sorted functions than it first appears.
In this section we will present their symmetry breaking techniques and prove their correctness.
In later sections we move to present new symmetry breaking techniques.

\subsection{Existing Techniques for Constants}

First, Claessen and S\"orenson \cite{claessen_new_2003} propose a symmetry breaking scheme for constants.
Consider a pure finite model-finding problem $(\Sigma, \Gamma, \U)$ that has two constants $c_1, c_2, \dotsc, c_n$ of sort $A$, and suppose $\U(A) = \set{a_1, a_2, \dotsc, a_m}$.
Suppose there exists a satisfying interpretation $I$ for this problem (which we do not necessarily have), and we wish to construct our own satisfying interpretation.
It is always possible to apply a domain symmetry to obtain an interpretation $I'$ where $I'(c_1) = a_1$.
Therefore it would not affect the satisfiability of the problem to add the constraint $c_1 = a_1$.
In this new interpretation, either $c_2$ is assigned the same value as $c_1$ or a different value.
If it is the same value then obviously $c_2$ is assigned $a_1$.
If it is a different value, since $a_2, \dotsc, a_m$ are interchangeable, we can again apply a domain symmetry that swaps $I''(c_2)$ and $a_2$, thus obtaining an interpretation $I''$ where $I''(c_2) = a_2$ while leaving $I''(c_1) = a_1$.
Therefore, we could initially have added the constraints $c_1 = a_1$ and $c_2 = a_1 \lor c_2 = a_2$ without affecting satisfiability.
Similarly, the constraint $c_3 = a_1 \lor c_3 = a_2 \lor c_3 = a_3$ could be added, reflecting that either $c_3$ is the same as $c_1$ or $c_2$ and hence is $a_1$ or $a_2$, or it is different and we can arbitrarily choose it to be $a_3$ since the remaining values are interchangeable.
In general, the symmetry breaking constraint
\[
	\bigvee_{i = 1}^k c_k = a_i
\]
can be added for $k = 1, \dotsc, \min\set{m, n}$.
We call these constraints the \emph{ordered constant constraints}.
That is, $c_k$ must be assigned one of the first $k$ values.
Additionally, it can also be constrained that if $c_k$ is assigned value $a_d$, one of the earlier constants must have been assigned $a_{d - 1}$.
This is done by adding what Claessen and S\"orenson call \emph{canonicity constraints}.
For each $k = 2, \dotsc, \min\set{m, n}$ and $d = 2, \dotsc, k$, they add the constraint
\[
	(c_k = a_d) \implies \left( \bigvee_{i = 1}^{k - 1} c_i = a_{d - 1} \right).
\]
Claessen and S\"orenson added these constraints at the SAT level during translation, but we can treat them as extended terms.

We will prove the soundness of this scheme.
To do this, we must show that at least one member of each equivalence class of interpretations satisfies these constraints.
Equivalently, given an arbitrary interpretation $I$, we must show that there exists an isomorphic interpretation $I'$ of $I$ that satisfies these constraints.

\begin{lemma}
	\label{lemma-nice-permutation0}
	Let $X$ be a set of size $n$ and let $x_1, \dotsc, x_m$ be elements of $X$ that are not necessarily distinct.
	There exists a bijection $\sigma : X \to \set{1, \dotsc, n}$ such that, for each $i = 1, \dotsc, m$,
	\begin{itemize}
		\item $\sigma(x_i) \le i$, and
		\item for all $d = 2, \dotsc, i$, if $\sigma(x_i) = d$ then, for some $j \in \set{1, \dotsc, i - 1}$, $\sigma(x_j) = d - 1$.
	\end{itemize}
\end{lemma}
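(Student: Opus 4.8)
The plan is to build $\sigma$ explicitly by a greedy ``first‑appearance'' ranking. I would scan the sequence $x_1, x_2, \dots, x_m$ from left to right: the first element of $X$ to occur is assigned the value $1$, the next distinct element to occur is assigned $2$, and so on, so that if $x_i$ is the first occurrence of its value and it is the $k$‑th distinct element seen up to that point then $\sigma(x_i) = k$; an occurrence of an already‑seen value assigns nothing new. Suppose this uses up exactly the values $1, \dots, c$, where $c$ is the number of distinct values appearing in the sequence (so $c \le m$ and $c \le n$). I then extend to a bijection $\sigma : X \to \{1, \dots, n\}$ by mapping the remaining $n - c$ elements of $X$ bijectively onto $\{c+1, \dots, n\}$ in any way; by construction $\sigma$ is a bijection.

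It then remains to verify the two bulleted conditions. For the first, observe that when $x_i$ is the first occurrence of its value it is the $k$‑th distinct element for some $k \le i$, since $x_1, \dots, x_i$ contain at most $i$ distinct values; hence $\sigma(x_i) = k \le i$. If instead $x_i = x_j$ for some $j < i$, then $\sigma(x_i) = \sigma(x_j) \le j < i$ by the same argument applied to $x_j$. For the second, suppose $\sigma(x_i) = d$ with $2 \le d \le i$. The value $d$ was assigned to the $d$‑th distinct element to appear and the value $d-1$ to the $(d-1)$‑th; since the counter reaches $d-1$ strictly before it reaches $d$, the $(d-1)$‑th distinct element has its first occurrence at a position $p$ strictly earlier than the first occurrence of $x_i$'s value, which is itself at some position $\le i$. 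Therefore $p \le i-1$ and $\sigma(x_p) = d-1$, exactly as required.

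The only place where care is genuinely needed is the bookkeeping around ``the $k$‑th distinct element to appear'', in particular the monotonicity fact that first‑appearance position strictly increases with assigned rank; once that is pinned down, both conditions are immediate, so I expect no real obstacle. As a cross‑check, the lemma also admits a clean proof by induction on $m$: a bijection valid for $x_1, \dots, x_{m-1}$ takes on those elements exactly the values $\{1, \dots, t\}$ for some $t$ (downward closure of the value set follows from the second condition together with $\sigma(x_1) = 1$), so if $x_m$ repeats an earlier value one keeps $\sigma$ unchanged, and otherwise one swaps the values assigned to $x_m$ and to the element currently holding $t+1$ (which lies in $\{1, \dots, n\}$ since $t+1 \le n$), giving $\sigma(x_m) = t+1 \le m$ while leaving the conditions for $x_1, \dots, x_{m-1}$ untouched.
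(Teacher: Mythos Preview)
Your proposal is correct and takes essentially the same approach as the paper: both construct $\sigma$ by assigning value $i$ to the $i$-th distinct element in order of first appearance and then completing to a bijection arbitrarily. Your writeup is in fact more thorough than the paper's, which asserts ``it suffices'' to verify the conditions on the first-occurrence subsequence without spelling out why; your direct verification of both bullets for every $x_i$ fills exactly that gap.
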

\begin{proof}
	Consider the subsequence $(z_1, \dotsc, z_k)$ obtained from $(x_1, \dotsc, x_m)$ by selecting only the first occurence of each distinct element $x$ and maintaining the same order.
	For example, if the original sequence is $(4, 2, 2, 3, 1, 4, 3)$ then select the subsequence $(4, 2, 3, 1)$.
	
	It suffices to show that there exists a bijection $\sigma: X \to \set{1, \dotsc, n}$ such that, for each $i = 1, \dotsc, k$,
	\begin{itemize}
		\item $\sigma(z_i) \le i$, and
		\item for all $d = 2, \dotsc, i$, if $\sigma(z_i) = d$ then, for some $j \in \set{1, \dotsc, i - 1}$, $\sigma(z_j) = d - 1$.
	\end{itemize}
	To construct such a $\sigma$, simply define $\sigma(z_i) = i$ for $i = 1, \dotsc, k$.
	The rest of the values $\sigma(x)$ for $x \in X \setminus \set{z_1, \dotsc, z_k}$ can then be chosen in any way to complete $\sigma$ to a bijection.
\end{proof}

\begin{theorem}[Soundness of Constant Symmetry Breaking]
	\label{theorem-soundness-constants}
	Let $(\Sigma, \Gamma, \U)$ be a (pure) MSFMF problem with a sort $A$.
	Enumerate $\U(A)$ as $a_1, \dotsc, a_m$.
	Let $c_1, \dotsc, c_n$ be constants of $\Sigma$ of sort $A$.
	Let $I$ be any interpretation of $(\Sigma, \Gamma, \U)$.
	There exists an isomorphic interpretation $I'$ of $I$ such that the following all hold.
	\begin{enumerate}
		\item For $k = 1, \dotsc, \min\set{m, n}$, $I'$ satisfies $\bigvee_{i = 1}^k c_k = a_i$.
		\item For $k = 2, \dotsc, \min\set{m, n}$ and $d = 2, \dotsc, k$, $I'$ satisfies $(c_k = a_d) \implies \left( \bigvee_{i = 1}^{k - 1} c_i = a_{d - 1} \right)$.
	\end{enumerate}
\end{theorem}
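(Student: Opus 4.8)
The plan is to produce the required isomorphic interpretation $I'$ explicitly as $\tau \bullet I$ for a single, carefully chosen domain permutation $\tau$, and then verify the two families of constraints directly. The only sort that matters here is $A$; on every other sort $\tau$ will be the identity. Because the problem is pure, the Value Permutation Theorem guarantees that $\tau$ is a domain symmetry, so $I' = \tau \bullet I$ is automatically isomorphic to $I$, and it remains only to check that $I'$ satisfies the ordered constant constraints and the canonicity constraints.

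First I would record the values the constants take under $I$: set $x_i = I(c_i) \in \U(A)$ for $i = 1, \dots, n$. These need not be distinct, which is precisely why Lemma~\ref{lemma-nice-permutation0} is stated for a list of not-necessarily-distinct elements. Applying that lemma with the role of its ambient set played by $\U(A)$ (which has size $m$) and the role of its listed elements played by $x_1, \dots, x_n$, I obtain a bijection $\sigma : \U(A) \to \set{1, \dots, m}$ such that for every $i = 1, \dots, n$ we have $\sigma(x_i) \le i$, and whenever $\sigma(x_i) = d$ with $2 \le d \le i$ there is some $j < i$ with $\sigma(x_j) = d - 1$. Here one must be careful that the symbols $m$ and $n$ in the lemma's statement are swapped relative to the $m$ and $n$ of the theorem.

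Next I would define the domain permutation $\tau$ by $\tau_A(v) = a_{\sigma(v)}$ for $v \in \U(A)$ and $\tau_\theta = \id_{\U(\theta)}$ for every sort $\theta \neq A$; since $\sigma$ is a bijection onto $\set{1, \dots, m}$ and $j \mapsto a_j$ enumerates $\U(A)$, the map $\tau_A$ is indeed a permutation of $\U(A)$, so $\tau \in \DomPerm(P)$. As constants are nullary function symbols, the action of $\tau$ on $I$ gives $I'(c_k) = \tau_A(I(c_k)) = a_{\sigma(x_k)}$. Condition~(1) then follows because $\sigma(x_k) \le k$, so $I'(c_k) \in \set{a_1, \dots, a_k}$; and condition~(2) follows because $I' \models c_k = a_d$ forces $a_{\sigma(x_k)} = a_d$, hence $\sigma(x_k) = d$ (the $a_j$ are distinct), so the lemma supplies some $j < k$ with $\sigma(x_j) = d-1$, i.e.\ $I'(c_j) = a_{d-1}$, which satisfies the right-hand disjunction $\bigvee_{i=1}^{k-1} c_i = a_{d-1}$.

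The two verifications in the last paragraph are routine. The only places where genuine care is needed are the index bookkeeping in the application of Lemma~\ref{lemma-nice-permutation0} (the clash of its $m,n$ with the theorem's $m,n$, and the fact that the constraints are asserted only for $k \le \min\set{m,n}$ whereas the lemma provides information for all $i \le n$), and noting that purity of $P$ is exactly what lets us promote the constructed domain permutation to a domain symmetry, so that the interpretation $I'$ we built really is isomorphic to $I$ in the required sense.
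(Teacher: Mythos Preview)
Your proposal is correct and follows essentially the same approach as the paper: both construct the desired isomorphic interpretation by applying a single domain permutation that is trivial on all sorts except $A$, and both obtain the permutation on $\U(A)$ by invoking Lemma~\ref{lemma-nice-permutation0} on the list of values $I(c_1),\dots,I(c_n)$. Your write-up is simply more explicit---you spell out the composition $\tau_A(v)=a_{\sigma(v)}$, verify both families of constraints by hand, and flag the $m,n$ clash between the lemma and the theorem---whereas the paper compresses all of this into the single line ``such a permutation exists by Lemma~\ref{lemma-nice-permutation0}.''
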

\begin{proof}
	Let $r = \min\set{m, n}$.
	An interpretation $I'$ satisfies the above constraints if and only if, for $k = 1, \dotsc, r$, 
	\begin{itemize}
		\item $I'(c_k) \in \set{a_1, \dotsc, a_k}$, and
		\item for all $d = 2, \dotsc, k$, if $I'(c_k) = a_d$ then, for some $i \in \set{1, \dotsc, k - 1}$, $I'(c_i) = a_{d - 1}$.
	\end{itemize}

	Define $y_k = I(c_k)$ for each $c_k$.
	We will construct $I'$ that satisfies the above requirements by applying an appropriate domain symmetry $\sigma$ to $I$.
	For each sort $S$ that is not $A$, choose $\sigma_S$ to be an arbitrary permutation on $\U(S)$.
	For whichever $\sigma_A$ is chosen, we will have that $I'(c_k) = \sigma_A(y_k)$.
	Any choice of $\sigma_A$ makes $\sigma$ a domain symmetry, but we have additional requirements on $\sigma_A$ for $I'$ to satisfy the constraints.
	So we need to construct $\sigma_A$ such that, for $k = 1, \dotsc, r$,
	\begin{itemize}
		\item $\sigma_A(y_k) \in \set{a_1, \dotsc, a_k}$, and
		\item for all $d = 2, \dotsc, k$, if $\sigma_A(y_k) = a_d$ then, for some $i \in \set{1, \dotsc, k - 1}$, $\sigma_A(y_i) = a_{d - 1}$.
	\end{itemize}
	Such a permutation exists by Lemma \ref{lemma-nice-permutation0}.
\end{proof}

While their work is primarily about single-sorted finite model-finding, Claessen and S\"orenson do describe how to modify this approach for the presence of multiple sorts.
These constraints can be applied individually for each sort.
We will describe this in more detail, with a proof of soundness, in Section \ref{section-combining}.

\subsection{Existing Techniques for Single-Sorted Unary Functions}

Next, we present Claessen and S\"orenson's symmetry breaking scheme in the presence of single-sorted unary functions of the form $f: A \to A$.
It seems natural that if we can constrain $c_1 = a_1$, $c_2 = a_1 \lor c_2 = a_2$ and so on up to $c_k = a_1 \lor \dotsb \lor c_k = a_k$, that we could now add the constraints
\begin{align*}
	&f(a_1) = a_1 \lor \dotsb \lor f(a_1) = a_{k + 1} \\
	&f(a_2) = a_1 \lor \dotsc \lor f(a_2) = a_{k + 1} \lor f(a_2) = a_{k + 2} \\
	&\dotsb
\end{align*}
and so on.
This is the scheme that Claessen and S\"orenson use.
However, as they note this scheme only works when $k > 0$ (i.e. there is a constant that has had symmetry breaking constraints added for it).
If there are no constants, Claessen and S\"orenson add an artificial one and add the corresponding symmetry breaking constraints.
They do not elaborate on why they do this, so we will explain why.

Consider if there are no constants and we wish to add symmetry breaking constraints for a function $f: A \to A$.
Our intuition for why we could choose $a_1$ to be the value assigned to a constant $c$ was that all of the values were ``the same" as far as $c$ was concerned.
However things are more complicated when trying to assign a value for $f(a_1)$.
While choosing $I(f)(a_1) = a_2$ is not fundamentally different than choosing $I(f)(a_1) = a_3$, choosing $I(f)(a_1) = a_1$ is fundamentally different than choosing $I(f)(a_1) = a_2$.
For example, $I(f)(a_1) = a_1$ satisfies $\forall x : A . \, f(x) = x$ while $I(f)(a_1) = a_2$ does not.
The reason that the proof for Theorem \ref{theorem-soundness-constants} fails to generalize to such functions is because of how a domain permutation $\sigma$ acts on an interpretation.
Recall that if $I(f)(a_1) = y$ then $(\sigma \bullet I)(f)( \sigma_A(a_1) ) = \sigma_A(y)$.
The domain permutation must be applied to inputs as well as outputs.
If $I(f)(a_1) = a_1$, it does not suffice to simply assign $\sigma_A(a_1) = a_2$ since then we have an interpretation $I'$ with $I'(f)(a_2) = a_2$, which doesn't necessarily satisfy that $I'(f)(a_1) = a_1$.
In fact if $I(f)(a_i) \neq a_i$ for all $i$ then there is no domain permutation such that $(\sigma \bullet I)(f)(a_1) = a_1$.
We did not face this problem for constants because nullary functions have no inputs that are permuted.

Nonetheless, the values $a_2, \dotsc, a_n$ are intuitively ``the same" as far as $f(a_1)$ is concerned.
It only matters whether the output is the equal to or different than the input.
Then for $f(a_2)$, either $f(a_2)$ is $a_1$ or $a_2$, or it is one of the remaining unused values $a_3, \dotsc, a_n$, which again appear to be equivalent choices for $f(a_2)$.
All unused values that are distinct from the input are equivalent choices.
This suggests we should be able to add the following constraints.
\begin{align*}
& f(a_1) = a_1 \lor f(a_1) = a_2 \\
& f(a_2) = a_1 \lor f(a_2) = a_2 \lor f(a_2) = a_3 \\
&\dotsb
\end{align*}

We call these constraints \emph{ordered range constraints}.
We now prove the soundness of this symmetry breaking scheme for unary functions.
It is more involved than our first soundness proof, since constructing an appropriate symmetry is non-trivial.

\begin{lemma}
	\label{lemma-nice-permutation1}
	Let $X$ be a set of size $n$ and let $h$ be any function from $X$ to $X$.
	There exists a bijection $\sigma: X \to \set{1, \dotsc, n}$ such that, for each $v \in X$, if $\sigma(v) = j$ then $\sigma(h(v)) \le j + 1$.
\end{lemma}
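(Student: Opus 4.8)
The plan is to build the required bijection by a greedy left-to-right construction: I will enumerate the elements of $X$ as $x_1, x_2, \dotsc, x_n$, set $\sigma(x_i) = i$, and maintain the invariant that the image of every already-placed element lies among the elements placed so far together with the single next slot. Concretely, I would choose $x_1 \in X$ arbitrarily, and then, having chosen distinct $x_1, \dotsc, x_i$ with $i < n$, define $x_{i+1}$ by the rule: if $h(x_i) \notin \set{x_1, \dotsc, x_i}$, put $x_{i+1} = h(x_i)$; otherwise let $x_{i+1}$ be any element of $X \setminus \set{x_1, \dotsc, x_i}$. Finally set $\sigma(x_i) = i$ for all $i$.

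First I would check that this construction never gets stuck and produces a genuine enumeration of $X$. For each $i < n$ the set $X \setminus \set{x_1, \dotsc, x_i}$ is nonempty because $|X| = n$, so the ``otherwise'' branch always has something to pick; and in the first branch $h(x_i)$ itself lies in $X \setminus \set{x_1, \dotsc, x_i}$, so again a fresh element is chosen. Hence $x_1, \dotsc, x_n$ lists each element of $X$ exactly once and $\sigma$ is a bijection onto $\set{1, \dotsc, n}$.

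Then I would verify the defining property: fix $v \in X$ and write $v = x_i$, so $\sigma(v) = i$. If $i = n$ the claim $\sigma(h(v)) \le n + 1$ is automatic; if $i < n$, then by the construction rule either $h(x_i) \in \set{x_1, \dotsc, x_i}$, giving $\sigma(h(v)) \le i$, or $h(x_i) = x_{i+1}$, giving $\sigma(h(v)) = i + 1$, so in both cases $\sigma(h(v)) \le \sigma(v) + 1$. I do not expect a real obstacle here; the only point requiring care is that the greedy choice is always available, which is the counting argument above, and the reason no backtracking is ever needed is monotonicity — once the image constraint at position $i$ holds it continues to hold as later elements are appended, since the set of placed elements only grows. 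That observation is the crux of why this ordering can be built in a single pass.
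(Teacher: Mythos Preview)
Your proposal is correct and follows essentially the same approach as the paper: both build the enumeration $x_1,\dotsc,x_n$ greedily by choosing $x_{i+1}=h(x_i)$ whenever $h(x_i)$ is not yet listed and an arbitrary unused element otherwise, then set $\sigma(x_i)=i$. The paper packages this as an induction on the length of a ``$k$-good sequence'', but the construction and verification are the same as yours.
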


We find it helps to interpret this lemma graphically.
Let $D$ be a directed graph on node set $X$ such that each vertex $x$ has one outgoing arc with head $h(x)$.
This lemma says that there is a way to order the vertices such that the head of each arc goes no further than one vertex to the right of the tail.

\begin{proof}[Proof of Lemma \ref{lemma-nice-permutation1}]
	To prove the existence of such a permutation, it is equivalent to show that there is an ordering $(v_1, \dotsc, v_n)$ of the elements of $X$ (where $v_i \neq v_j$ for $i \neq j$) such that
	$h(v_j) \in \set{v_1, \dotsc, v_{j + 1}}$ for each $j \in \set{1, \dotsc, n - 1}$.
	To do this, we will show by induction that for $k = 1, \dotsc, n$ there exists a sequence $(v_1, \dotsc, v_k)$ of distinct elements of $X$ such that $h(v_j) \in \set{v_1, \dotsc, v_{j + 1}}$  for each $j \in \set{1, \dotsc, k - 1}$.
	We will say such a sequence is a \emph{$k$-good sequence}.
	
	First suppose $k = 1$.
	The required property for a sequence to be $1$-good is vacuously true.
	Therefore we can select an arbitrary vertex $v_1$ to obtain the $1$-good sequence $(v_1)$.
	
	Now suppose $2 \le k \le n$.
	By the inductive hypothesis there exists a $(k - 1)$-good sequence $(v_1, \dotsc, v_{k - 1})$.
	It remains to make a choice for $v_k$ from $X \setminus \set{v_1, \dotsc, v_{k - 1}}$ and to show that $h(v_{k - 1}) \in \set{v_1, \dotsc, v_k}$, since then $(v_1, \dotsc, v_k)$ is a $k$-good sequence.
	There are two cases depending on whether $h(v_{k - 1})$ lies in the $(k - 1)$-good sequence.
	If $h(v_{k - 1}) \in \set{v_1, \dotsc, v_{k - 1}}$, then any choice of $v_k$ from $X \setminus \set{v_1, \dotsc, v_{k - 1}}$ will satisfy that $h(v_{k - 1}) \in \set{v_1, \dotsc, v_k}$, so $v_k$ can be chosen arbitrarily.
	If $h(v_{k - 1}) \notin \set{v_1, \dotsc, v_{k - 1}}$, then selecting $v_k = h(v_{k - 1})$ satisfies that $h(v_{k - 1}) \in \set{v_1, \dotsc, v_k}$.
	Hence we can construct a $k$-good sequence.
\end{proof}

\begin{theorem}[Soundness of Ordered Range Constraints For Single-Sorted Unary Functions]
	Let $(\Sigma, \Gamma, \U)$ be a (pure) MSFMF problem and let $f: A \to A$ be a functional symbol of $\Sigma$.
	Enumerate $\U(A)$ as $a_1, \dotsc, a_m$.
	Let $I$ be any interpretation of $(\Sigma, \Gamma, \U)$.
	There exists an isomorphic interpretation $I'$ of $I$ that satisfies the ordered range constraints $\bigvee_{j =1}^{i + 1} f(a_i) = a_j$ for each $i = 1, \dotsc, m - 1$.
\end{theorem}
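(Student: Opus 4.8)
The plan is to exhibit a single domain permutation $\sigma$ of $(\Sigma, \Gamma, \U)$ and take $I' = \sigma \bullet I$. Since the problem is pure, every domain permutation is a domain symmetry (the Value Permutation Theorem), so $I'$ is automatically isomorphic to $I$; it therefore remains only to choose $\sigma$ so that $I'$ satisfies the ordered range constraints. First I would translate the target condition: $I'$ satisfies $\bigvee_{j=1}^{i+1} f(a_i) = a_j$ exactly when $I'(f)(a_i) \in \set{a_1, \dotsc, a_{i+1}}$. Recalling how a domain permutation acts on an interpretation, if $I(f)(a) = b$ then $I'(f)(\sigma_A(a)) = \sigma_A(b)$. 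The constraints only mention the sort $A$, so on every other sort I would let $\sigma$ be the identity (any choice works); all the content is in constructing $\sigma_A$.

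The key step is to build $\sigma_A$ from Lemma \ref{lemma-nice-permutation1}. Apply that lemma with $X = \U(A)$ (which has size $m$) and $h = I(f)$ to obtain a bijection $\tau \colon \U(A) \to \set{1, \dotsc, m}$ such that $\tau(h(v)) \le \tau(v) + 1$ for every $v \in \U(A)$. Then define $\sigma_A \colon \U(A) \to \U(A)$ by $\sigma_A(v) = a_{\tau(v)}$. Because $\tau$ is a bijection onto $\set{1, \dotsc, m}$ and $a_1, \dotsc, a_m$ enumerates $\U(A)$, the map $\sigma_A$ is a bijection of $\U(A)$, hence a legitimate component of a domain permutation.

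Finally I would verify the constraints. Fix $i \in \set{1, \dotsc, m-1}$ and let $v$ be the unique element of $\U(A)$ with $\tau(v) = i$, so that $\sigma_A(v) = a_i$. Using the action formula,
\[
  I'(f)(a_i) = I'(f)(\sigma_A(v)) = \sigma_A(I(f)(v)) = \sigma_A(h(v)) = a_{\tau(h(v))},
\]
and since $\tau(h(v)) \le \tau(v) + 1 = i + 1$, this value lies in $\set{a_1, \dotsc, a_{i+1}}$. Thus $I'$ satisfies every ordered range constraint, completing the proof.

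I do not expect a genuine obstacle here: the combinatorial heart of the argument is Lemma \ref{lemma-nice-permutation1}, which we may assume, and the remaining work is bookkeeping. The one point requiring care — and exactly the point the surrounding discussion flags as why the constant argument does not generalize directly — is that the permutation acts on the \emph{inputs} of $f$ as well as the outputs, so the construction must go through the ordering $\tau$ of the domain (equivalently, a relabelling $\sigma_A(v) = a_{\tau(v)}$) rather than trying to fix output values one at a time.
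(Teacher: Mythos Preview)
Your proposal is correct and follows essentially the same approach as the paper: construct $\sigma$ by taking identity (or arbitrary) permutations on sorts other than $A$, obtain $\sigma_A$ from Lemma~\ref{lemma-nice-permutation1} applied with $h = I(f)$, and verify the constraints via the action formula. If anything, you are more explicit than the paper in spelling out the conversion $\sigma_A(v) = a_{\tau(v)}$ from the bijection $\tau$ produced by the lemma to the required permutation of $\U(A)$.
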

\begin{proof}
	An interpretation $I'$ satisfies the ordered range constraints if and only if $I'(f)(a_i) \in \set{a_1, \dotsc, a_{i + 1}}$ for $i = 1, \dotsc, m - 1$.
	
	For convenience let $h(a_i) = I(f)(a_i)$ for each $a_i$.
	We will construct $I'$ by applying an appropriate domain symmetry $\sigma$ to $I$.
	For every sort $S$ that is not $A$, choose $\sigma_S$ to be an arbitrary permutation on $\U(S)$.
	For whichever $\sigma_A$ is chosen, the newly constructed interpretation $I'$ is such that $I'(f)(\sigma_A(a)) = \sigma_A(h(a))$ for each $a \in \U(A)$.
	We need to construct a permutation $\sigma_A : \U(A) \to \U(A)$ such that, for each $a \in \U(A)$ and $j \in \set{1, \dotsc, m - 1}$ if $\sigma_A(a) = a_j$ then $\sigma(h(a)) \in \set{a_1, \dotsc, a_{j + 1}}$.
	Such a permutation exists by Lemma \ref{lemma-nice-permutation1}.                                                                                                                                                                                                                                                                                                                                              
\end{proof}

This symmetry breaking technique does not generalize easily to higher arity functions.
To get any symmetry reduction out of a binary function $g: A \times A \to A$, Claessen and S\"orenson define a new unary function $h: A \to A$, add the constraint $\forall x : A . \, h(x) = f(x, x)$, and then perform symmetry breaking using the function $h$.

\section{New Symmetry Breaking for Functions}
\label{section-functions}

Although the symmetry breaking techniques introduced by Claessen and S\"orenson \cite{claessen_new_2003} are very effective, they have some limitations.
First, we discussed how extra values are needed for the ordered range constraints when considering unary functions as opposed to constants.
Additionally, it was mentioned that their symmetry breaking approach does not generalize easily to higher arity functions.
While their paper discusses how to add symmetry breaking constraints for constants in a many-sorted setting, its discussion of functions is limited to a single-sorted setting.
The question remains whether more can be done in a many-sorted setting.
In this section, we will present new symmetry breaking constraints for multi-sorted functions, as well as rigorously prove the correctness of this new symmetry breaking scheme.
We will show that it generalizes very well to a class of higher arity functions.

\subsection{Multi-Sorted Unary Functions}

Recall that the problem we faced for a function $f: A \to A$ when attempting to constraint $f(a_1)$ is that because symmetries must be applied consistently to both the arguments and result of a function, we cannot just constrain $f(a_1) = a_1$.
However if we consider a multi-sorted unary function symbol $f: A \to B$, this problem vanishes.
We can separately permute the input and output values because they belong to different sorts.
The domain permutation can leave the input sort fixed, and only permute the output sort.
The proof of Theorem \ref{theorem-soundness-constants} now can be easily generalized, and we can add the following \emph{strong ordered range constraints}.

\begin{align*}
& f(a_1) = b_1 \\
& f(a_2) = b_1 \lor f(a_2) = b_2 \\
& f(a_3) = b_1 \lor f(a_3) = b_2 \lor f(a_3) = b_3 \\
&\dotsb
\end{align*}

\begin{theorem}[Soundness of Strong Ordered Range Constraints For Multi-Sorted Unary Functions]
	\label{theorem-soundness-sorts-unary}
	Let $(\Sigma, \Gamma, \U)$ be a (pure) MSFMF problem and let $f: A \to B$ be a functional symbol of $\Sigma$, where $A$ and $B$ are distinct sort symbols.
	Enumerate $\U(A)$ as $a_1, \dotsc, a_m$ and $\U(B)$ as $b_1, \dotsc, b_n$.
	Let $I$ be any interpretation of $(\Sigma, \Gamma, \U)$.
	There exists an isomorphic interpretation $I'$ of $I$ that satisfies the strong ordered range constraints $\bigvee_{j = 1}^{i} f(a_i) = b_j$ for each $i = 1, \dotsc, \min \set{m, n}$.
\end{theorem}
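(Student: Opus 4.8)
The plan is to imitate the proof of Theorem~\ref{theorem-soundness-constants}, taking advantage of the crucial new feature that $A$ and $B$ are \emph{distinct} sorts. Because the problem is pure, the Value Permutation Theorem guarantees that \emph{every} domain permutation of $(\Sigma, \Gamma, \U)$ is a domain symmetry, so it suffices to exhibit a single domain permutation $\sigma$ such that $I' = \sigma \bullet I$ satisfies the strong ordered range constraints. The distinctness of $A$ and $B$ is what makes this possible: we may permute the output sort $B$ freely while leaving the input sort $A$ completely fixed, which is exactly what failed in the single-sorted case $f \colon A \to A$.

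In detail, I would first observe that $I'$ satisfies $\bigvee_{j=1}^{i} f(a_i) = b_j$ for each $i = 1, \dots, \min\set{m,n}$ precisely when $I'(f)(a_i) \in \set{b_1, \dots, b_i}$ for each such $i$. Write $y_i = I(f)(a_i) \in \U(B)$. I would then build $\sigma$ by taking $\sigma_A = \id_{\U(A)}$, taking $\sigma_S$ to be an arbitrary permutation of $\U(S)$ for every sort $S \notin \set{A, B}$, and leaving $\sigma_B$ to be chosen. Since $\sigma_A(a_i) = a_i$, the definition of the action of a domain permutation on an interpretation gives $I'(f)(a_i) = I'(f)(\sigma_A(a_i)) = \sigma_B(I(f)(a_i)) = \sigma_B(y_i)$, so it remains only to choose a permutation $\sigma_B$ of $\U(B)$ with $\sigma_B(y_i) \in \set{b_1, \dots, b_i}$ for $i = 1, \dots, \min\set{m,n}$. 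Applying Lemma~\ref{lemma-nice-permutation0} with the set $X = \U(B)$ (of size $n$) and the elements $y_1, \dots, y_{\min\set{m,n}}$ (valid since $\min\set{m,n} \le n$), and then identifying the target $\set{1, \dots, n}$ with $\set{b_1, \dots, b_n}$ via $k \mapsto b_k$, yields such a $\sigma_B$; indeed the first bullet of the lemma says $\sigma_B(y_i)$ has index at most $i$. We do not even need the second bullet of the lemma here, though it would let us additionally impose canonicity-style constraints.

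With $\sigma$ so constructed, $I' = \sigma \bullet I$ is a domain symmetry applied to $I$, hence isomorphic to $I$, and it satisfies all the strong ordered range constraints, completing the argument. I do not expect a genuine obstacle: the combinatorial heart of the matter is already packaged in Lemma~\ref{lemma-nice-permutation0}, and the only real content is the conceptual point that distinctness of $A$ and $B$ decouples the permutation of inputs from the permutation of outputs. It is worth highlighting in the write-up that this decoupling is also why the index bound here is simply $i$ rather than the $i+1$ forced by the single-sorted ordered range constraints --- no ``extra'' output value is needed to absorb the possibility $f(a_i) = a_i$, since inputs and outputs now live in different domains.
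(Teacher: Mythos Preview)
Your proposal is correct and follows essentially the same approach as the paper: fix $\sigma_A$ to be the identity, reduce to finding a permutation $\sigma_B$ of $\U(B)$ with $\sigma_B(y_i)\in\set{b_1,\dots,b_i}$, and invoke Lemma~\ref{lemma-nice-permutation0}. The only cosmetic difference is that the paper sets $\sigma_S$ to the identity for every sort $S\neq B$ rather than allowing arbitrary permutations on sorts other than $A,B$, but since the problem is pure this is immaterial.
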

\begin{proof}
	Let $r = \min \set{m, n}$.
	An interpretation $I'$ satisfies the strong ordered range constraints if and only if $I'(f)(a_i) \in \set{b_1, \dotsc, b_i}$ for $i = 1, \dotsc, r$.
	
	Define $y_i = I(f)(a_i)$ for each $a_i$.
	We will construct $I'$ by applying an appropriate domain symmetry $\sigma$ to $I$.
	For each sort $S$ that is not $B$, select $\sigma_S$ to be the identity on $\U(S)$.
	In particular, choose $\sigma_A$ to be the identity on $\U(A)$.
	Now for whichever $\sigma_B$ is chosen, we will have $I'(f)(a_i) = \sigma_B(y_i)$ for each $i$.
	So we need to construct $\sigma_B$ such that $\sigma_B(y_i) \in \set{b_1, \dotsc, b_i}$ for $i = 1, \dotsc, r$.
	The existence of such a $\sigma_B$ follows from Lemma \ref{lemma-nice-permutation0}.
\end{proof}

\subsection{Higher Arity Functions}
Now that we know more symmetry breaking can be done for functions of the form $f: A \to B$, the question remains what to do about higher arity functions.
Fortunately, this new symmetry breaking strategy can be easily generalized to higher arities.
What made the proof of Theorem \ref{theorem-soundness-sorts-unary} so simple was that we could leave the inputs sorts fixed and simply permute the output sorts.
Therefore if a function symbol does not have its output sort as one of its input sorts, we can apply a similar strategy.

\begin{definition}
	A function symbol $f: A_1 \times \dotsb \times A_k \to B$ is \emph{domain-range distinct (DRD)} if its result type $B$ is distinct from each of its argument types $A_1, \dotsc, A_k$ (though $A_1, \dotsc A_k$ need not be distinct).
\end{definition}

\begin{theorem}[Soundness of Strong Ordered Range Constraints For DRD Functions]
	\label{theorem-soundness-sorts-drd}
	Let $(\Sigma, \Gamma, \U)$ be a (pure) MSFMF problem and let $f: A_1 \times \dotsb \times A_k \to B$ be a DRD function symbol of $\Sigma$.
	Take an arbitrary ordering $t_1, \dotsc, t_m$ of the tuples in $\U(A_1) \times \dotsb \times \U(A_k)$.
	Also enumerate $\U(B)$ as $b_1, \dotsc b_n$.
	Let $I$ be any interpretation of $(\Sigma, \Gamma, \U)$.
	There exists an isomorphic interpretation $I'$ of $I$ that satisfies the strong ordered range constraints $\bigvee_{j = 1}^{i} f(t_i) = b_j$ for each $i = 1, \dotsc, \min \set{m, n}$.
\end{theorem}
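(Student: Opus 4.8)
The plan is to follow the proof of Theorem~\ref{theorem-soundness-sorts-unary} essentially verbatim, the point being that a DRD function symbol never has its output sort among its argument sorts, so there is never any interaction between the permutation applied to the arguments and the permutation applied to the result. As in that proof, it suffices to exhibit a domain symmetry $\sigma$ of $(\Sigma,\Gamma,\U)$ for which $I' = \sigma\bullet I$ satisfies, for each $i = 1,\dots,r$ with $r = \min\set{m,n}$, that $I'(f)(t_i)\in\set{b_1,\dots,b_i}$; this is just the unfolded form of the strong ordered range constraints. Since the problem is pure, the Value Permutation Theorem tells us that every domain permutation is a domain symmetry, so any $\sigma$ we construct will automatically make $I'$ isomorphic to $I$ and a model of the problem exactly when $I$ is. Thus the only real task is to choose $\sigma$ so that the ordering condition holds.

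First I would set $\sigma_S = \id$ on $\U(S)$ for every sort $S$ other than $B$, leaving $\sigma_B$ to be chosen. The crucial step is the use of the DRD hypothesis: each argument sort $A_j$ is distinct from $B$, so $\sigma_{A_j} = \id$ for all $j$. Recalling that a domain permutation acts on an interpretation by $(\sigma\bullet I)(f)(\sigma_{A_1}(a_1),\dots,\sigma_{A_k}(a_k)) = \sigma_B(I(f)(a_1,\dots,a_k))$, the fact that every argument sort is fixed means the argument tuples are untouched, and we obtain $I'(f)(t_i) = \sigma_B(I(f)(t_i))$ for every tuple $t_i$. Writing $y_i = I(f)(t_i) \in \U(B)$, the requirement collapses to finding a permutation $\sigma_B$ of $\U(B)$ with $\sigma_B(y_i) \in \set{b_1,\dots,b_i}$ for $i = 1,\dots,r$, and such a $\sigma_B$ is delivered by Lemma~\ref{lemma-nice-permutation0} applied with $X = \U(B)$ and the (not necessarily distinct) elements $y_1,\dots,y_m$.

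I do not anticipate a genuine obstacle: the entire content of the argument is the single observation that DRD forces $\sigma_{A_j} = \id$, which decouples the choice of $\sigma_B$ from the arguments and reduces everything to the elementary Lemma~\ref{lemma-nice-permutation0}. This is exactly what breaks in the single-sorted unary case $f : A \to A$, where permuting the argument sort also moves the inputs and so requires the more delicate Lemma~\ref{lemma-nice-permutation1}; the DRD condition avoids that complication entirely. The only minor subtlety to handle carefully in the writeup is that $m = \lvert\U(A_1)\rvert\cdots\lvert\U(A_k)\rvert$ may exceed $n = \lvert\U(B)\rvert$, so we are feeding more than $\lvert X\rvert$ elements into Lemma~\ref{lemma-nice-permutation0}; this is harmless, since for an index $i > n$ the conclusion $\sigma_B(y_i) \in \set{b_1,\dots,b_i}$ holds automatically, and in any case we only invoke it for $i \le r \le n$.
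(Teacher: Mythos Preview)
Your proposal is correct and follows essentially the same argument as the paper's own proof: set $\sigma_S = \id$ on every sort $S \neq B$, use DRD to ensure all argument sorts are fixed so that $I'(f)(t_i) = \sigma_B(y_i)$, and invoke Lemma~\ref{lemma-nice-permutation0} to produce the required $\sigma_B$. The extra remarks you make about the Value Permutation Theorem and the $m > n$ subtlety are sound and only add clarity.
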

\begin{proof}
	Let $r = \min \set{m, n}$.
	An interpretation $I'$ satisfies the strong ordered range constraints if and only if $I'(f)(t_i) \in \set{b_1, \dotsc, b_i}$ for $i = 1, \dotsc, r$.
	
	Define $y_i = I(f)(t_i)$ for each $t_i$.
	We will construct $I'$ by applying an appropriate domain symmetry $\sigma$ to $I$.
	For each sort $S$ that is not $B$, select $\sigma_S$ to be the identity on $\U(S)$.
	In particular, choose $\sigma_{A_i}$ to be the identity on $\U(A_i)$ for each $A_i$.
	Now for whichever $\sigma_B$ is chosen, we will have $I'(f)(t_i) = \sigma_B(y_i)$ for each $i$.
	So we need to construct $\sigma_B$ such that $\sigma_B(y_i) \in \set{b_1, \dotsc, b_i}$ for $i = 1, \dotsc, r$.
	The existence of such a $\sigma_B$ follows from Lemma \ref{lemma-nice-permutation0}.
\end{proof}

Consider the problem of finding a $9 \times 9$ Latin Square.
One way to formulate this as a finite model finding problem is by having a single sort $N$ with values $\U(N) = \set{1, 2, \dotsc, 9}$ along with a function symbol $f: N \times N \to N$ that represents what entry is assigned to each row-column pair, and then adding appropriate constraints on $f$ for columns and rows.
Without greater knowledge of the problem, not much symmetry breaking can be justified using Claessen and S\"orenson's technique, since $f$ is a binary function that uses only a single sort.
However, sort inference easily reveals that there is no reason that rows, columns, and grid entries need to use the same sort.
Instead, distinct sorts $R, C, E$ could be used with $\U(R) = \U(C) = \U(E) = \set{1, 2, \dotsc, 9}$.
The function can instead become the domain-range distinct function $f: R \times C \to E$, and so we can justify adding the following strong ordered range constraints.
\begin{align*}
&f(1, 1) = 1 \\
&f(1, 2) = 1 \lor f(1, 2) = 2 \\
&f(1, 3) = 1 \lor f(1, 3) = 2 \lor f(1, 3) = 3 \\
&\dotsb \\
&f(1, 9) = 1 \lor f(1, 9) = 2 \lor f(1, 9) = 3 \lor \dotsb \lor f(1, 9) = 9
\end{align*}
This demonstrates the advantage that having a many-sorted system and sort inference can provide to symmetry breaking.

\section{New Symmetry Breaking for Predicates}
\label{section-predicates}

In this section we present a new symmetry breaking technique that we have developed to reduce symmetries in the presence of predicates.
Occasionally in this section we use \emph{cycle} notation for a permutation.
When $(a_1, a_2, \dotsc, a_n)$ is written, it means the permutation that sends $a_i$ to $a_{i + 1}$ for $i = 1, \dotsc, n-1$, sends $a_n$ to $a_1$, and leaves any other values of the set in question fixed.

\subsection{Unary Predicates}
Consider a signature $\Sigma$ with a sort $A$ and unary predicate symbol $P: A \to \Bool$ as well as a domain assignment $\U$ that assigns $\U(A) = \set{a, b, c, d, e, f}$.
For an interpretation $I$, $I(P)$ is a unary relation $I(P) = \set{\alpha_1, \dots, \alpha_k} \subseteq \U(A)$, which is simply a set of domain elements.
Now consider any pure MSFOL formula $\Phi$ over $\Sigma$.
Assume for now we do not add any other symmetry breaking constraints.

Suppose we know that there is a satisfying interpretation $I$ which assigns $I(P) = \set{d, e, f}$.
There is nothing special about the particular values $d, e, f$.
By using the Value Relabeling Theorem, we could consistently interchange $d, e$ and $f$ with $a, b$, and $c$ respectively in the interpretation to get another interpretation $I'$ that assigns $I'(P) = \set{a, b, c}$ and still satisfies $\Phi$.
Conversely, if we know that there is no satisfying interpretation $I'$ that assigns $I'(P) = \set{a, b, c}$, we know by the Value Relabeling Theorem that there is no satisfying interpretation $I$ that assigns $I'(P) = \set{d, e, f}$.
The same is true not just for $\set{a, b, c}$, but any set of three elements of $\U(A)$.

The key insight here is that there because values are interchangeable, any choice for $I(P)$ of size $k$ is just as good as any other choice for $I(P)$ of size $k$.
If we choose a specific $k$-subset $Z$ of $\U(A)$ and learn that there is no interpretation that assigns $I(P) = Z$, there is no need to try any other $k$-subset; we know none of them will yield a solution.
We need only try assigning $I(P)$ to be $\emptyset, \set{a}, \set{a, b}, \set{a, b, c}, \set{a, b, c, d}, \set{a, b, c, d, e}$,  $\set{a, b, c, d, e, f}$ to determine whether a solution exists.

We can enforce restricting the search this way by introducing \emph{predicate membership constraints}.
To come up with these constraints, it is perhaps more intuitive to think of constructing the set $I(P)$, and what restrictions we place on the other elements given that a specific element is in the set.
If we wish to put the element $b$ to the set, it means that we must include $a$ in the set, so we add the constraint $P(b) \implies P(a)$.
Next if we wish to add the element $c$, it means that we must have include both $a$ and $b$ in the set, so we add the constraint $P(c) \implies P(a) \land P(b)$.
We proceed similarly for the remaining elements to get the following constraints.

\begin{align*}
&P(b) \implies P(a) \\
&P(c) \implies P(a) \land P(b) \\
&P(d) \implies P(a) \land P(b) \land P(c) \\
&P(e) \implies P(a) \land P(b) \land P(c) \land P(d) \\
&P(f) \implies P(a) \land P(b) \land P(c) \land P(d) \land P(e)
\end{align*}

These constraints are satisfied if and only if $I(P)$ is one of $\emptyset, \set{a}, \set{a, b}, \set{a, b, c}, \set{a, b, c, d}, \set{a, b, c, d, e}$, or $\set{a, b, c, d, e, f}$.
We can simplify these constraints to the following.
\begin{align*}
&P(b) \implies P(a) \\
&P(c) \implies P(b) \\
&P(d) \implies P(c) \\
&P(e) \implies P(d) \\
&P(f) \implies P(e)
\end{align*}

Now we prove the soundness of adding these predicate membership constraints.

\begin{theorem}
	[Soundness of Predicate Membership Constraints for Unary Predicates]
	\label{theorem-soundness-predicate-unary}
	Let $(\Sigma, \Gamma, \U)$ be a (pure) MSFMF problem and let $Q: A \to \Bool$ be a predicate symbol of $\Sigma$.
	Order $\U(A)$ as $a_1, \dotsc, a_m$.
	Let $I$ be any interpretation of $(\Sigma, \Gamma, \U)$.
	There exists an isomorphic interpretation $I'$ of $I$ that satisfies the predicate membership constraints $Q(a_i) \implies Q(a_{i - 1})$ for each $i = 2, \dotsc, m$.
\end{theorem}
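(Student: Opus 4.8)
The plan is to mirror the structure of the soundness proofs for the ordered range constraints (Theorems \ref{theorem-soundness-sorts-unary} and \ref{theorem-soundness-sorts-drd}): produce the required $I'$ by acting on $I$ with a suitably chosen domain permutation $\sigma$, which is automatically a domain symmetry because $(\Sigma,\Gamma,\U)$ is pure (Value Permutation Theorem), so that $I' = \sigma \bullet I$ is indeed isomorphic to $I$. The first step is to restate the target condition structurally: an interpretation $I'$ satisfies all the predicate membership constraints $Q(a_i) \implies Q(a_{i-1})$ for $i = 2, \dotsc, m$ if and only if $I'(Q)$ is a ``prefix'' of the enumeration, i.e. $I'(Q) = \set{a_1, \dotsc, a_k}$ for some $0 \le k \le m$. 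This is because the constraints say precisely that the set $I'(Q)$ is downward closed with respect to the fixed ordering $a_1, \dotsc, a_m$, and a downward-closed subset of a linearly ordered finite set is exactly an initial segment.

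Next I would work out how a domain permutation acts on a unary relation. By the definition of the action of a domain bijection on an interpretation specialized to the relational symbol $Q : A \to \Bool$, we have $a \in I(Q) \iff \sigma_A(a) \in (\sigma \bullet I)(Q)$, so $(\sigma \bullet I)(Q) = \set{ \sigma_A(a) : a \in I(Q) } = \sigma_A(I(Q))$. Thus, if we write $k = |I(Q)|$, the only thing we need from $\sigma$ is a permutation $\sigma_A$ of $\U(A)$ with $\sigma_A(I(Q)) = \set{a_1, \dotsc, a_k}$, since then $I'(Q) = \set{a_1, \dotsc, a_k}$ is a prefix and $I'$ satisfies the constraints by the characterization above.

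Constructing such a $\sigma_A$ is immediate and, unlike the earlier range-constraint proofs, needs no auxiliary combinatorial lemma: choose any bijection from the $k$-element set $I(Q)$ onto $\set{a_1, \dotsc, a_k}$ and any bijection from the remaining $m - k$ elements $\U(A) \setminus I(Q)$ onto $\set{a_{k+1}, \dotsc, a_m}$, and take $\sigma_A$ to be their union. For every sort $S \neq A$, let $\sigma_S$ be an arbitrary permutation of $\U(S)$ (e.g.\ the identity). The resulting collection $\sigma$ is a domain permutation of $(\Sigma,\Gamma,\U)$, hence a domain symmetry since the problem is pure, so $I' = \sigma \bullet I$ is isomorphic to $I$ and, by the computation above, satisfies the predicate membership constraints.

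I do not anticipate a genuine obstacle here; the proof is essentially bookkeeping. The one point that merits care is correctly identifying the action of $\sigma$ on the unary relation $I(Q)$ (getting $\sigma_A(I(Q))$ rather than $\sigma_A^{-1}(I(Q))$), and being explicit that the implication constraints are equivalent to $I'(Q)$ being an initial segment of the enumeration; once these two observations are in place the construction of $\sigma_A$ is trivial. (It is also worth remarking, as the surrounding text does, that these constraints are extended formulas added on top of a pure problem, so ``isomorphic'' is with respect to the domain symmetries of the underlying pure problem, all of which are available.)
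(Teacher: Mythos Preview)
Your proposal is correct and follows essentially the same approach as the paper: characterize the constraints as requiring $I'(Q)$ to be an initial segment $\set{a_1,\dotsc,a_k}$, then act on $I$ by a domain permutation whose $A$-component sends $I(Q)$ onto that segment, with arbitrary permutations on the other sorts. Your construction of $\sigma_A$ on the complement (any bijection onto $\set{a_{k+1},\dotsc,a_m}$) is in fact slightly more careful than the paper's, which takes $\sigma_A$ to be the identity on $\U(A)\setminus I(Q)$---a choice that need not yield a permutation when $I(Q)$ is not already an initial segment.
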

\begin{proof}
	An interpretation $I'$ satisfies the required predicate membership constraints if and only if $I'(Q)$ is one of the following sets: $\emptyset, \set{a_1}, \set{a_1, a_2}, \dotsc, \set{a_1, a_2, \dotsc, a_m}$.
	
	Write $I(Q)$ as $\set{y_1, y_2, \dotsc, y_k}$ for some integer $k \ge 0$ (note that the $y_i$'s are distinct).
	We will construct $I'$ by applying an appropriate domain symmetry $\sigma$ to $I$.
	Choose $\sigma_S$ to be an arbitrary permutation of $\U(S)$ for any sort $S \neq A$.
	Now, for whichever $\sigma_A$ is chosen, $I'(Q) = \set{\sigma_A(y_1), \dotsc, \sigma_A(y_k)}$.
	Thus it suffices to show that there exists a permutation $\sigma_A: \U(A) \to \U(A)$ such that $\set{\sigma_A(y_1), \dotsc, \sigma_A(y_k)} = \set{a_1, a_2, \dotsc, a_k}$.
	We can easily construct such a permutation by defining $\sigma_A(y_i) = a_i$ for $i = 1, \dotsc, k$ and $\sigma_A(\alpha) = \alpha$ for all $\alpha \in \U(A) \setminus I(Q)$.
\end{proof}

\subsection{Binary Predicates}

This symmetry breaking strategy is surprisingly difficult generalize to higher arity predicates.
We might intuitively think that since all values are interchangeable, all argument tuples ought to be ``interchangeable" somehow as well.
Then we could order the possible argument tuples as $t_1, \dotsc, t_m$ and define predicate membership constraints $Q(t_i) \implies Q(t_{i - 1})$ for $i = 2, \dotsc, m$.
This intuition is not correct, for two reasons.

Consider a predicate $R: A \times A \to \Bool$ and $\U(A) = \set{a, b, c}$. 
It is wrong to say that any pair in $\U(A) \times \U(A)$ is interchangeable with any other pair in $\U(A) \times \U(A)$.
While we might consider $(a, b)$ to be interchangeable with $(c, d)$, it seems incorrect to consider $(a, a)$ and $(a, b)$ interchangeable because there is a relationship between the first and second elements of the pair $(a, a)$ that is not shared by the elements of the pair $(a, b)$.

As a concrete example, consider the formula $\Phi \coloneqq \forall x : A . \lnot R(x, x)$.
This formula is satisfied if the interpretation $I(R) = \set{(a, b)}$ but is not satisfied if $I(R) = \set{(a, a)}$.
Determining whether there is a satisfying interpretation that assigns a specific set of one tuple to $R$ does not immediately tell us whether there exists a satisfying interpretation that assigns any other specific set of one tuple to $R$.
Examining the proof of Theorem \ref{theorem-soundness-predicate-unary} illuminates where the error lies.
There is no permutation $\sigma_A: \U(A) \to \U(A)$ that can map  $\set{ (a, b) }$ to $\set{(a, a)}$.
Such a permuation would be required to send both $a$ and $b$ to $a$.

However, we have the benefit of working in a many-sorted system.
If instead we have a predicate $Q: A \times B \to \Bool$, we should be able to avoid the tuple element interdependence that arose above.
Formulas like the $\Phi$ above would not be allowed.
We would be allowed two permutations $\sigma_A : \U(A) \to \U(A)$ and $\sigma_B : \U(B) \to \U(B)$, the former operating on the first tuple element and the latter on the second.
The above problem would be no issue.
Learning whether there is a satisfying interpretation for a specific set of one tuple to $R$ does allow us to immediately determine whether there exists a satisfying interpretation that assigns any given set of one tuple to $R$.

Unfortunately, this fails to address a second complication.
Consider the formula $\Phi \coloneqq \forall x : A, \, y, z : B . \, Q(x, y) \land Q(x, z) \implies (y = z)$, and say $\U(A) = \set{a_1, \dotsc, a_n}$, $\U(B) = \set{b_1, \dotsc, b_m}$.
Here, no satisfying interpretation assigns $Q$ a set of one tuple, so our claim above still holds.
However, consider the sets $\set{ (a_1, b_1), (a_1, b_2) }$ and $\set{ (a_1, b_2), (a_2, b_1)  }$.
Assigning $I(Q) = \set{ (a_1, b_1), (a_1, b_2) }$ does not satisfy $\Phi$, while assigning $I(Q) = \set{ (a_1, b_2), (a_2, b_1) }$ does satisfy $\Phi$.
There do not exist permutations $\sigma_A : \U(A) \to \U(A)$ and $\sigma_B : \U(B) \to \U(B)$ such that  $\set{ (\sigma_A(a_1), \sigma_B(b_1)), (\sigma_A(a_1), \sigma_B(b_2)) } =  \set{ (a_1, b_2), (a_2, b_1) }$, since $\sigma_A$ would need to send $a_1$ to both $a_1$ and $a_2$.
The problem here is that $a_1$ appears twice (but in different tuples) in one of the sets, but only once in the other.
While any two individual tuples are ``interchangeable" in that there can be found a pair of permutations (one for each sort) that maps one tuple to the other, this is not in general true for two \emph{sets} of tuples.

While we have painted a grim picture, we do not mean to suggest that nothing can be done.
Indeed, we should expect some symmetry breaking to be possible for higher-arity predicates, since for example the sets $\set{ (a_1, b_1), (a_1, b_2) }$ and $\set{ (a_1, b_2), (a_1, b_3) }$ are related by a pair of permutations, specifically the identity function $\sigma_A = \id_A$ and the cycle $\sigma_B = (b_1 \ b_2 \ b_3)$.
To break this symmetry, we would want to say that the tuple $(a_1, b_3)$ should only be included only if the tuples $(a_1, b_2)$ and $(a_1, b_1)$ have already been included.
We can take this further, and say that the set of all $b$ such that $(a_1, b) \in I(Q)$ should be one of $\emptyset, \set{b_1}, \set{b_1, b_2}$, or $\set{b_1, b_2, b_3}$, which is captured by the following constraints.

\begin{align*}
&Q(a_1, b_2) \implies Q(a_1, b_1) \\
&Q(a_1, b_3) \implies Q(a_1, b_2) \\
&Q(a_1, b_4) \implies Q(a_1, b_3) \\
&\dotsc
\end{align*}

We call these the \emph{$a_1$-predicate membership constraints}.
This can be thought of as taking the partially applied predicate $Q(a_1, \cdot)$ and creating the unary predicate membership constraints for it.
The proof that this symmetry breaking strategy is sound is almost the same proof as for Theorem \ref{theorem-soundness-predicate-unary}.

\begin{theorem}
	[Soundness of Predicate Membership Constraints for Binary Predicates]
	\label{predicate-soundness-binary}
	Let $(\Sigma, \Gamma, \U)$ be a (pure) MSFMF instance and let $Q: A \times B \to \Bool$ be a predicate symbol of $\Sigma$, where $A$ and $B$ are distinct sorts.
	Enumerate $\U(A)$ as $a_1, \dotsc, a_n$ and $\U(B)$ as $b_1, \dotsc, b_m$.
	Let $I$ be any interpretation of $(\Sigma, \Gamma, \U)$.
	There exists an isomorphic interpretation $I'$ of $I$ such that $I'$ satisfies the $a1$-predicate membership constraints $Q(a_1, b_j) \implies Q(a_1, b_{j - 1})$ for $j = 2, \dotsc, m$ .
\end{theorem}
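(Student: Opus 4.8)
The plan is to adapt the proof of Theorem~\ref{theorem-soundness-predicate-unary} almost verbatim, the one new ingredient being that the domain permutation we build must fix the first coordinate $a_1$ so as not to disturb which ``row'' of $Q$ we are constraining. First I would restate the target constraints semantically: an interpretation $I'$ satisfies the $a_1$-predicate membership constraints $Q(a_1, b_j) \implies Q(a_1, b_{j-1})$ for $j = 2, \dotsc, m$ if and only if the set $\set{ b \in \U(B) : (a_1, b) \in I'(Q) }$ is one of $\emptyset, \set{b_1}, \set{b_1, b_2}, \dotsc, \set{b_1, \dotsc, b_m}$; that is, this ``slice'' of $I'(Q)$ at $a_1$ must be an initial segment of the enumeration $b_1, \dotsc, b_m$.

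Next I would write $\set{ b \in \U(B) : (a_1, b) \in I(Q) } = \set{y_1, \dotsc, y_k}$ for some $k \ge 0$ with the $y_i$ distinct, and construct $I'$ as $\sigma \bullet I$ for a suitable domain permutation $\sigma$. Since the problem is pure, the Value Permutation Theorem guarantees that every domain permutation of $(\Sigma, \Gamma, \U)$ is a domain symmetry, so any choice of $\sigma$ already yields an $I'$ isomorphic to $I$; it only remains to pick $\sigma$ making $I'$ satisfy the constraints. I would take $\sigma_S = \id_{\U(S)}$ for every sort $S \neq B$ — in particular $\sigma_A = \id_{\U(A)}$, so $\sigma_A(a_1) = a_1$ — and choose $\sigma_B$ with $\sigma_B(y_i) = b_i$ for $i = 1, \dotsc, k$, completed arbitrarily to a permutation of $\U(B)$ on the remaining values. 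Because $\sigma_A$ fixes $a_1$, the action of $\sigma$ on $I(Q)$ sends each tuple $(a_1, b)$ to $(a_1, \sigma_B(b))$, whence $\set{ b : (a_1, b) \in I'(Q) } = \set{\sigma_B(y_1), \dotsc, \sigma_B(y_k)} = \set{b_1, \dotsc, b_k}$, an initial segment as required.

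I do not expect a genuine obstacle: the argument is a routine adaptation of the unary case, and the ``hard part'' is really just the observation that we must not let $\sigma_A$ move $a_1$ (picking $\sigma_A = \id$ is the cleanest way to do this) — otherwise the slice we controlled in $I$ would become a slice at a different $A$-value in $I'$. The hypothesis that $A$ and $B$ are distinct sorts is exactly what permits permuting the second coordinate freely while pinning the first; this is the same structural phenomenon that made the DRD-function arguments (Theorems~\ref{theorem-soundness-sorts-unary} and~\ref{theorem-soundness-sorts-drd}) go through, so no further machinery is needed.
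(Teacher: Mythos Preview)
Your proposal is correct and follows essentially the same approach as the paper: both restate the constraints as requiring the $a_1$-slice of $I'(Q)$ to be an initial segment of $b_1,\dotsc,b_m$, set $\sigma_S=\id$ for all sorts $S\neq B$ (in particular $\sigma_A=\id$ so that $a_1$ is fixed), and choose $\sigma_B$ to send the elements of the $a_1$-slice of $I(Q)$ to $b_1,\dotsc,b_k$. Your phrasing ``completed arbitrarily to a permutation of $\U(B)$'' is in fact slightly cleaner than the paper's, which asserts $\sigma_B$ acts as the identity off the slice --- a choice that need not literally define a permutation when $\{b_1,\dotsc,b_k\}\neq\{y_1,\dotsc,y_k\}$ --- but the intended argument is identical.
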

\begin{proof}
	Given an interpretation $J$, let $\restrict {J(Q)} {a_i} = \set{ b \in \U(B) : (a_i, b) \in I'(Q) }$.
	An interpretation $J$ satisfies the $a1$-predicate membership constraint if and only if $\restrict {J(Q)} {a_1}$ is one of the following sets: $\emptyset$, $\set{b_1}$, $\set{b_1, b_2}$, $\dotsc$, $\set{b_1, b_2, \dotsc, b_m}$.
	
	We will construct $I'$ by applying an appropriate domain symmetry $\sigma$ to $I$.
	For each sort $S$ that is not $B$, select $\sigma_S$ to be the identity on $\U(S)$.
	In particular, choose $\sigma_A$ to be the identity on $\U(A)$.
	Now we need to find is a permutation $\sigma_B : \U(B) \to \U(B)$ such that, applying $\sigma_B$ pointwise to let $\restrict {I(Q)} {a_1}$ yields one of the sets listed above.
	Such a permutation is easy to construct.
	Writing $\restrict {I(Q)} {a_1} = \set{\beta_1, \dotsc, \beta_k}$, define $\sigma_B$ such that $\sigma_B(\beta_i) = b_i$ for $i = 1, \dotsc, k$ and $\sigma_B$ acts as the identity on $\U(B) \setminus \restrict {I(Q)} {a_1}$.
\end{proof}

Note that the above theorem only claims that we can add the predicate membership constraints for a single value of $\U(A)$.
We would hope that we could add predicate membership constraints for more values of $a$, but unfortunately this is not the case.
To do so, we would need to find a single $\sigma_B$ that sends $\restrict {I(Q)} {a_i}$ to one of the sets $\emptyset, \set{b_1}, \set{b_1, b_2}, \dotsc, \set{b_1, b_2, \dotsc, b_m}$ for \emph{every} value of $i$, which is not always possible.
Consider the case where $\U(A) = \set{a_1, a_2}$ and $\U(B) = \set{b_1, b_2}$ with a starting interpretation where $I(Q) = \set{ (a_1, b_1), (a_2, b_2) }$.
There are only two permutations on $\U(B)$, the identity and $(b_1 \ b_2)$.
Applying the former yields the same interpretation $I'(Q) = \set{ (a_1, b_1), (a_2, b_2) }$, which fails $Q(a_2, b_2) \implies Q(a_2, b_1)$, while applying the latter yields $I'(Q) = \set{ (a_1, b_2), (a_2, b_1)}$, which fails $Q(a_1, b_2) \implies Q(a_1, b_1)$.

We also want to emphasize that Theorem \ref{predicate-soundness-binary} might fail for a predicate $R: A \times A \to \Bool$ that uses only one argument sort.
The proof takes advantage of the fact that it can work with two permutations by defining $\sigma_A$ to be the identity and only needing to find a suitable $\sigma_B$.
Such a strategy no longer works when only working with a single permutation.
Consider where $\U(A) = \set{a_1, a_2}$ with a starting interpretation $I(R) = \set{ (a_1, a_2), (a_2, a_1) }$.
Both permutations of $\U(A)$ send $I(R)$ to itself, and this interpretation does not satisfy the constraint $Q(a_1, a_2) \implies Q(a_1, a_1)$.
Once more this demonstrates the value of sort inference, since it can be used to identify circumstances where $R$ could instead be more generally typed and allow for greater symmetry breaking.

\section{Combining Symmetry Breaking Strategies}
\label{section-combining}

We have rigorously proven the soundness of various static symmetry breaking schemes in the presence of constants, functions, and predicates.
They cannot be arbitrarily combined however, since adding symmetry breaking constraints introduces domain values into the formulas, and values may no longer be interchangeable.

For example, consider the MSFMF problem $P = (\Theta, \Gamma, \U)$ defined as follows.
\begin{align*}
&\Theta(\Sigma) = \set{A}, \mathscr F(\Sigma) = \set{c: A}, \mathscr R(\Sigma) = \set{P: A \to \Bool} \\
&\Gamma = \set{\lnot P(c) \land \exists x: A . \, P(x) } \\
&\U(A) = \set{a_1, a_2, a_3}
\end{align*}

This problem is satisfiable.
Using Theorem \ref{theorem-soundness-constants}, we can add the ordered constant constraint $c = a_1$.
Alternatively, using Theorem \ref{theorem-soundness-predicate-unary}, we can add the predicate membership constraints $P(a_3) \implies P(a_2)$ and $P(a_2) \implies P(a_1)$.
However, we cannot add all of these constraints simultaneously, since the resulting problem becomes unsatisfiable.
The issue is that after adding the constraint $c = a_1$, the problem is now an extended MSFMF problem instead of a pure problem, and not all of the values in $\U(A)$ are interchangeable.

Fortunately, after adding a set of symmetry breaking constraints, many values are likely to still be interchangeable.
For example, since $a_2$ and $a_3$ have not been used after adding $c = a_1$, the set $\set{a_2, a_3}$ is still value-interchangeable.
In this section we generalize our symmetry breaking strategies and proofs of correctness to extended problems.
The result of this generalization is that we determine under what conditions these symmetry breaking schemes can be combined soundly.
In this setting, for a given sort $A$, $\U(A)$ may not necessarily be interchangeable, but a subset of $\U(A)$ is value-interchangeable.

\subsection{Theorems and Discussion}

The significance of these theorems may be bogged down by the technical details of the proofs, so first we will simply state the theorems and discuss their implications.
Rigorous proofs are provided in the next subsection.

We begin with a theorem analogous to Theorem \ref{theorem-soundness-constants}, which involved symmetry breaking for constants.
\begin{restatable}[Soundness of Constant Symmetry Breaking, Extended]{restatable-theorem}{constextended}
	\label{theorem-soundness-constants-extended}
	Let $(\Sigma, \Gamma, \U)$ be an eMSFMF problem with a sort $A$.
	Let $X \subseteq \U(A)$ be a value-interchangeable set for $A$, and enumerate $X$ as $a_1, \dotsc, a_m$.
	Let $c_1, \dotsc, c_n$ be constants of $\Sigma$ of sort $A$.
	Let $I$ be any interpretation of $(\Sigma, \Gamma, \U)$.
	There exists an isomorphic interpretation $I'$ of $I$ such that all of the following hold.
	\begin{enumerate}
		\item For $k = 1, \dotsc, \min\set{m, n}$, $I'$ satisfies $\left( \bigvee_{i = 1}^k c_k = a_i \right) \lor \left( \bigvee_{a \in \U(A) \setminus X} c_k = a \right)$.
		\item For $k = 2, \dotsc, \min\set{m, n}$ and $d = 2, \dotsc, k$, $I'$ satisfies $(c_k = a_d) \implies \left( \bigvee_{i = 1}^{k - 1} c_i = a_{d - 1} \right)$.
	\end{enumerate}
\end{restatable}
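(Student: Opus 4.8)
The plan is to follow the proof of Theorem~\ref{theorem-soundness-constants} closely, the only genuine change being that the domain symmetry we build must now act as the identity outside the value-interchangeable set $X$, so that it remains a symmetry of this (extended) problem. First I would unfold the two families of constraints into pointwise conditions on the target interpretation: writing $r = \min\set{m,n}$, an interpretation $I'$ satisfies the constraints in the statement if and only if, for each $k = 1, \dotsc, r$, either $I'(c_k) \in \U(A) \setminus X$, or $I'(c_k) \in \set{a_1, \dotsc, a_k}$ and whenever $I'(c_k) = a_d$ with $2 \le d \le k$ there is some $i < k$ with $I'(c_i) = a_{d-1}$.

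Next I would construct the symmetry. Put $\sigma_\theta = \id_{\U(\theta)}$ for every sort $\theta \ne A$ and let $\sigma_A$ be a permutation of $\U(A)$ that fixes every element of $\U(A) \setminus X$ pointwise, so that $\sigma_A$ restricts to a permutation of $X$; such a $\sigma$ solely permutes $X$, hence, since $X$ is value-interchangeable for $A$, $\sigma$ is a domain symmetry of the problem and $I' \coloneqq \sigma \bullet I$ is isomorphic to $I$. For any constant $c_k$ with $I(c_k) \notin X$ we get $I'(c_k) = \sigma_A(I(c_k)) = I(c_k) \notin X$, so the first constraint holds via its $\U(A) \setminus X$ disjunct and the second holds vacuously. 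It therefore remains only to choose $\sigma_A$ on $X$ so as to handle the indices $k \le r$ with $I(c_k) \in X$.

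To do this I would enumerate those indices as $1 \le i_1 < i_2 < \dotsc < i_p \le r$ and invoke Lemma~\ref{lemma-nice-permutation0} on the set $X$ (of size $m$) with the possibly repeating sequence $x_t = I(c_{i_t})$ for $t = 1, \dotsc, p$; identifying the target $\set{1, \dotsc, m}$ with the enumeration $a_1, \dotsc, a_m$, this produces the restriction of $\sigma_A$ to $X$, which I then extend to all of $\U(A)$ by the identity. The lemma supplies $\sigma_A(x_t) \in \set{a_1, \dotsc, a_t}$ together with the canonicity property that if $\sigma_A(x_t) = a_d$ with $d \ge 2$ then $\sigma_A(x_j) = a_{d-1}$ for some $j < t$. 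The one point requiring care is translating the lemma index $t$ back to the constant index $i_t$: since the $i_t$ are distinct positive integers we have $i_t \ge t$, so $\sigma_A(I(c_{i_t})) \in \set{a_1, \dotsc, a_t} \subseteq \set{a_1, \dotsc, a_{i_t}}$, and the canonicity witness $j < t$ yields a valid earlier constant index $i_j < i_t$. I expect this index bookkeeping to be the main obstacle, though it is more a matter of care than of difficulty; as a sanity check I would also verify the degenerate case $X = \U(A)$, where the $\U(A)\setminus X$ disjunct is empty, $i_t = t$ throughout, and the argument reduces exactly to that of Theorem~\ref{theorem-soundness-constants}.
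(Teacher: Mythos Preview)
Your proposal is correct and follows essentially the same approach as the paper: fix every sort other than $A$, let $\sigma_A$ be the identity on $\U(A)\setminus X$ together with a permutation of $X$ obtained from Lemma~\ref{lemma-nice-permutation0} applied to the subsequence of $I(c_1),\dotsc,I(c_n)$ lying in $X$, and invoke value-interchangeability to conclude $\sigma$ is a domain symmetry. Your explicit remark that $i_t \ge t$ justifies the index translation is a detail the paper elides with ``by the construction of the subsequence,'' so if anything your version is slightly more careful.
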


The first thing of note is that Theorem \ref{theorem-soundness-constants-extended} justifies performing symmetry breaking for constants separately for each sort in a pure MSFMF problem, as done by Claessen and S\"orenson \cite{claessen_new_2003}.
For example, suppose there are constants $x_1: A, x_2: A, y_1: B, y_2: B$, and the domain assignment specifies $\U(A) = \set{a_1, a_2}$ and $\U(B) = \set{b_1, b_2}$.
First, we can add the symmetry breaking constraints $x_1 = a_1$ and $x_2 = a_1 \lor x_2 = a_2$.
In this new extended problem, the entirety of $\U(B)$ is still value-interchangeable for $B$, so we can add the constraints $y_1 = b_1$ and $y_2 = b_1 \lor y_2 = b_2$.

More interesting however is when we want to perform symmetry breaking on constants of a sort that already has some of its domain values in the formulas.
For example, consider the extended MSFMF problem $P = (\Theta, \Gamma, \U)$ defined as follows.
\begin{align*}
&\Theta(\Sigma) = \set{A, B}, \mathscr F(\Sigma) = \set{c_1: A, c_2: A, f: B \to A}, \mathscr R(\Sigma) = \set{P: A \to Bool} \\
&\Gamma = \set{P(a_3), f(b_1) = a_4} \\
&\U(A) = \set{a_1, a_2, a_3, a_4, a_5}, \U(B) = \set{b_1, b_2}
\end{align*}
The set of values $X = \set{a_1, a_2, a_5}$ does not appear in the formulas and so is value-interchangeable.
Therefore Theorem \ref{theorem-soundness-constants-extended} justifies the addition of the following constraints (the parentheses are just there for emphasis to separate which of the disjuncts comes from the interchangeable values and which come from the other values).
\begin{align*}
	&\left(c_1 = a_1 \right) \lor \left(c_1 = a_3 \lor c_1 = a_4 \right) \\
	&\left(c_2 = a_1 \lor c_2 = a_2 \right) \lor \left(c_2 = a_3 \lor c_2 = a_4 \right)
\end{align*}

The following is the generalization of Theorem \ref{theorem-soundness-sorts-drd}, which considered DRD functions.

\begin{restatable}[Soundness of Strong Ordered Range Constraints for DRD Functions, Extended]{restatable-theorem}{drdextended}
	\label{theorem-soundness-sorts-drd-extended}
	Let $(\Sigma, \Gamma, \U)$ be an eMSFMF problem and let $f: A_1 \times \dotsb \times A_k \to B$ be a DRD function symbol of $\Sigma$.
	Let $X \subseteq \U(B)$ be a value-interchangeable set for $B$.
	Take an arbitrary ordering $t_1, \dotsc, t_m$ of the tuples in $\U(A_1) \times \dotsb \times \U(A_k)$.
	Enumerate $X$ as $b_1, \dotsc b_n$.
	Let $I$ be any interpretation of $(\Sigma, \Gamma, \U)$.
	There exists an isomorphic interpretation $I'$ of $I$ that satisfies the strong ordered range constraints $\left( \bigvee_{j = 1}^{i} f(t_i) = b_j \right) \lor \left( \bigvee_{b \in \U(B) \setminus X} f(t_i) = b \right)$ for each $i = 1, \dotsc, \min \set{m, n}$.
\end{restatable}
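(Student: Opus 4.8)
The plan is to mirror the proof of Theorem \ref{theorem-soundness-sorts-drd}, adapting it to the extended setting in two ways: the domain permutation I construct must actually be a \emph{domain symmetry} (no longer automatic once the problem is extended), and the target constraints now carry an extra disjunct for the values of $\U(B)$ outside $X$. I would first set $r = \min\set{m, n}$ and note that $I'$ satisfies the stated constraints precisely when, for each $i = 1, \dotsc, r$, either $I'(f)(t_i) \in \set{b_1, \dotsc, b_i}$ or $I'(f)(t_i) \in \U(B) \setminus X$; equivalently, whenever $I'(f)(t_i)$ happens to lie in $X$, it must lie among the first $i$ enumerated elements of $X$.

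Writing $y_i = I(f)(t_i)$, I would build $I' = \sigma \bullet I$ for a domain permutation $\sigma$ that solely permutes $X$: take $\sigma_S = \id_{\U(S)}$ for every sort $S \ne B$ — in particular for each argument sort $A_t$, which is legitimate exactly because $f$ is DRD — and take $\sigma_B$ to be a permutation of $\U(B)$ fixing $\U(B) \setminus X$ pointwise. Since $X$ is value-interchangeable for $B$, any such $\sigma$ is a domain symmetry, so $I'$ is isomorphic to $I$. Because all the input permutations are the identity, the action formula collapses to $I'(f)(t_i) = \sigma_B(y_i)$ for every tuple $t_i$. For each index $i$ with $y_i \notin X$ this gives $\sigma_B(y_i) = y_i \in \U(B) \setminus X$, so those indices satisfy the constraint no matter how $\sigma_B$ behaves on $X$.

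It then remains to pick the restriction of $\sigma_B$ to $X$. Let $i_1 < i_2 < \dotsc < i_\ell$ enumerate the indices $i \in \set{1, \dotsc, r}$ with $y_i \in X$ (so $\ell \le r \le n = \lvert X \rvert$). The sequence $y_{i_1}, \dotsc, y_{i_\ell}$ consists of elements of $X$, possibly with repetitions, so Lemma \ref{lemma-nice-permutation0} supplies a bijection $\tau : X \to \set{1, \dotsc, n}$ with $\tau(y_{i_j}) \le j$ for each $j$; the canonicity clause of that lemma is not needed here. Setting $\sigma_B(x) = b_{\tau(x)}$ for $x \in X$ makes $\sigma_B$ a bijection of $X = \set{b_1, \dotsc, b_n}$, and since $\tau(y_{i_j}) \le j \le i_j$ we obtain $I'(f)(t_{i_j}) = \sigma_B(y_{i_j}) \in \set{b_1, \dotsc, b_{i_j}}$, exactly the condition at index $i_j$. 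Combined with the automatic case for indices with $y_i \notin X$, this shows $I'$ satisfies all the constraints.

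I expect the only genuinely non-routine step to be this passage to the subsequence $y_{i_1}, \dotsc, y_{i_\ell}$: Lemma \ref{lemma-nice-permutation0} applies only when every listed element lies in $X$, so one cannot feed it $y_1, \dotsc, y_r$ directly, and one must check that its guarantee $\tau(y_{i_j}) \le j$ on the $j$-th in-$X$ value yields, via $j \le i_j$, the bound needed at the \emph{original} index $i_j$. The remaining ingredients — that $\sigma$ is a domain symmetry by value-interchangeability, that DRD-ness permits fixing all argument sorts, and that $\sigma$ acts on $I$ by $f(t_i) \mapsto \sigma_B(y_i)$ — are immediate from the definitions.
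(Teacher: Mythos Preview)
Your proposal is correct and follows essentially the same approach as the paper's proof: both define $\sigma$ to be the identity on every sort except $B$, restrict $\sigma_B$ to fix $\U(B)\setminus X$ pointwise (so value-interchangeability guarantees $\sigma$ is a domain symmetry), pass to the subsequence of the $y_i$ that lie in $X$, and invoke Lemma~\ref{lemma-nice-permutation0} on that subsequence. Your write-up is in fact slightly more explicit than the paper's about why the subsequence reduction works, spelling out the inequality $\tau(y_{i_j}) \le j \le i_j$ where the paper simply says ``by the construction of the subsequence it suffices\ldots''.
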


We wish to emphasize the possibly surprising fact that it does not matter whether the input values to a function are interchangeable or not.
As an example, suppose we have the extended MSFMF problem $P = (\Theta, \Gamma, \U)$ defined by the following.
\begin{align*}
&\Theta(\Sigma) = \set{A, B}, \mathscr F(\Sigma) = \set{x: B, y: B, f: B \times C \to A}, \mathscr R(\Sigma) = \emptyset \\
&\Gamma = \set{x = a_1, f(b_1, c_1) = a_1, f(b_2, c_1) \neq f(b_2, c_3), f(b_1, c_1) = f(b_2, c_2)} \\
&\U(A) = \set{a_1, a_2, a_3, a_4, a_5, a_6}, \U(B) = \set{b_1, b_2}, \U(C) = \set{c_1, c_2}
\end{align*}
While $a_1$ appears in the formulas, the values in the set $X = \set{a_2, a_3, a_4, a_5, a_6}$ do not appear and are value-interchangeable for $A$.
Therefore we can add the following symmetry breaking constraints, even though values of sorts $B$ and $C$ have been used in the formulas (again the parentheses separate the disjuncts arising from interchangeable values from the other values).
\begin{align*}
	&\left( f(b_1, c_1) = a_2 \right) \lor \left( f(b_1, c_1) = a_1 \right) \\
	&\left( f(b_1, c_2) = a_2 \lor f(b_1, c_2) = a_3 \right) \lor \left( f(b_1, c_2) = a_1 \right) \\
	&\left( f(b_2, c_1) = a_2 \lor f(b_2, c_1) = a_3 \lor f(b_2, c_1) = a_4 \right) \lor \left( f(b_2, c_1) = a_1 \right) \\
	&\left( f(b_2, c_2) = a_2 \lor f(b_2, c_2) = a_3 \lor f(b_2, c_2) = a_4 \lor f(b_2, c_2) = a_5 \right) \lor \left( f(b_2, c_2) = a_1 \right)
\end{align*}

Finally we generalize Theorem \ref{theorem-soundness-predicate-unary}.
\begin{restatable}[Soundness of Predicate Membership Constraints for Unary Predicates, Extended]{restatable-theorem}{predextended}
	\label{theorem-soundness-predicate-unary-extended}
	Let $(\Sigma, \Gamma, \U)$ be an eMSFMF problem and let $Q: A \to \Bool$ be a predicate symbol of $\Sigma$.
	Let $X \subseteq \U(A)$ be a value-interchangeable set for $A$.
	Order $X$ as $d_1, \dotsc, d_m$.
	Let $I$ be any interpretation of $(\Sigma, \Gamma, \U)$.
	There exists an isomorphic model $I'$ of $I$ that satisfies the predicate membership constraint $Q(d_i) \implies Q(d_{i - 1})$ for each $i = 2, \dotsc, m$.
\end{restatable}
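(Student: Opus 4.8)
The plan is to adapt the proof of Theorem~\ref{theorem-soundness-predicate-unary} almost verbatim, with one essential modification forced by the extended setting: since $(\Sigma, \Gamma, \U)$ is now an eMSFMF problem, an arbitrary domain permutation need not be a domain symmetry, so rather than choosing $\sigma_A$ freely I must build a permutation that \emph{solely permutes} $X$ and then invoke the hypothesis that $X$ is value-interchangeable for $A$ to conclude it is a domain symmetry. First I would record the characterization of success: an interpretation $I'$ satisfies all the constraints $Q(d_i) \implies Q(d_{i-1})$ for $i = 2, \dotsc, m$ if and only if $I'(Q) \cap X$ is one of the ``prefixes'' $\emptyset, \set{d_1}, \set{d_1, d_2}, \dotsc, \set{d_1, \dotsc, d_m}$. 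The constraints mention only the $d_i$, so the elements of $I'(Q)$ lying outside $X$ are entirely unconstrained; this is the crucial difference from the pure case, where $X = \U(A)$.

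Next I would fix an arbitrary interpretation $I$ and write $I(Q) \cap X = \set{y_1, \dotsc, y_k}$ with the $y_i$ distinct and $k \le m$. Recalling that a domain permutation $\sigma$ acts on a relation by $(\sigma \bullet I)(Q) = \set{\sigma_A(a) : a \in I(Q)}$, and intending $\sigma_A$ to fix every element of $\U(A) \setminus X$, the part $I(Q) \setminus X$ of the relation is left untouched while $\set{y_1, \dotsc, y_k}$ is carried onto $\sigma_A(\set{y_1, \dotsc, y_k})$. It therefore suffices to define $\sigma_A$ by $\sigma_A(y_i) = d_i$ for $i = 1, \dotsc, k$, extend it to a permutation of $X$ arbitrarily (matching the remaining $m-k$ elements of $X \setminus \set{y_1,\dotsc,y_k}$ bijectively to $X \setminus \set{d_1,\dotsc,d_k}$), set $\sigma_A$ to be the identity on $\U(A) \setminus X$, and set $\sigma_{\theta'}$ to be the identity on $\U(\theta')$ for every sort $\theta' \neq A$.

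The key step is then to observe that $\sigma$ solely permutes $X$ by construction, so since $X$ is value-interchangeable for $A$, $\sigma$ is a domain symmetry; hence $I' = \sigma \bullet I$ is isomorphic to $I$. To finish, I would check the constraints: $(\sigma \bullet I)(Q) \cap X = \sigma_A(\set{y_1, \dotsc, y_k}) = \set{d_1, \dotsc, d_k}$, a prefix of $X$, so by the first observation $I'$ satisfies every $Q(d_i) \implies Q(d_{i-1})$. I expect the only real subtlety — and it is a mild one — to lie in reconciling the two demands on $\sigma_A$: it must move $I(Q) \cap X$ onto an initial segment of the enumeration of $X$, yet it must be the identity on all of $\U(A) \setminus X$ (and on every other sort's domain) so that value-interchangeability applies. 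Confining the ``active region'' of the permutation to $X$ is exactly what makes both requirements hold at once, and it guarantees the unconstrained part $I(Q) \setminus X$ of the relation is undisturbed, which is why we do not — and cannot, in general — claim anything stronger about $I'(Q)$ outside $X$.
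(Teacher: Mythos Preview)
Your proposal is correct and is essentially identical to the paper's own proof: both characterize success by $I'(Q)\cap X$ being a prefix $\set{d_1,\dotsc,d_k}$, set $\sigma_S=\id$ for $S\neq A$, take $\sigma_A$ to be the identity on $\U(A)\setminus X$ together with a permutation of $X$ sending each $y_i$ to $d_i$, and invoke value-interchangeability to conclude $\sigma$ is a domain symmetry. If anything, you are slightly more explicit than the paper in spelling out how $\sigma_A$ is extended on $X\setminus\set{y_1,\dotsc,y_k}$ and in verifying the computation $(\sigma\bullet I)(Q)\cap X=\set{d_1,\dotsc,d_k}$.
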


Consider for example the extended MSFMF problem $P = (\Theta, \Gamma, \U)$ defined as follows.
\begin{align*}
&\Theta(\Sigma) = \set{A}, \mathscr F(\Sigma) = \set{c: A}, \mathscr R(\Sigma) = \set{P: A \to \Bool} \\
&\Gamma = \set{c = a_1, \lnot P(c) \land \exists x: A . \, P(x) } \\
&\U(A) = \set{a_1, a_2, a_3}
\end{align*}
This is the first problem we considered in this section, after we have added the symmetry breaking constraint $c = a_1$.
As mentioned earlier, we are not justified in adding the constraints $P(a_3) \implies P(a_2)$ and $P(a_2) \implies P(a_1)$ since $\U(A)$ is not value-interchangeable for $A$.
However $X = \set{a_2, a_3}$ is value-interchangeable for $A$, so Theorem \ref{theorem-soundness-predicate-unary-extended} justifies the addition of the constraint $P(a_3) \implies P(a_2)$.

\subsection{Soundness Proofs}

Here we recall the theorems from the previous section and provide proofs for them.
It is important to remember that unlike in the proofs of the pure counterparts to these theorems, not all domain permutations are domain symmetries, so value-interchangeability has to be taken into consideration to prove that the constructed domain permutations are domain symmetries.

\constextended*

\begin{proof}
	Let $r = \min\set{m, n}$.
	An interpretation $I'$ satisfies the above constraints if and only if, for $k = 1, \dotsc, r$, 
	\begin{itemize}
		\item $I'(c_k) \in \set{a_1, \dotsc, a_k} \cup (\U(A) \setminus X)$, and
		\item for all $d = 2, \dotsc, k$, if $I'(c_k) = a_d$ then, for some $i \in \set{1, \dotsc, k - 1}$, $I'(c_i) = a_{d - 1}$.
	\end{itemize}
	
	Define $y_k = I(c_k)$ for each $c_k$.
	We will construct $I'$ that satisfies the above requirements by applying an appropriate domain symmetry $\sigma$ to $I$.
	For each sort $S$ that is not $A$, choose $\sigma_S$ to be the identity on $\U(S)$.
	For whichever $\sigma_A$ is chosen, we will have that $I'(c_k) = \sigma_A(y_k)$.
	By definition of value-interchangeability, if $\sigma_A$ acts as the identity on $\U(A) \setminus X$, then $\sigma$ is a domain symmetry.
	So it suffices to construct $\sigma_A$ such that,
	\begin{itemize}
		\item $\sigma_A$ acts as the identity on $\U(A) \setminus X$,
		\item for $k = 1, \dotsc, r$, $\sigma_A(y_k) \in \set{a_1, \dotsc, a_k} \cup (\U(A) \setminus X)$, and
		\item for $k = 1, \dotsc, r$ and $d = 2, \dotsc, k$, if $\sigma_A(y_k) = a_d$ then, for some $i \in \set{1, \dotsc, k - 1}$, $\sigma_A(y_i) = a_{d - 1}$.
	\end{itemize}
	Define $\sigma_A$ to be the union of the identity on $\U(A) \setminus X$ and some as of yet undetermined permutation $\pi$ on $X$.
	This choice satisfies the first requirement, as well as the second and third requirements for any $k$ such that $y_k \in \U(A) \setminus X$ (the third requirement holds in this case since $\sigma(y_k) \notin X$ and the antecedent of the implication is false).
	Therefore we need only construct the permutation $\pi: X \to X$ such that, for any $k$ where $y_k \in X$,
	\begin{itemize}
		\item $\sigma_A(y_k) \in \set{a_1, \dotsc, a_k}$, and
		\item for all $d = 2, \dotsc, k$, if $\sigma_A(y_k) = a_d$ then, for some $i \in \set{1, \dotsc, k - 1}$, $\sigma_A(y_i) = a_{d - 1}$.
	\end{itemize}
	
	Construct a subsequence $(z_1, \dotsc, z_l)$ of $(y_1, \dotsc, y_n)$ by removing all values $y_k$ such that $y_k \notin X$ and keeping the same relative ordering.
	By the construction of the subsequence it suffices to construct $\pi$ such that for $k = 1, \dotsc, l$,
	\begin{itemize}
		\item $\sigma_A(z_k) \in \set{a_1, \dotsc, a_k}$, and
		\item for all $d = 2, \dotsc, k$, if $\sigma_A(z_k) = a_d$ then, for some $i \in \set{1, \dotsc, k - 1}$, $\sigma_A(z_i) = a_{d - 1}$.
	\end{itemize}
	Such a permutation exists by Lemma \ref{lemma-nice-permutation0}.
\end{proof}

\drdextended*

\begin{proof}
	Let $r = \min \set{m, n}$.
	An interpretation $I'$ satisfies the strong ordered range constraints if and only if $I'(f)(t_i) \in \set{b_1, \dotsc, b_i} \cup \left( \U(B) \setminus X \right)$ for $i = 1, \dotsc, r$.
	
	For simplicity, define $y_i = I(f)(t_i)$ for each $t_i$.
	We will construct $I'$ by applying an appropriate domain symmetry $\sigma$ to $I$.
	For each sort $S$ that is not $B$, select $\sigma_S$ to be the identity on $\U(S)$.
	In particular, choose $\sigma_{A_i}$ to be the identity on $\U(A_i)$ for each $A_i$.
	Now for whichever $\sigma_B$ is chosen, we will have $I'(f)(t_i) = \sigma_B(y_i)$ for each $i$.
	Additionally, by definition of value-interchangeability, if $\sigma_B$ acts as the identity on $\U(B) \setminus X$, then $\sigma$ is a domain symmetry.
	Therefore, it suffices to construct $\sigma_B$ such that
	\begin{enumerate}
		\item $\sigma_B$ acts as the identity on $\U(B) \setminus X$, and
		\item for each $i = 1, \dotsc, r$, $\sigma_B(y_i) \in \set{b_1, \dotsc, b_i} \cup \left( \U(B) \setminus X\right)$.
	\end{enumerate}
	
	Define $\sigma_B$ such that it is the union of identity on $\U(B) \setminus X$ and an as of yet undetermined permutation $\pi$ on $X$.
	This choice satisfies the first requirement, and satisfies the second requirement for any $i$ such that $y_i \in \U(B) \setminus X$.
	Therefore we need only construct a permutation $\pi: X \to X$ such that, for any $i$ where $y_i \in X$, we have $\sigma(y_i) \in \set{b_1, \dotsc, b_i}$.
	
	Construct a subsequence $(z_1, \dotsc, z_l)$ of $(y_1, \dotsc, y_m)$ by removing all values $y_i$ such that $y_i \notin X$, and keeping the same relative ordering.
	By the construction of the subsequence, it suffices to construct a permutation $\pi: X \to X$ such that $\sigma(z_i) \in \set{b_1, \dotsc, b_i}$ for $i = 1, \dotsc, l$.
	The existence of such a $\pi$ follows from Lemma \ref{lemma-nice-permutation0}.
\end{proof}

\predextended*

\begin{proof}
	An interpretation $I'$ satisfies the required predicate membership constraints if and only if $I'(Q) \cap X$ is one of the following sets: $\emptyset, \set{d_1}, \set{d_1, d_2}, \dotsc, \set{d_1, d_2, \dotsc, d_m}$.
	
	Again we construct $I'$ by applying an appropriate domain symmetry $\sigma$ to $I$.
	Define $\sigma_S$ to be the identity permutation of $\U(S)$ for any sort $S \neq A$.
	Additionally, define $\sigma_A$ so that it is the union of the identity permutation on $\U(A) \setminus X$ and an as of yet to be determined permutation $\pi : X \to X$.
	By the definition of value-interchangeability, $\sigma$ is a domain symmetry.
	Write $I(Q) \cap X$ as $\set{y_1, y_2, \dotsc, y_k}$.
	It suffices to show that there exists a permutation $\pi: X \to X$ such that $\set{\pi(y_1), \dotsc, \pi(y_k)} = \set{d_1, d_2, \dotsc, d_k}$.
	We can easily construct such a permutation by defining $\pi(y_i) = d_i$ for $i = 1, \dotsc, k$.
\end{proof}

\section{Conclusions}
Let us summarize the accomplishments of this thesis.
We reviewed and unified the various notions of symmetry in constraint satisfaction and finite model finding.
Next, we used this framework to demonstrate how sorts serve as proofs of the existence of symmetries and that sort inference operates as a symmetry detection mechanism, bringing us to a fuller understanding of how sorts and symmetries relate.
We presented existing static symmetry breaking techniques used in FMF, providing proofs of correctness that are omitted in existing literature.
Following this we introduced a new symmetry breaking scheme for domain-range distinct functions, which exist only in the many-sorted setting, as we all a new scheme for symmetry breaking with unary predicates.
These new schemes have the potential to increase the effectiveness of static symmetry breaking and thereby improve solver performance, especially for systems that use sorts.
We additionally proved the soundness of both new schemes.
Hopefully our approach to proving the soundness of symmetry breaking constraints can serve as a guide for future authors looking to develop new symmetry breaking schemes so they need not retreat to appealing to intuition.
Finally, we proved conditions for when symmetry breaking constraints can be combined.

There are many avenues for future work.
We saw in Sections \ref{section-fmf} and \ref{section-sorts} that there are more symmetries of a finite model finding problem than just domain symmetries, and it remains to be seen how to exploit these other symmetries.
Also, while we proved conditions for when symmetry breaking constraints can be combined, it still needs to be investigated what combinations are most optimal at reducing symmetries.
In particular, it would be useful to develop some heuristics and collect empirical results.
Additionally, we aim to implement sort inference and these new symmetry breaking schemes in the Fortress model finder.

\bibliography{zotero-bib.bib,other-bib.bib}
\bibliographystyle{ieeetr}

\end{document}